\documentclass[12pt]{article}
\pdfoutput=1
\usepackage{jheppub}

\usepackage{amssymb}
\usepackage{color}

\usepackage{hyperref}
\usepackage{amsthm}
\usepackage{graphicx}
\usepackage{dcolumn}
\usepackage{bm}
\usepackage{amssymb}
\usepackage{verbatim}
\usepackage{amscd}
\usepackage{amssymb}
\usepackage{amsmath, amsfonts}
\usepackage{setspace}
\usepackage{amsthm}
\usepackage{enumerate}

\usepackage[caption=false]{subfig}

\newcommand{\tb}[1]{\textbf{#1}}

\theoremstyle{plain}
\newtheorem{theorem}{Theorem}
\theoremstyle{plain}
\newtheorem{lemma}{Lemma}
\theoremstyle{plain}
 
\theoremstyle{plain}

\theoremstyle{result}

\theoremstyle{remark}

\theoremstyle{conjecture}

\theoremstyle{observation}

\theoremstyle{definition}

\theoremstyle{corollary}

\theoremstyle{definition}

\theoremstyle{definition}
\newtheorem{definition}{Definition}
\theoremstyle{assumption}

\theoremstyle{definition}

\theoremstyle{problem}

\theoremstyle{fact}




\newcommand*{\cC}{\mathcal{C}}
\newcommand*{\cE}{\mathcal{E}}

\newcommand*{\cH}{\mathcal{H}}

\newcommand*{\cS}{\mathcal{S}}


\newcommand{\be}{\begin{equation}}
\newcommand{\ee}{\end{equation}}
\newcommand{\bfig}{\begin{figure}[htb!] \centering}
\newcommand{\efig}{\end{figure}}
\newcommand{\lan}{\langle}
\newcommand{\ran}{\rangle}
\newcommand{\HA}{\mathcal{H}_A}
\newcommand{\HB}{\mathcal{H}_B}

\newcommand{\CA}{\mathcal{C}[A]}
\newcommand{\EA}{\mathcal{E}[A]}

\begin{document}
\title{Holographic quantum error-correcting codes: Toy models for the bulk/boundary correspondence}
\author[a]{Fernando Pastawski,*}
\author[a]{Beni Yoshida*}
\author[b]{Daniel Harlow,} 
\author[a]{John Preskill,}
\affiliation[a]{Institute for Quantum Information \& Matter and Walter Burke Institute for Theoretical Physics, California Institute of Technology, Pasadena, California 91125, USA}
\affiliation[b]{Princeton Center for Theoretical Science, Princeton University, Princeton NJ 08540 USA}
\emailAdd{fernando.pastawski@gmail.com}
\emailAdd{rouge@caltech.edu}
\emailAdd{dharlow@princeton.edu}
\emailAdd{preskill@caltech.edu}

\affiliation{\emph{*These authors contributed equally to this work.}}
 \date{\currenttime \today}
\abstract{
We propose a family of exactly solvable toy models for the AdS/CFT correspondence based on a novel construction of quantum error-correcting codes with a tensor network structure. 
Our building block is a special type of tensor with maximal entanglement along any bipartition, which gives rise to an isometry from the bulk Hilbert space to the boundary Hilbert space. The entire tensor network is an encoder for a quantum error-correcting code, where the bulk and boundary degrees of freedom may be identified as logical and physical degrees of freedom respectively.
These models capture key features of entanglement in the AdS/CFT correspondence; in particular, the Ryu-Takayanagi formula and the negativity of tripartite information are obeyed exactly in many cases.
That bulk logical operators can be represented on multiple boundary regions mimics the Rindler-wedge reconstruction of boundary operators from bulk operators, realizing explicitly the quantum error-correcting features of AdS/CFT recently proposed in ~\cite{Almheiri14}. 
}

\maketitle

\section{Introduction}

The AdS/CFT correspondence, an exact duality between quantum gravity on a $(d{+}1)$-dimensional asymptotically-AdS space and a $d$-dimensional CFT defined on its boundary,
has significantly advanced our understanding of quantum gravity, as well as provided a powerful framework for studying strongly-coupled quantum field theories.  One aspect of this duality is a remarkable relationship between geometry and entanglement.  This notion first appeared in the proposal \cite{Maldacena03} that two entangled CFT's have a bulk dual connecting them through a wormhole, and was later quantified by Ryu and Takayanagi via their proposal that entanglement entropy in the CFT is computed by the area of a certain minimal surface in the bulk geometry \cite{Ryu06, Ryu06b}. This latter proposal, known as the Ryu-Takayanagi (RT) formula, has led to much further work on sharpening the connection between geometry and entanglement \cite{Hubeny07, Headrick07,Raamsdonk09, Raamsdonk10, Hayden13b, Lewkowycz13,Maldacena13, Lashkari14}

In the condensed matter physics community, improved understanding of quantum entanglement has led to significant progress in the numerical simulation of emergent phenomena in strongly-interacting systems. 
A key ingredient of such algorithms is the use of tensor networks to efficiently represent quantum many-body states~\cite{Vidal03b, Verstraete04b, Verstraete08}. 
Vidal combined this idea with entanglement renormalization to formulate the Multiscale Entanglement Renormalization Ansatz (MERA)~\cite{Vidal07, Vidal08}, a family of tensor networks that efficiently approximate wave functions with long-range entanglement of the type exhibited by ground states of local scale-invariant Hamiltonians~\cite{Evenbly09, Evenbly09b, Evenbly10}. 
The key idea is to represent entanglement at different length scales using tensors in a hierarchical array.

In the AdS/CFT correspondence, the emergent radial direction can be regarded as a renormalization scale \cite{Susskind1998}, and spatial slices have a hyperbolic geometry resembling the exponentially growing tensor networks of MERA.  This similarity between AdS/CFT and MERA was pointed out by Swingle, who argued that some physics of the AdS/CFT correspondence can be modeled by a MERA-like tensor network where quantum entanglement in the boundary theory is regarded as a building block for the emergent bulk geometry~\cite{Swingle12, Swingle12b}.

Recently it has been argued in ~\cite{Almheiri14} that the emergence of bulk locality in AdS/CFT can be usefully characterized in the language of quantum error-correcting codes.  
Certain paradoxical features of the correspondence arise naturally by interpreting bulk local operators as logical operators on certain subspaces of states in the CFT, whose entanglement structure protects these operators from boundary erasures.  Moreover, inspired by \cite{Swingle12, Swingle12b}, it was suggested that there should be tensor network models that concretely implement these ideas.  

In this paper, we propose such a family of exactly solvable toy models of the bulk/boundary correspondence based on a novel tensor-network construction of quantum error-correcting codes. Other authors have recently used holographic ideas~\cite{Yoshida2013,Latorre2015} and related tensor network constructions~\cite{Ferris14,Bacon14} to build quantum codes with interesting properties or toy models of the bulk/boundary correspondence~\cite{Qi13}, but our approach differs from previous work by combining the following properties, all of which are desirable for a model of AdS/CFT: 

\begin{itemize}
\item \tb{Exactly solvable:} Many of the properties of our models can be shown explicitly.  In particular, an exact prescription for mapping bulk operators to boundary operators can be obtained, and we can give examples where the Ryu-Takayanagi formula holds exactly for all connected boundary regions.
 
\item \tb{QECC:} Our models are quantum error-correcting codes, where the bulk/boundary legs of the tensor network correspond to input/outputs of an encoding quantum circuit. In this sense they realize explicitly the proposal of \cite{Almheiri14}. 

\item \tb{Bulk uniformity:} The tensor network is supported on a uniform tiling of a hyperbolic space, known as a hyperbolic tessellation. If the tiling is extended to an infinite system, the tensor network has no inherent directionality and all the locations in the bulk can be treated on an equal footing (see Fig.~\ref{fig:HolographicPentagonCode}). 
\end{itemize}

The rest of this paper is organized as follows: In section~\ref{sec:perfect}, we introduce a class of tensors called perfect tensors, which are associated with pure quantum states of many spins such that the entanglement is maximal across any partition of the spins into two sets of equal size. In section~\ref{sec:model}, we construct holographic states and codes by building networks of perfect tensors. These codes have properties reminiscent of the AdS/CFT correspondence, elucidated in the rest of the paper, where the code's logical/physical degrees of freedom are interpreted as the bulk/boundary degrees of freedom of a CFT with a gravitational dual. 

In section~\ref{sec:state}, we study the entanglement structure of holographic states, showing that the Ryu-Takayanagi formula is exactly satisfied for any connected boundary region, developing a graphical representation of multipartite entanglement, and confirming the negativity of tripartite information~\cite{Hayden13b}. In section~\ref{sec:code}, we investigate the dictionary relating bulk and boundary observables, define a lattice version of the causal wedge, and explain how bulk local operators in the causal wedge can be reconstructed on the boundary; we also define a lattice version of the entanglement wedge, and offer evidence supporting the  entanglement wedge hypothesis proposed in ~\cite{Headrick2014, Wall2012, Czech2012}, see also \cite{Jafferis2014}. We briefly discuss how to describe black holes using holographic codes in section~\ref{sec:black}. Section~\ref{sec:conclude} contains our conclusions, and many details appear in the appendices. 

\section{Isometries and perfect tensors}\label{sec:perfect}
In this section we review some tools which will be used in our constructions of holographic states and codes. 
We begin with a standard definition:
\begin{definition}
Say $\HA$ and $\HB$ are two Hilbert spaces, not necessarily of the same dimensionality.  An \textbf{isometry} from $\HA$ to $\HB$ is a linear map $T:\HA\mapsto\HB$ with the property that it preserves the inner product.
\end{definition}
If $\HA$ and $\HB$ have finite dimensionality, as we will assume throughout this paper, then it immediately follows that such a $T$ can exist only if their dimensionalities $\textrm{dim}(A)$ and $\textrm{dim}(B)$ obey $\textrm{dim}(A)\leq \textrm{dim}(B)$.  In the special case where $\textrm{dim}(A) = \textrm{dim}(B)$, $T$ is just a unitary transformation.  Clearly the composition of two isometries is also an isometry.

If $T:\HA\mapsto\HB$ is an isometry, then $T^\dagger T$ is the identity on $\HA$ and $TT^\dagger$ is a projector mapping $\HB$ to the range of $T$. We may represent the map $T$ as a two-index tensor acting as
\be
T:|a\ran \mapsto \sum_b |b\ran T_{ba},
\ee
where $\{|a\ran\}$ denotes a complete orthonormal basis for $\HA$ and $\{|b\ran\}$ for $\HB$.  Then $T$ is an isometry if and only if
\be\label{iso}
\sum_b T^{\dagger}_{a'b}T_{ba}=\delta_{a'a}.
\ee
We represent this graphically in figure \ref{isofig}, following the convention that operators are ordered from left to right, so that in the figure $T^\dagger$ is applied after $T$. We will call a tensor obeying \eqref{iso} an \textit{isometric tensor}.  

\bfig
\includegraphics[height=1.5cm]{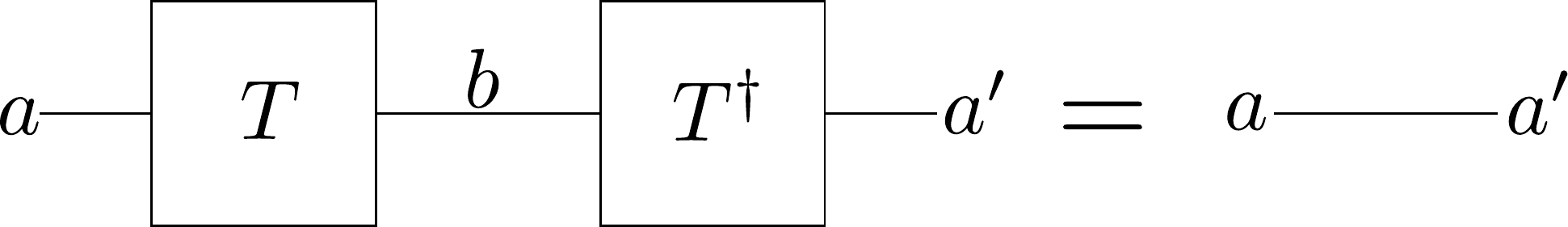}
\caption{Diagrammatic tensor notation, here showing that $T$ is an isometry.}\label{isofig}
\efig

Isometric tensors have the property that any operator $O$ acting on its ``incoming'' leg, can be replaced by an equal norm operator $O'$ acting on its ``outgoing'' leg, because
\be
TO = TOT^\dagger T = (TOT^\dagger)T \equiv O'T;
\ee
we illustrate this property in figure \ref{push}.
\bfig
\includegraphics[height=3cm]{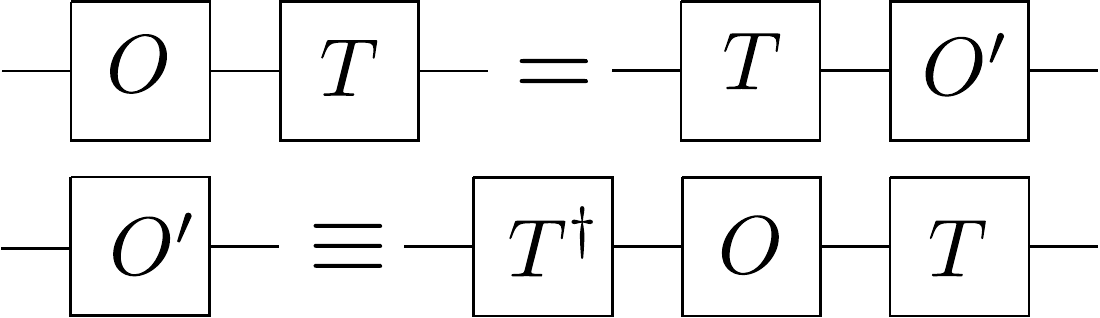}
\caption{Operator pushing through an isometric tensor.}\label{push}
\efig
This operation is essential for what follows, and we will often describe it as ``pushing an operator through a tensor''.  It is also easy to check a useful converse of operator pushing: If the two-index tensor $T$ has the property that any \textit{unitary} transformation $U$ contracted with its incoming index can be replaced by a corresponding \textit{unitary} transformation $U'$ contracted with its outgoing index (\textit{i.e.}, $TU = U'T$), then $T$ obeys \eqref{iso} up to a scalar factor, and therefore must be proportional to an isometric tensor.  

\bfig
\includegraphics[height=1.5cm]{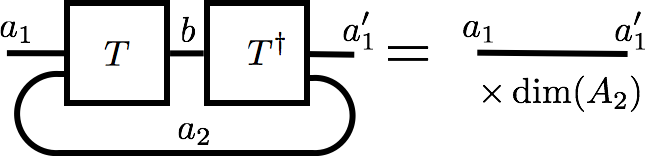}
\caption{If $\HA=\mathcal{H}_{A_2}\otimes \mathcal{H}_{A_1}$, then we can move one of the factors to the output while preserving the isometric structure.}\label{shrink}
\efig 
Another important property of isometric tensors is that if the input Hilbert space factorizes, we may reinterpret an input factor as an output factor while preserving \eqref{iso}, up to an overall rescaling.  
That is, if $T: \cH_{A_2}\otimes \cH_{A_1}\mapsto \cH_B $ is an isometric map, acting on a basis according to
\begin{align}
T:|a_2 a_1\ran \mapsto \sum_b |b\ran T_{ba_2a_1},
\end{align}
then $\tilde T:\mathcal{H}_{A_1}\mapsto\HB \otimes \mathcal{H}_{A_2}$ acting as
\begin{align}
\tilde T:|a_1\ran \mapsto \sum_{ba_2} |ba_2\ran T_{ba_2a_1}
\end{align}
obeys $\tilde T^\dagger \tilde T = \textrm{dim}(A_2) I_{A_1}$. We illustrate this property in figure \ref{shrink}.  

In this paper we will be interested in a special class of isometric tensors, which we will call perfect tensors. To formulate the concept of a perfect tensor, first note that we may divide the $m$ indices of a tensor $T_{a_1 a_2 \ldots a_m}$ into a set $A$ and a complementary set $A^c$. We use $|A|$ to denote the cardinality of the set $A$; hence $|A|+|A^c| = m$. Then $T$ may be regarded as a linear map from the span of the indices in $A$ to the span of the indices in $A^c$. We will usually assume that each index ranges over $v$ values, and we will use $A$ to denote both the set of $|A|$ indices and the corresponding vector space with dimension $v^{|A|}$; thus we say $T$ maps $A$ to $A^c$.
\begin{definition}
A $2n$-index tensor $T_{a_1a_2\ldots a_{2n}}$ is a \textbf{perfect tensor} if, for any bipartition of its indices into a set $A$ and complementary set $A^c$ with $|A|\leq |A^c|$, $T$ is proportional to an isometric tensor from $A$ to $A^c$.
\end{definition}
It is not obvious that nontrivial perfect tensors exist, but they do! Note that for $T$ to be perfect it suffices for $T$ to be a unitary transformation when $|A|=|A^c| = n$; in that case the property illustrated in figure \ref{shrink} ensures that $T$ is proportional to an isometric tensor for $|A|< n$. 
In Appendix \ref{App:PerfectTensorExamples} we describe perfect tensors explicitly for the case $n=3$, $v=2$ and for the case $n=2$, $v=3$; other cases with larger $n$ and $v$ are also discussed there. 
To keep our discussion concrete, we will focus especially on the six-index tensor for qubits ($v=2$)\footnote{This can be obtained from the encoding map of the 5-qubit code.}, but much of what we say applies to arbitrary $2n$-index perfect tensors. 

Perfect tensors are related to other notable ideas in quantum information theory.  In general, a tensor $T$ with $m$ indices, each ranging over $v$ values, describes a pure quantum state $|\psi\ran$ of $m$ $v$-dimensional spins, where, up to a normalization factor,
\be
|\psi\rangle = \sum_{a_1,a_2, \ldots, a_m} T_{a_1a_2 \ldots a_m} |a_1a_2 \ldots a_m\ran. 
\ee
A perfect tensor describes a pure state of $2n$ spins with a special property ---  any set of $n$ spins is maximally entangled with the complementary set of $n$ spins. Such states have been called \textit{absolutely maximally entangled} (AME) states \cite{Helwig2012,Helwig2013}.  
Conversely any $AME$ state defines a perfect tensor. Regarded as a linear map from one spin to $2n-1$ spins, a perfect tensor is the isometric encoding map of a quantum error-correcting code which encodes a single logical spin in a block of $2n-1$ physical spins, where the logical spin is protected against the erasure of any $n-1$ physical spins. 
Because $n$ is more than half of all the physical spins, this is the best possible protection against erasure errors compatible with the no-cloning principle.  In coding terminology this code has \textit{distance} $n$ and is denoted $[[m,k,d]]_v=[[2n-1,1,n]]_v$, where $m$ is the number of physical spins in the code block, $k$ is the number of protected logical spins, and $d$ is the code distance. 
This code is also the basis for a quantum-secret-sharing scheme called a $((n,2n-1))$ threshold scheme \cite{Cleve1999}; code states have the property that a party holding any $n-1$ spins has no information about the logical spin, while a party holding any $n$ spins has complete information about the logical spin (because erasure of the remaining $n-1$ spins is correctable).

\section{Construction of holographic quantum states and codes}\label{sec:model}

We have seen how tensors can be interpreted as quantum states or quantum codes. In this section we construct tensor networks in which the fundamental building blocks are perfect tensors. Our tensor networks describe states which we call \textit{holographic states}, and codes which we call \textit{holographic codes}.  

We shall focus on examples based on tilings of two-dimensional hyperbolic space, which are specific realizations of uniform hyperbolic tilings known as hyperbolic tessellations. These tilings have desirable symmetries for constructing a toy model of the AdS/CFT correspondence. In particular they are discretely scale-invariant, and there exist graph isomorphisms that bring any point in the graph to the center while preserving the local structure of the tiling.\footnote{Such transformations can be directly visualized using \emph{Kaleidotile} software~\cite{Kaleidotile}, which is freely available and has been of great aid in developing geometric intuition and producing figures of uniform hyperbolic tilings in this paper.} The machinery we develop may also be straightforwardly applied to non-uniform and higher-dimensional graphs. 

\begin{figure}
\centering
    \subfloat[Holographic hexagon state]{ 
  \includegraphics[width=0.4\linewidth]{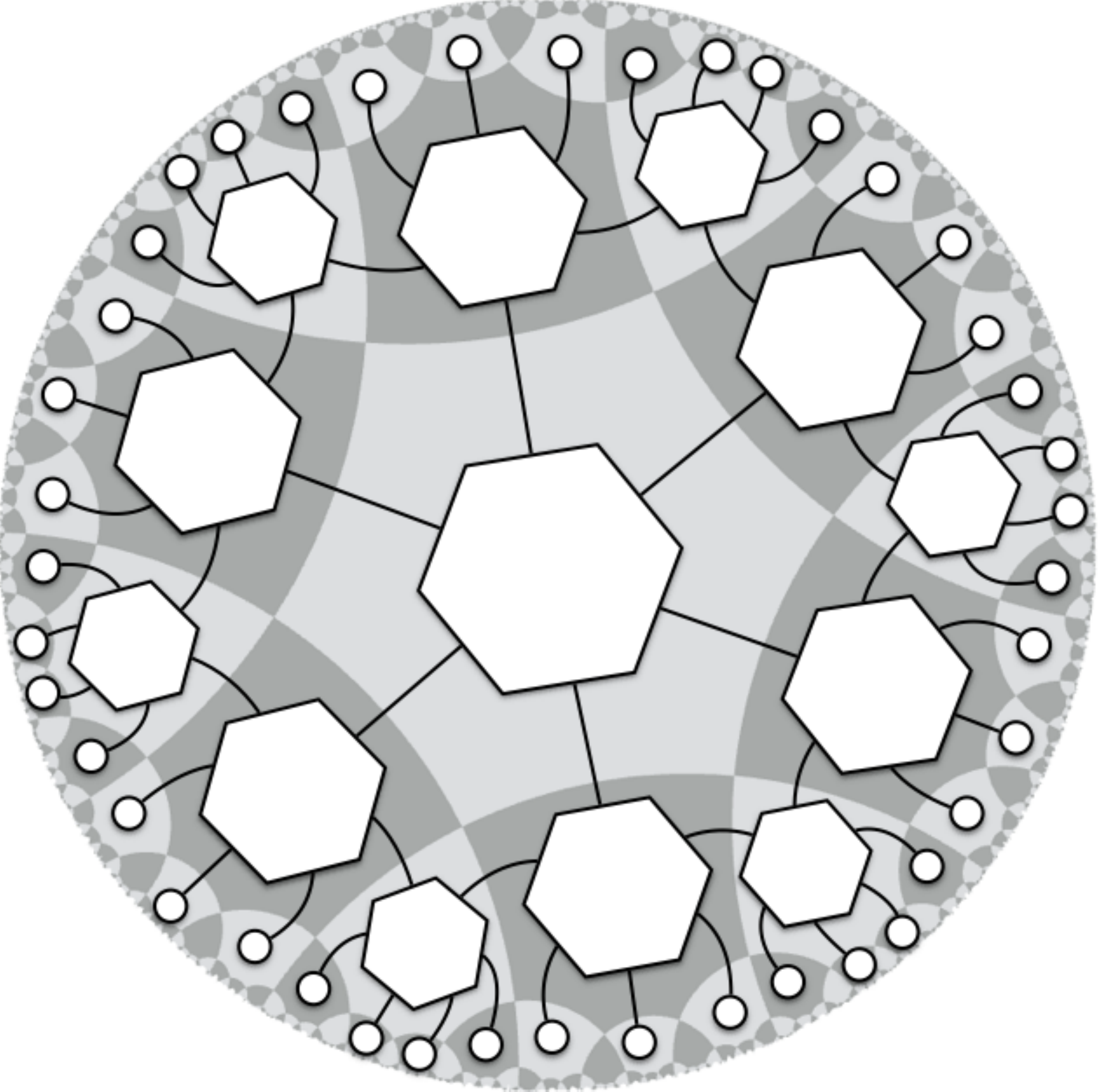}
   \label{fig:HolographicHexagonState}}
\hspace{1cm} 
    \subfloat[Holographic pentagon code]{ 
\includegraphics[width=0.4\linewidth]{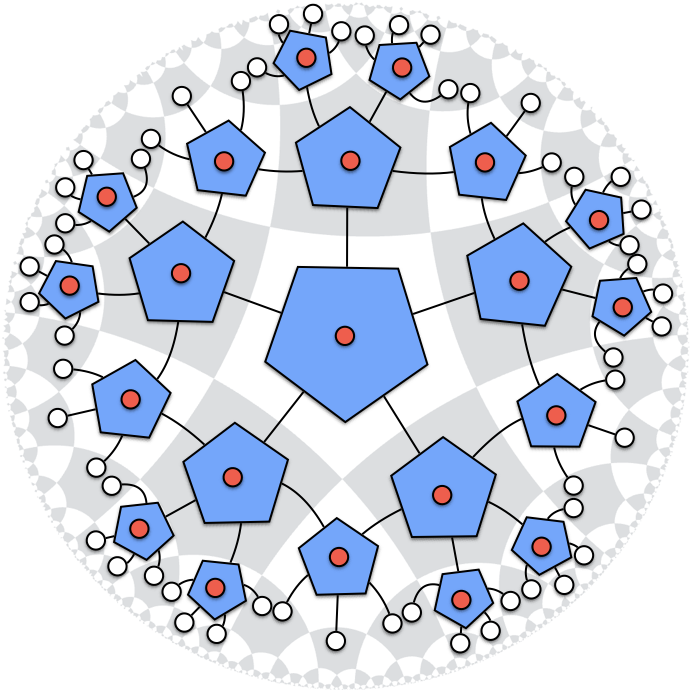}
   \label{fig:HolographicPentagonCode}}
\caption{White dots represent physical legs on the boundary. 
   Red dots represent logical input legs associated to each perfect tensor. 
}\label{fig:HolographicStateAndCode}
\end{figure}
Let's first consider a uniform tiling of a two-dimensional hyperbolic space by hexagons, with four hexagons adjacent at each vertex, as depicted in Fig~\ref{fig:HolographicHexagonState}. 
A perfect tensor with six legs is placed at each hexagon, and legs of perfect tensors are contracted with neighboring tensors at shared edges of the hexagons. 
We associate physical spins with the uncontracted open tensor legs on the boundary of the hyperbolic tiling; the tensor network corresponds to a pure state of these boundary spins, which we call a \emph{holographic state}. 
Note that perfect tensors are not necessarily symmetric under all the possible permutations of tensor legs, and thus we specify some particular ordering of tensor legs in the construction.  

We may similarly attach a state interpretation to more general networks constructed by contracting perfect tensors: 

\begin{definition}
Consider a tensor network composed of perfect tensors which cover some geometric manifold with boundary, where all the interior tensor legs are contracted. 
A \textbf{holographic state} is a state interpretation of such a tensor network, where physical degrees of freedom are associated with all uncontracted legs at the boundary of the manifold. 
\end{definition}

We now provide an example of a holographic quantum code.
 As in a holographic state, we consider a uniform tiling of the hyperbolic disc, this time by pentagons, with four pentagons adjacent at each vertex. 
A perfect tensor with six legs is placed at each pentagon, so that each tensor has one additional uncontracted open leg. 
This additional tensor leg is interpreted as a bulk index or logical input for the tensor network (see Fig.~\ref{fig:HolographicPentagonCode}). 
The entire system can be viewed as a big tensor with logical legs in the bulk and physical legs on the boundary. We then have the following theorem:
\begin{theorem}
The pentagon-tiling tensor network is an isometric tensor from the bulk to the boundary. 
We call it the holographic pentagon code.
\end{theorem}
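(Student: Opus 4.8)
The plan is to realize the whole pentagon network as an \emph{ordered composition} of its constituent perfect tensors and to control, tile by tile, how many legs each new tensor is glued along. First I would record the elementary consequence of the definitions that a $2n$-leg perfect tensor is proportional to an isometry from \emph{any} set of at most $n$ of its legs to the complementary set: for $|A|=n$ this is the unitarity on a balanced bipartition built into the definition, and for $|A|<n$ it follows by reinterpreting $n-|A|$ of the input legs as output legs, exactly the rescaling move of figure~\ref{shrink}. For our six-leg ($n=3$) tensors this says: isometry from any $\le 3$ legs to the remaining $\ge 3$. I would also use that an isometry tensored with an identity is an isometry and that a composition of isometries is an isometry, so that the only thing lost along the way is an overall positive scalar, which can be divided out at the end.

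With these in hand the construction is inductive. Enumerate the pentagons as $T_1,\dots,T_N$, let $C_i=T_1\cup\cdots\cup T_i$, and note that each tile has five ``edge'' legs (shared with neighbouring pentagons) and one ``bulk'' leg. When $T_i$ is attached to $C_{i-1}$, let $e_i$ be the number of its edge legs contracted with $C_{i-1}$ (its \emph{inward} legs); the other $5-e_i$ edge legs join $\partial C_i$ and its bulk leg is a fresh logical input. Viewing $T_i$ as a map from (its $e_i$ inward legs) $\otimes$ (its bulk leg) to (its $5-e_i$ outward legs), it is proportional to an isometry exactly when $1+e_i\le 5-e_i$, i.e.\ when $e_i\le 2$ (the $e_i=2$ case using the balanced-bipartition unitarity). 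Assuming the tiles can be ordered so that $e_i\le 2$ for all $i$, the passage from the $C_{i-1}$-isometry to the $C_i$-isometry is the composition of the former (tensored with the identity on $T_i$'s bulk leg) with $T_i$ (tensored with the identity on the untouched part of $\partial C_{i-1}$); both factors are isometries, so by induction $C_N$ is proportional to an isometry from the entire bulk Hilbert space to the boundary Hilbert space, and rescaling finishes the proof. In particular $T^\dagger T=I$ on the bulk, so the network is a genuine encoder---the ``holographic pentagon code.''

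The real content, and the step I expect to be the main obstacle, is therefore the combinatorial lemma that the pentagons of the (truncated) $\{5,4\}$ tessellation admit an ordering with $e_i\le 2$ throughout. I would prove this by growing the network outward in dual-distance layers $L_0,L_1,\dots$ from a central tile, completing each layer before starting the next. One still has to order the tiles \emph{within} a layer: the tiles of $L_j$ that fill the corners left open at the vertices of $L_{j-1}$ touch two tiles of $L_{j-1}$, so those are added first (at that moment they have exactly two inward legs and no already-placed neighbour in their own layer), and the remaining tiles of $L_j$ touch only one tile of $L_{j-1}$ and pick up at most one further inward leg from an already-placed corner tile. Checking that these adjacency counts are exactly as stated---corner tiles pairwise non-adjacent, non-corner tiles meeting exactly one lower-layer tile and at most one earlier same-layer tile---is a local verification on the $\{5,4\}$ tiling, made global by the self-similarity of the tessellation, and it is precisely here that the curvature matters: with four (and not more) pentagons around each vertex, consecutive tiles of a layer are spread far enough apart that none is ever forced to close up against a third already-present neighbour, while simple-connectedness of the truncated disc rules out any global obstruction to the layered sweep. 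The same layered decomposition, incidentally, is what will later let us push bulk operators out to the boundary.
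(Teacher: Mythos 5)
Your proposal is correct and follows essentially the same route as the paper's own argument: order the tensors into layers by graph distance from the center, observe that negative curvature forces each tensor to have at most two legs contracted with previously placed tensors (hence at most three inputs counting the bulk leg), and compose the resulting isometries. You supply more detail than the paper does on the within-layer ordering and the combinatorial verification that $e_i\le 2$, which is a welcome tightening of a step the paper asserts in one parenthetical sentence.
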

We can prove this theorem by noting that if we order the tensors into layers labeled by increasing graph distance from the center, each tensor has at most two legs contracted with the tensors at the previous layer (this property is a consequence of the ``negative curvature'' of the graph).  Therefore, even if we regard the pentagon's bulk logical index as an input leg, the total number of input legs is at most three, and we may therefore regard each tensor as an isometry from input legs to output legs. 
Applying the perfect tensors layer by layer, and recalling that the product of isometries is an isometry, we obtain an isometry mapping all the logical indices in the bulk to the physical indices on the boundary.

We can view this isometry as the encoding transformation of a quantum error-correcting code, which we call a \emph{holographic code}. 
The number of logical $v$-dimensional spins is the number $N_{\rm bulk}$ of pentagons in the tiling, and the number of physical $v$-dimensional spins in the code block is the number $N_{\rm boundary}$ of uncontracted boundary indices in the tensor network. 
We show in Appendix \ref{App:CountingTensors} that the rate of the code, meaning the ratio of the number of logical spins to the number of physical spins, approaches
\begin{align}\label{eq:RatioCount}
\frac{N_{\rm bulk}}{N_{\rm boundary}}\to \frac{1}{\sqrt{5}}\approx .447
\end{align}
in the limit of a large number of layers.

This pentagon code was constructed by successively adding layers of tensors starting from the center and stopping after repeating this procedure a certain number of times (two layers in figure \ref{fig:HolographicPentagonCode}). 
 Alternatively, we may fill the bulk using a non-uniform cutoff, so that the graph distance between the ``center'' and the boundary varies from one portion of the boundary to another (as occurs in figure \ref{fig:HolographicHexagonState}). 
By exerting this freedom, we may change the corresponding value \ref{eq:RatioCount}  for the rate of the code and even slightly increase it.
 By varying the choice of perfect tensor and the shape of the cutoff, a large family of holographic codes can be constructed:

\begin{definition}
Consider a tensor network composed of perfect tensors which cover some geometric manifold with boundaries. The tensor network is called a \textbf{holographic code} if it gives rise to an isometric map from uncontracted bulk legs to uncontracted boundary legs.
\end{definition}

Tensor networks with open legs in the bulk were first proposed by Vidal~\cite{Vidal08}. 
More recently, Qi~\cite{Qi13} constructed a tensor-tree model with an exact unitary mapping between the bulk and the boundary. 
The most important difference between their models and ours is that their states are not protected against erasure of physical spins because the code rate is asymptotically unity. 
In addition our models are more symmetric; since perfect tensors can be interpreted as isometries along any direction, our models have no preferred direction in the bulk and all bulk sites are treated equally.
In particular, the pentagon code has the nice feature that, because the 6-leg perfect tensor we construct in appendix \ref{App:PerfectTensorExamples} is symmetric under cyclic permutations of five of the legs, which we take to be the contracted legs, the symmetry of the network is just the full symmetry of the graph.

\section{Entanglement structure of holographic states}\label{sec:state}
In this section we explore to what extent holographic states reproduce key properties of the AdS/CFT correspondence, such as the Ryu-Takayanagi formula for entropy of a boundary region \cite{Ryu06} and the negativity of tripartite information \cite{Hayden13b}.  

\subsection{Ryu-Takayanagi formula}
The Ryu-Takayanagi (RT) formula says that for a CFT whose gravitational dual is well-approximated by Einstein gravity at low energies, in any static state with a geometric bulk description the entropy $S_A$ of a boundary subregion $A$ at fixed time obeys
\be
S_A=\frac{\mathrm{Area}(\gamma_A)}{4G};
\ee
here $G$ is Newton's constant and $\gamma_A$ is the minimal-area codimension-two bulk surface whose boundary matches the boundary $\partial A$ of $A$.  
In our examples the bulk theory is $2+1$ dimensional, so $\gamma_A$ will be a spacelike bulk geodesic whose ``area'' is just defined as its length.  

In our discrete setting, we will define $\gamma_A$ as a certain \textit{cut} through the tensor network which partitions it into two disjoint sets of perfect tensors. 
Associated with a cut $c$ is a decomposition of the tensor network as a contraction of two tensors $P$ and $Q$, where the contracted legs lie along the cut; the number of contracted legs is called the \textit{length} of $c$, denoted $|c|$. If $A$ is a set of boundary legs and $A^c$ is the complementary set of boundary legs, then we say that the boundary of the cut $c$ matches the boundary of $A$ if the uncontracted legs of $P$ are the legs of $A$, and the uncontracted legs of $Q$ are the legs of $A^c$. 
The \textit{minimal bulk geodesic bounded by $A$}, $\gamma_A$, is then defined as the cut $c$ of shortest length whose boundary matches the boundary of $A$. 
We use $P$ to denote, not just the tensor associated with one side of the cut, but also the set of bulk lattice sites corresponding to the perfect tensors which are contracted to construct $P$; likewise for $Q$. 
We note that $P$ or $Q$ might have more than one connected component, and so might $\gamma_A$ when regarded as a path in the dual graph.

A standard argument for tensor network representations of quantum states shows that $|\gamma_A|$ provides an \textit{upper bound} on $S_A$.
If $P$ and $Q$ are the tensors associated with a cut $c$ whose boundary matches the boundary of $A$, then the holographic state $|\psi\ran$ may be expressed (up to normalization) as 
\be\label{eq:RQ-schmidt}
|\psi\ran=\sum_{a,b,i}|ab\ran P_{ai}Q_{bi} \equiv \sum_i |P_i\rangle_A\otimes |Q_i\rangle_{A^c}.
\ee
Here $a$ and $b$ run over complete bases for $A$ and $A^c$ respectively, and $i$ runs over all possible values of the indices contracted along $c$; the vectors $\{|P_i\rangle\}$ in $\HA$ and the vectors $\{|Q_i\rangle\}$ in $\mathcal{H}_{A^c}$ are not necessarily orthogonal or normalized. (See figure \ref{RQfig}.)
\bfig
\includegraphics[width=0.8\linewidth]{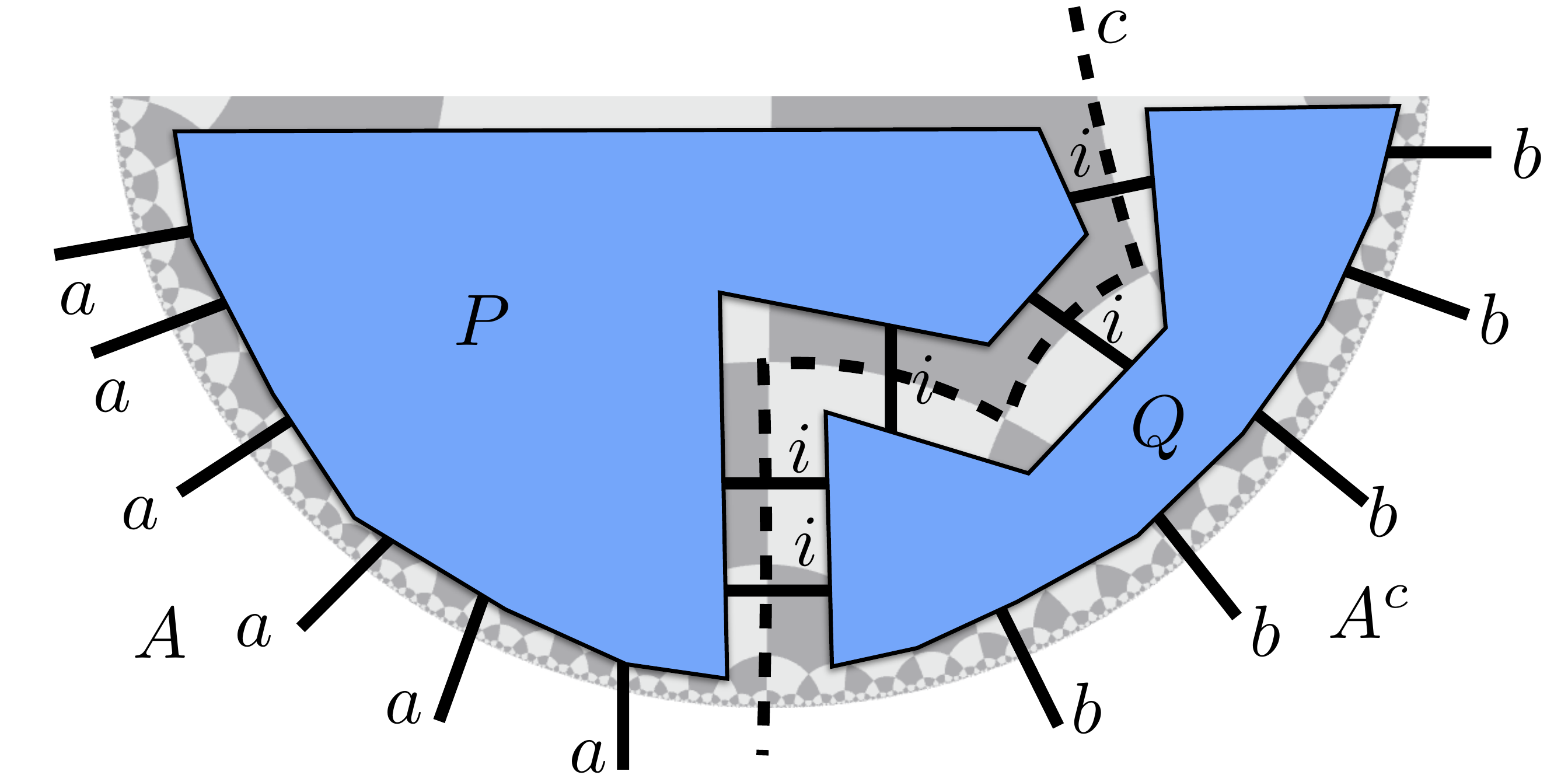}
\caption{A cut through a holographic tensor network by a curve $c$ bounded by $\partial A$. 
Boundary indices $a$ and $b$ are uncontracted in $A$ and its complement $A^c$ respectively; tensors $P$ and $Q$ are contracted by summing over the index $i$ which is cut by $c$. }\label{RQfig}
\efig
Tracing out $A^c$ we obtain (up to normalization) the density operator on $A$:
\be\label{rhoA}
\rho_A=\sum_{i,i'} \langle Q_{i'}|Q_i\rangle |P_i\rangle \langle P_{i'}|.
\ee
Evidently the rank of $\rho_A$ is at most the number of terms in the sum over $i$, namely $v^{|c|}$.  The density operator of a given rank with maximal Von Neumann entropy is proportional to the identity on its support, and has entropy equal to log of the rank. We obtain the best bound by choosing the cut $c=\gamma_A$ with the shortest length:
\be\label{upperbound}
S_A\leq |\gamma_A| \cdot \log v.
\ee
In most of what follows, we will define entropy by taking logs with base $v$, and so suppress the $\log v$ factor. 

If the tensors $P$ and $Q$ are actually isometries from $i$ to $a$ and $b$ respectively, then $\{|P_i\rangle\}$ and $\{|Q_i\rangle\}$ are sets of orthonormal vectors; in that case  \eqref{upperbound} is saturated and a discrete analogue of the RT formula holds exactly. Under what conditions will $P$ and $Q$ be isometries? We can prove the following theorem:
\begin{theorem}\label{RTth}
Suppose that we have a holographic state associated to a simply-connected planar tensor network of perfect tensors, whose graph has ``non-positive curvature''.\footnote{The scalar curvature of a graph is somewhat tricky to define in general; the condition we really need here is that the distance functional from one point on the dual network to another does not have interior local maxima. 
}   Then for any connected region $A$ on the boundary, we have $S_A=|\gamma_A|$; in other words, the lattice RT formula holds.
\end{theorem}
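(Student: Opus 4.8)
The plan is to establish the reverse of the general bound \eqref{upperbound}. Fix once and for all a shortest cut $c^{*}$ bounded by $\partial A$, write $\gamma_A = c^{*}$, and use the decomposition \eqref{eq:RQ-schmidt}, $|\psi\rangle = \sum_i |P_i\rangle_A \otimes |Q_i\rangle_{A^c}$, together with $\rho_A$ from \eqref{rhoA}. Note that $c^{*}$ is simultaneously a shortest cut bounded by $\partial A^c$. As remarked just before the theorem, if the tensor $P$ (the contraction of the perfect tensors on the $A$-side, viewed as a map from the $|\gamma_A|$ cut legs to the boundary legs of $A$) is proportional to an isometry, then $\{|P_i\rangle\}$ is a rescaled orthonormal set; and if in addition the analogous map $Q$ on the $A^c$-side is proportional to an isometry, then $\{|Q_i\rangle\}$ is rescaled orthonormal too, so \eqref{rhoA} shows $\rho_A$ is proportional to a projector of rank $v^{|\gamma_A|}$ and hence $S_A = |\gamma_A|$. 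Since $A^c$ is also a connected boundary region, it suffices to prove the single statement: $P$ is proportional to an isometry.

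To prove this, I would produce an ordering $T_1,\dots,T_k$ of the perfect tensors composing $P$ such that, for every $j$, at most $n$ of the $2n$ legs of $T_j$ either lie on the cut $\gamma_A$ or are contracted with some $T_{j'}$ with $j' < j$. Given such an ordering, each $T_j$ is, by the defining property of a perfect tensor (together with the reinterpretation move of figure \ref{shrink} when the count is strictly below $n$), proportional to an isometry from these ``incoming'' legs to its remaining legs, which are either contracted with later tensors or are boundary legs of $A$. Composing these maps in order, and using that a product of isometries is an isometry, $P$ is proportional to an isometry from the cut legs to the $A$-legs, as required.

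The existence of a suitable ordering I would reduce to a combinatorial fact about the planar dual graph restricted to $P$. Directing each internal edge of $P$ from the earlier to the later tensor, the ordering becomes an acyclic orientation in which every tensor $v$ has in-degree at most $n - c_v$, where $c_v$ is the number of legs of $v$ on $\gamma_A$; conversely any orientation with these in-degree bounds can be made acyclic by reversing directed cycles and then read off as an ordering. By the standard orientation criterion --- a multigraph admits an orientation with in-degree bounded by $f(v)$ iff every vertex set $U$ spans at most $\sum_{v \in U} f(v)$ edges --- and a short count of the $2n|U|$ leg-endpoints incident to a set $U$ of $P$-tensors, the existence of the orientation is equivalent to: for every subset $U$ of the tensors of $P$, the number of cut legs incident to $U$ is at most the number of edges from $U$ to $P\setminus U$ plus the number of $A$-boundary legs incident to $U$. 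This inequality is additive over the connected components of $U$, so it is enough to prove it for connected $U$. If such a $U$ does not touch the arc $A$, it follows at once from minimality: reassigning $U$ to the $Q$-side produces another cut bounded by $\partial A$, of length $|\gamma_A| - (\text{cut legs at } U) + (\text{edges from } U \text{ to } P\setminus U)$, which cannot be shorter than $|\gamma_A|$; in particular the singleton case $U=\{v\}$ gives $c_v \le n$ for every interior tensor.

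The main obstacle is exactly the remaining case: a connected $U$ that meets the boundary arc $A$ (this already includes single tensors adjacent to both $\gamma_A$ and $A$). Here the naive surgery fails, because reassigning $U$ to the $Q$-side changes which boundary legs are uncontracted and so no longer yields a cut bounded by $\partial A$ --- minimality alone does not control the configuration. This is precisely where ``non-positive curvature'' must enter: it is designed to forbid $\gamma_A$ from making a concave indentation around any such $U$, and in the form stated in the footnote (the distance functional on the dual network has no interior local maxima) it should let one sweep $P$ from $\gamma_A$ outward to $A$ without ever overloading a tensor. I expect the delicate points to be handling ties among tensors at equal distance from $\gamma_A$, and dealing carefully with tensors simultaneously adjacent to $\gamma_A$ and to the boundary arc; the hypotheses that the network is simply connected and planar are what guarantee $\gamma_A$ is a single simple curve and $P$ a disc-like region, so that these sweeps are well defined.
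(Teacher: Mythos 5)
Your opening reduction---show that $P$ and $Q$ are proportional to isometries from the cut to $A$ and to $A^c$, and obtain this by ordering the perfect tensors so that each has at most $n$ ``incoming'' legs---is exactly the reduction the paper makes. But your combinatorial implementation has two genuine gaps, and they sit precisely where the paper's argument does its real work. First, the passage from a degree-constrained orientation to an \emph{acyclic} one is wrong: reversing a directed cycle leaves every in-degree unchanged, so cycle reversal can never produce an acyclic orientation when none exists with the given bounds. (Three mutually contracted tensors, each permitted in-degree $1$, satisfy the orientation criterion you quote---every $U$ spans at most $\sum_{v\in U} f(v)$ edges---yet admit no acyclic orientation respecting those bounds.) What your ordering actually requires is the strictly stronger degeneracy-type condition that \emph{every} subset of $P$ contain a tensor with at most $f(v)$ legs going to the cut and to the rest of that subset, and the orientation criterion does not deliver it. Second, you explicitly leave open the case of a connected $U$ meeting the boundary arc $A$, which you correctly identify as the place where non-positive curvature must enter; but this is not a loose end to be smoothed over---it is the heart of the theorem, and without it nothing is proved.

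The paper (appendix \ref{app:PlanarGraphProof}) avoids both problems by constructing the orientation globally and geometrically rather than by an abstract existence argument: label each node of the planar dual graph by its graph distance from one endpoint of $\gamma_A$, and orient every tensor leg so that the larger label lies to its right. Because each tensor has an even number of legs the dual graph is bipartite, so adjacent labels differ by exactly one and the orientation is well defined; a telescoping sum of label differences around any tensor shows it has exactly $n$ incoming and $n$ outgoing legs, which handles interior and boundary-adjacent tensors uniformly so that no case split on whether a subset touches $A$ is ever needed; acyclicity holds because a directed cycle would force an interior local extremum of the distance function, which is exactly what the non-positive-curvature hypothesis forbids; and minimality of $\gamma_A$ makes the labels strictly increase along the cut, so the cut is an ``equal-time slice'' of the resulting unitary circuit. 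The move of figure \ref{shrink} then converts the unitary $P$ into (a multiple of) an isometry from the cut to $A$, completing the argument as in your first paragraph. If you want to salvage your route, you would need to derive the degeneracy condition for boundary-touching sets directly from the no-interior-local-maxima hypothesis---at which point you will essentially have rediscovered the distance labeling.
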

The strategy of the proof is to show that $P$ and $Q$ can in fact be interpreted as \textit{unitary} transformations, from the cut together with some subregion of $A$ or $A^c$ to the rest of $A$ or $A^c$ respectively.  
We can then use the identity depicted in figure \ref{shrink} to re-interpret these transformations as isometries from the cut to $A$ and from the cut to $A^c$ respectively; the RT formula follows.  
The key to the argument, explained in appendix  \ref{app:PlanarGraphProof}, is using a strengthened version of the max-flow min-cut theorem (which is standard in graph theory~\cite{Papadimitriou1998}) to establish that the tensor network representations of $P$ and $Q$ can be interpreted as unitary quantum circuits.  

\subsection{Bipartite entanglement of disconnected regions}\label{subsec:multiple-regions}
Unfortunately the proof of Theorem \ref{RTth} does not directly generalize to a disconnected region $A$, nor even to connected regions for states, such as our holographic code states, where not all perfect tensor indices are contracted in the bulk.  We do not consider this to be a serious problem for our models. 
However, we still find it worthwhile to introduce some machinery that allows us to quantify this presumption somewhat. 

The first technique we will introduce is an algorithmic procedure for constructing, given a boundary region $A$, a bulk curve $\gamma^\star_A$ bounded by $\partial A$ such that the corresponding tensor $P$ is guaranteed to be an isometry. For a holographic state the isometry $P$ maps $\gamma^\star_A$ to $A$, and for a holographic code $P$ maps $\gamma^\star_A$ and all incoming bulk indices of $P$ to $A$. Furthermore, $\gamma^\star_A$ is a \textit{local} minimum of the length, in the sense that no single tensor can be added to or removed from $P$ which reduces the length of the cut. 

The algorithm makes essential use of the properties of perfect tensors and is quite simple. 
We consider a sequence of cuts $\{c_\alpha\}$ each bounded by $\partial A$, and a corresponding sequence of isometries $\{P_\alpha\}$, such that each cut in the sequence is obtained from the previous one by a local move on the bulk lattice.  
The sequence begins with the trivial cut, $A$ itself; in each step we identify one perfect tensor which has at least half of its legs contracted with $P_\alpha$ and construct $P_{\alpha +1}$ by adding this perfect tensor to $P_\alpha$. 
Thus $P_{\alpha+1}$ is obtained by composing $P_\alpha$ with an isometry defined by a perfect tensor, and therefore $P_{\alpha+1}$ is an isometry if $P_\alpha$ is. 
The procedure halts when the cut reaches $\gamma^\star_A$ and no further local moves are possible. Though many different sequences of local moves are allowed, $\gamma^\star_A$ is well defined; tensors eligible for inclusion in $P_{\alpha+1}$ remain so as other tensors are included, so the output of the algorithm does not depend on the order of inclusion. Following standard computer science terminology, we call this procedure the \textit{greedy algorithm} and call $\gamma^\star_A$ the \textit{greedy geodesic}. A step of the greedy algorithm is illustrated in figure \ref{greedyfig}.
\bfig
\includegraphics[height=2cm]{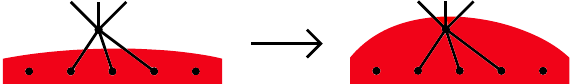}
\caption{A step in the greedy algorithm.  The upper node has at least three legs contracted with the region $P$, which we have shaded red, so we include it into $P$.}\label{greedyfig}
\efig

\bfig
\includegraphics[width=0.7\linewidth]{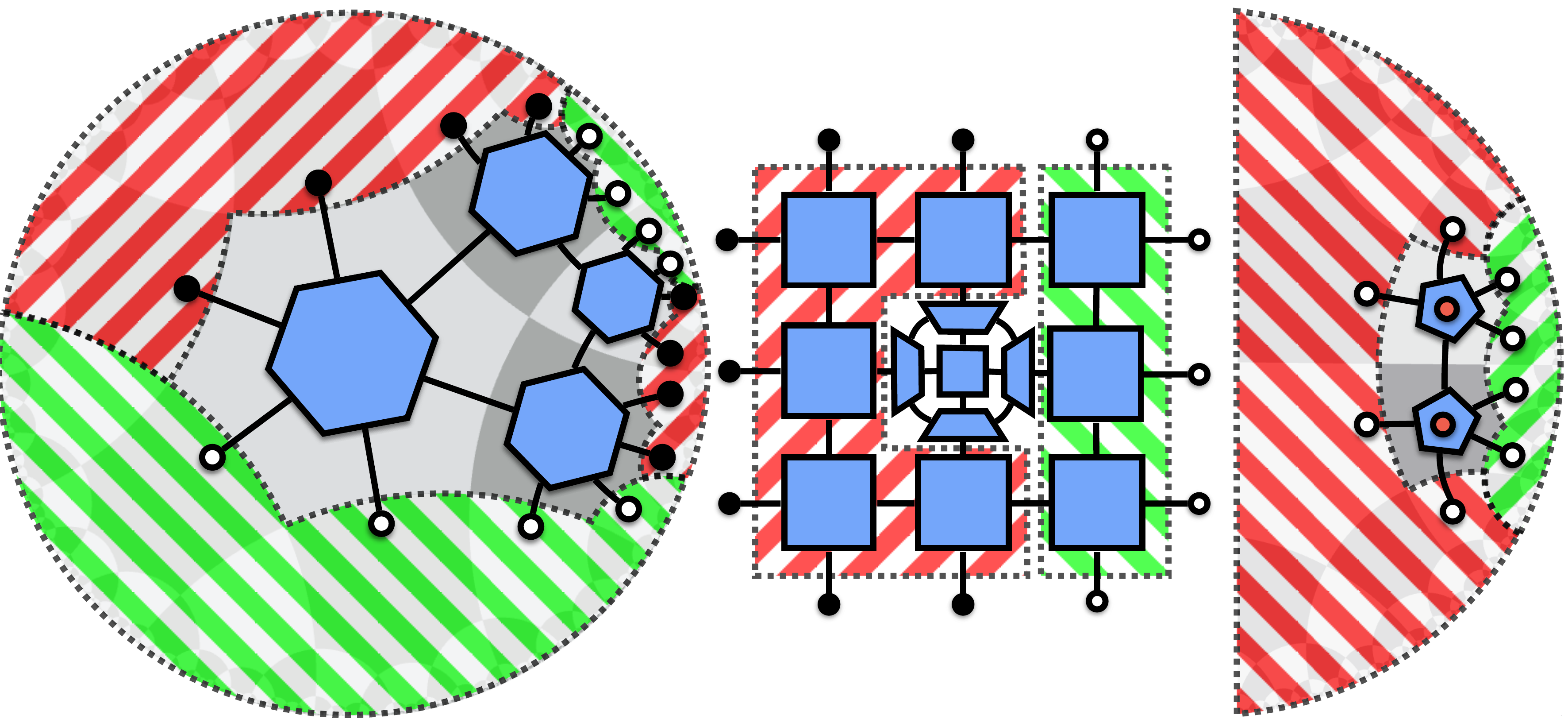}
\caption{Three examples where the greedy algorithm fails to find the matching minimal geodesics from complementary regions.  
The first example involves disconnected regions in the holographic state.
The second example involves a positive curvature obstruction at the center of the tiling which blocks the greedy geodesic from reaching the global minimal surface.
The third example involves a connected region for the holographic code.
In both the first and the third figure the greedy algorithm finds minimal geodesics from both sides but they do not match. 
In both cases, it is possible for the entropy to be slightly smaller than the length of the geodesic.
This depends on tensors which were not absorbed by either of the greedy geodesics which we call the bipartite residual regions.}
\label{fig:RTcounterexamplesd}
\efig

When the assumptions of Theorem \ref{RTth} are satisfied, the argument in appendix \ref{app:PlanarGraphProof} ensures that the greedy algorithm will find a true minimal geodesic $\gamma_A$. 
If there is more than one minimal geodesic, as is sometimes the case, then the greedy algorithm might continue past a minimal geodesic and proceed through minimal geodesics of equal length.  
In that case, the tensors in between the successive geodesics define a unitary transformation from one cut to the other. 
If $A$ has more than one connected component, if there is positive curvature, or if there are uncontracted bulk indices as for a holographic code, the greedy algorithm does not necessarily succeed in finding matching minimal geodesics, as we illustrate in figure \ref{fig:RTcounterexamplesd}.

In cases where the greedy algorithm fails to find a minimal geodesic, we can still use it to prove an interesting \textit{lower} bound on the entropy $S_A$.  Suppose that $\gamma^\star_A$ and $\gamma^\star_{A^c}$ are two greedy geodesics, produced by applying the greedy algorithm to $A$ and its complement $A^c$ respectively, where $P$ and $Q$ are the corresponding tensors.  Furthermore, suppose that $\gamma^\star_A\cap\gamma^\star_{A^c}$ is non-empty, in the sense that some links are cut by both geodesics.  We can represent that state as\footnote{For holographic codes with dangling bulk legs, we assume for now that a product state is fed into all bulk legs. If the input bulk state were entangled instead, there would be additional contributions to the boundary entanglement which we are not including.  This same proviso also applies to the discussion in the following subsection.}
\be
|\psi\ran=\sum_{a,b,i,j,k} |ab\ran P_{a,ij} Q_{b,ik}S_{jk}\equiv \sum_{i,j,k} S_{jk}|P_{ij}\rangle_A\otimes |Q_{ik}\rangle_{A^c}.
\ee
Here $i$ denotes the index shared between $\gamma^\star_A$ and $\gamma^\star_{A^c}$, $j$ is the index unique to $\gamma^\star_A$, $k$ is the index unique to $\gamma^\star_{A^c}$, and $S$ denotes the tensor that sits ``in between'' $\gamma^\star_A$ and $\gamma^\star_{A^c}$.  We call the set of lattice sites in $S$ the \textit{bipartite residual region} (where the modifier ``bipartite'' draws a distinction with the \textit{multipartite residual region} to be discussed in section \ref{subsec-multi-eng}.) Because $P$ and $Q$ are isometries, both $\{|P_{ij}\rangle\}$ and $\{|Q_{ik}\rangle\}$ are sets of orthonormal vectors. Therefore, the marginal density operator for $A$ is
\be
\rho_A=\sum_{i,j,j',k}S_{jk}S_{j'k}^*|P_{ij}\rangle\langle P_{ij'}|.
\ee
This density operator has support on the subspace of $A$ spanned by $\{|P_{ij}\rangle\}$, which has dimension $v^{|\gamma^\star_A|}$, and this subspace has a decomposition into subsystems $A_1\otimes A_2$ such that the basis element $|P_{ij}\rangle$ may be expressed as $|i\rangle_{A_1}\otimes |j\rangle_{A_2}$, where $\{|i\rangle\}$ and $\{|j\rangle\}$ are orthonormal bases for $A_1$ and $A_2$ respectively. We may then write 
\be\label{rhoA2}
\rho_A=\left(\sum_i|i\ran\lan i|_{A_1}\right)\otimes \left(\sum_{j,j',k}S_{jk}S^*_{j'k}|j\ran\lan j'|_{A_2}\right),
\ee
and from the additivity of the entropy, using $\textrm{dim}(A_1) = v^{|A_1|}=v^{|\gamma^\star_A\cap\gamma^\star_{A^c}|}$, we obtain the following theorem.

\begin{theorem}\label{thm-RT-inequality}
For a holographic state or code, if $A$ is a (not necessarily connected) boundary region and $A^c$ is its complement, then the entropy of $A$ satisfies
\be\label{residual-lowerbound}
S_A\geq |\gamma^\star_A\cap\gamma^\star_{A^c}|,
\ee
where $\gamma_A^\star$ is the greedy geodesic obtained by applying the greedy algorithm to $A$ and $\gamma_{A^c}^\star$ is the greedy geodesic obtained by applying the greedy algorithm to $A^c$.
\end{theorem}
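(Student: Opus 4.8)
The plan is to turn the picture described just before the theorem into a precise statement and then let the isometry property do the work. First I would run the greedy algorithm on $A$, producing a bulk region $P$ (with its contracted tensor, also written $P$) bounded by the greedy geodesic $\gamma^\star_A$, and, independently, run it on $A^c$, producing a disjoint bulk region $Q$ bounded by $\gamma^\star_{A^c}$; the bulk tensors absorbed by neither of the two runs assemble into the tensor $S$ on the \emph{bipartite residual region}. With the network split as $P\cup S\cup Q$, the contracted legs fall into three classes: legs between $P$ and $Q$ (the shared index $i$, cut by both geodesics, so $|i|=|\gamma^\star_A\cap\gamma^\star_{A^c}|$), legs between $P$ and $S$ (the index $j$, cut only by $\gamma^\star_A$), and legs between $Q$ and $S$ (the index $k$, cut only by $\gamma^\star_{A^c}$). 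This is precisely the decomposition $|\psi\rangle=\sum_{i,j,k}S_{jk}\,|P_{ij}\rangle_A\otimes|Q_{ik}\rangle_{A^c}$ written above.

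The decisive point is that the greedy algorithm returns \emph{isometries}: every move composes the current tensor with a perfect-tensor isometry (the newly absorbed tensor has at least half its legs contracted with the current region), so $P$ is an isometry from its cut-plus-bulk legs onto $A$ and $Q$ likewise onto $A^c$. Hence $\{|P_{ij}\rangle\}$ and $\{|Q_{ik}\rangle\}$ are orthonormal families, and tracing out $A^c$ kills the cross terms in $k$ and in $i$:
\[
\rho_A=\sum_{i,j,j',k}S_{jk}S_{j'k}^*\,|P_{ij}\rangle\langle P_{ij'}|.
\]
Now $P$ restricted to its range is a unitary $\mathcal{H}_i\otimes\mathcal{H}_j\to\mathrm{range}(P)\subseteq\mathcal{H}_A$, so it carries the product structure over to a subsystem splitting $A_1\otimes A_2$ of $\supp(\rho_A)$ with $|P_{ij}\rangle=|i\rangle_{A_1}\otimes|j\rangle_{A_2}$ and $\mathrm{dim}(A_1)=v^{|\gamma^\star_A\cap\gamma^\star_{A^c}|}$. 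Substituting gives, up to normalization, $\rho_A=I_{A_1}\otimes\big(\sum_{j,j',k}S_{jk}S_{j'k}^*\,|j\rangle\langle j'|_{A_2}\big)$; after normalizing, additivity of the von Neumann entropy over the two tensor factors yields $S_A=|\gamma^\star_A\cap\gamma^\star_{A^c}|+S(\sigma_{A_2})\geq|\gamma^\star_A\cap\gamma^\star_{A^c}|$ (with logarithms to base $v$).

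I expect the real work to be the geometric bookkeeping rather than this computation. The point requiring care is that the two greedy regions are genuinely disjoint and, together with the residual region, exhaust the network, so that each contracted link belongs to exactly one of the classes $i,j,k$ and the residual tensor $S$ couples only to $j$ and $k$ --- in particular no link runs directly between $P$ and $A^c$, between $Q$ and $A$, or between $S$ and the boundary, which is what makes $\rho_A$ have the displayed form. I would derive this from the defining feature of the greedy algorithm (eligibility for absorption is never lost) together with the fact that a bulk tensor cannot end up with at least half of its legs facing each side. A secondary issue to state explicitly is the bulk-leg caveat for holographic codes: the argument assumes a product state on all dangling bulk legs of $P$, $Q$ and $S$, so that those legs may be folded into the respective tensors; entangled bulk input would add extra contributions to $\rho_A$ and the bound would have to be restated. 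Finally, since $S$ need not be trivial, $S(\sigma_{A_2})$ can be strictly positive, so the inequality is in general strict --- the slack illustrated in figure \ref{fig:RTcounterexamplesd}; by purity $S_A=S_{A^c}$, so applying the bound to $A^c$ gives nothing new.
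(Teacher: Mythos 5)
Your proposal reproduces the paper's own argument essentially verbatim: the same decomposition of the state as $\sum_{i,j,k}S_{jk}|P_{ij}\rangle_A\otimes|Q_{ik}\rangle_{A^c}$ with $i,j,k$ labelling the shared and unshared cut indices, the same use of the greedy isometries to get orthonormal families $\{|P_{ij}\rangle\}$ and $\{|Q_{ik}\rangle\}$, the same factorized form of $\rho_A$, and additivity of entropy. Your added remarks on disjointness of the two greedy regions and on the product-state assumption for dangling bulk legs correctly flesh out points the paper handles implicitly or in a footnote, so the proof is correct and matches the paper's route.
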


We see from Theorem \ref{thm-RT-inequality} that violations of the Ryu-Takayanagi formula are closely related to the size of the bipartite residual region. In particular, if there is no bipartite residual region then $S_A=|\gamma^\star_A|$; the upper bound \eqref{upperbound} and the lower bound \eqref{residual-lowerbound} together imply that $\gamma^\star_A$ is in fact a minimal geodesic, and RT holds. We will argue in section \ref{subsec-multi-eng} that the bipartite residual region has size $O(1)$ when the regions $A$ and $A^c$ on the boundary have $O(1)$ connected components. In this sense, the corrections to the RT formula are typically small. 

\subsection{A map of multipartite entanglement}\label{subsec-multi-eng}

So far we have emphasized the bipartite entanglement between a boundary region $A$ and its complement $A^c$ in a holographic state or code. But we may also divide the boundary into three or more regions and investigate the structure of the entanglement among these regions. The entanglement structure can be elucidated via an entanglement ``distillation'' procedure which we will now describe.

To explain this procedure we begin by revisiting the case of bipartite entanglement. 
We have seen that if the conditions of Theorem \ref{RTth} are satisfied, then a holographic state can be expressed in the form \eqref{eq:RQ-schmidt}, where a subsystem of $A$ of dimension $v^{|\gamma_A^\star|}$ is maximally entangled with a corresponding subsystem of $A^c$. 
This entanglement shared between two systems is generally diluted, since each party may contain many more than $S_A$ spins. 
The entanglement would be more useful in a more concentrated form. 

The procedure for transforming dilute entanglement into concentrated entanglement, called \textit{entanglement distillation}, is particularly simple for a bipartite pure state like $|\psi\rangle$ in \eqref{eq:RQ-schmidt}. We choose $|\gamma_A^\star|$ specified spins in $A$ (the subsystem $A_1$ of $A$) and we choose $|\gamma_A^\star|$ spins in $A^c$ (the subsystem $A_1^c$ of $A^c$). Then we apply a unitary transformation $U_A$ acting on $A$ that transforms the basis states $\{|P_i\rangle_A\}$ to the standard basis states of $A_1$, and a unitary transformation $U_{A^c}$ acting on $A^c$ that transforms the basis states $\{|Q_i\rangle_{A^c}\}$ to the standard basis states of $A_1^c$, thus obtaining the state 
\be 
|\psi'\rangle = \left(|\Phi\rangle^{\otimes |\gamma_A^\star|} \right)_{A_1A^c_1}\otimes |\tilde \chi\rangle_{A_2}\otimes |\tilde \phi\rangle_{A_2^c},
\ee  
in which the entanglement of $A$ with $A^c$ now resides entirely in the system $A_1 A_1^c$. Here $A_2$ denotes the complement of $A_1$ in $A$, $A_2^c$ denotes the complement of $A_1^c$ in $A^c$, and 
\be
|\Phi\rangle =\frac{1}{\sqrt{v}} \sum_{\alpha=1}^v |\alpha\rangle \otimes |\alpha\rangle
\ee
is a maximally entangled EPR pair of two spins.

There is a method for constructing the unitary transformations $U_A$ and $U_{A^c}$ explicitly, which has a pleasing geometrical interpretation. The method uses the greedy algorithm for constructing $\gamma_A^\star$, but where now each local move, in which the cut through the tensor network advances into the bulk by moving past one additional tensor, is accompanied by a local unitary transformation that decouples spins from the network. This local unitary transformation is depicted in figure \ref{fig_distillation}, where entanglement distillation is performed on a pair of contracted six-leg tensors. 

\begin{figure}[htb!]
\centering
\includegraphics[width=0.8\linewidth]{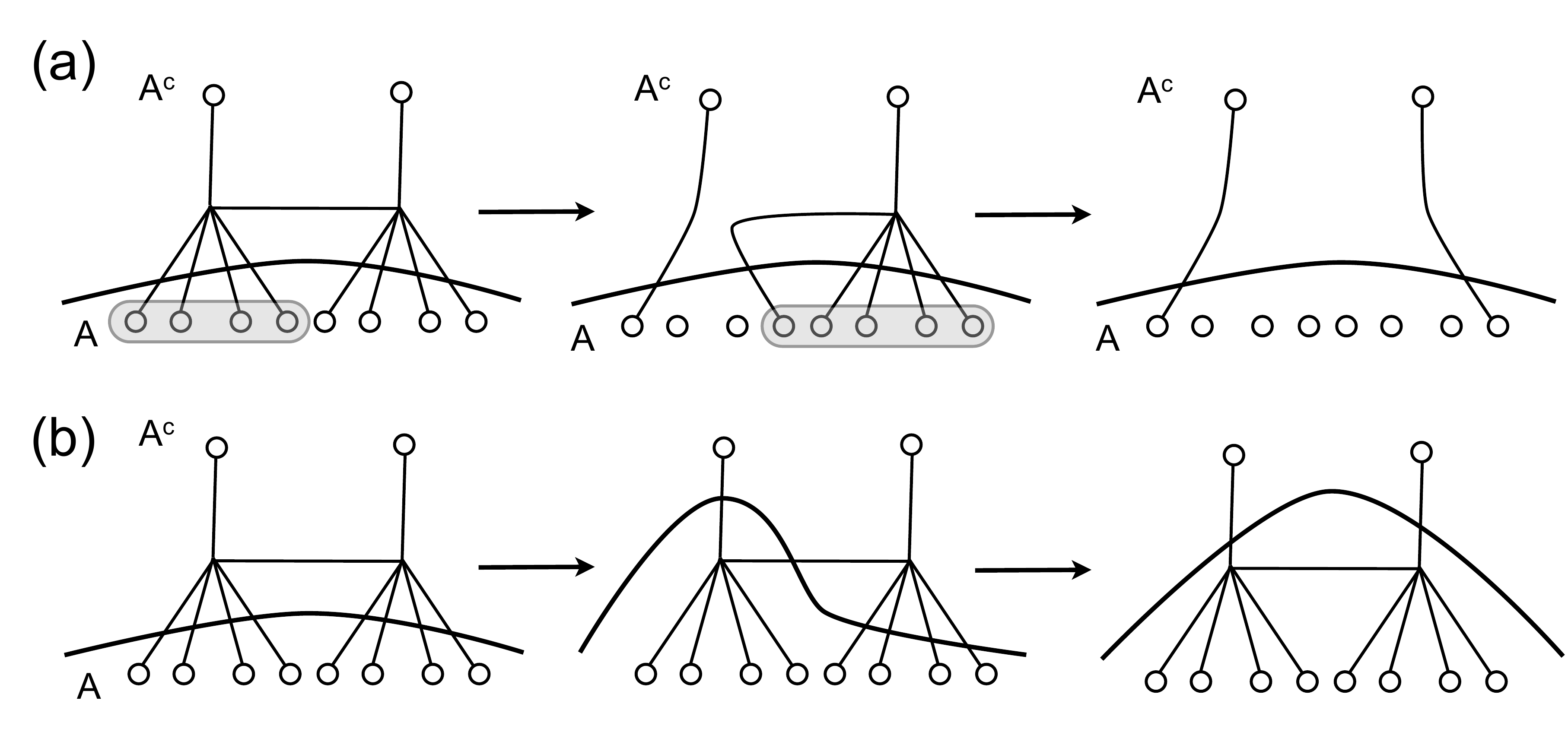}
\caption{ The correspondence between local moves and distillation of EPR pairs. (a) Distillation of two EPR pairs. (b) The corresponding local moves.
Before the first move, the tensor on the left has four legs crossed by the cut. Because the tensor is perfect, its remaining two legs are maximally entangled with a subsystem of these four. The first local unitary transformation acts on the four spins below the cut, transforming the basis to decouple the second and third spin, while the first and fourth spins remain contracted across the cut; in the corresponding local move, the cut advances upward past the tensor on the left. After the first move, the tensor on the right has five legs crossed by the cut. The second unitary transformation changes the basis of these five spins, decoupling the first four, while the fifth remains contracted across the cut; now the corresponding local move advances the cut upward past the tensor on the right. The product of the two local unitaries has distilled two EPR pairs which cross the cut, while decoupling six spins below the cut. 
} 
\label{fig_distillation}
\end{figure}

Since each local move of the greedy algorithm moves the cut past a tensor which initially has at least three legs crossed by the cut, the legs above the cut are always maximally entangled with the legs below, and the corresponding local unitary transformation exists. For purposes of visualization, we may imagine that the spins which remain contracted across the cut advance further into the bulk in each step, remaining adjacent to the cut, while the spins which decouple are left behind. When the greedy algorithm applied to $A$ terminates, then, all the decoupled spins of $A$ are distributed throughout the bulk region in between the greedy geodesic and the boundary, while $|\gamma_A^\star|$ spins of $A$, lined up along the greedy geodesic, are contracted with tensors on the other side of the greedy geodesic. If we also apply the greedy algorithm to $A^c$, then under the conditions of Theorem \ref{RTth}, the algorithm terminates at the same greedy geodesic. Acting together, then, the unitary transformations associated with the two greedy algorithms have decoupled all the boundary spins, except for $|\gamma_A^\star|$ EPR pairs, one for each of the legs crossed by the greedy geodesic, thus executing the entanglement distillation protocol.

Run backwards, the sequence of local unitary transformations associated with the greedy algorithm constitutes a \textit{holographic quantum circuit}, which prepares the boundary state. 
The input to this circuit is $|\gamma_A^\star|$ EPR pairs, plus a suitable number of additional spins in a product state, distributed throughout the bulk. The circuit builds the state step by step, gradually incorporating the bulk spins as the cut advances outward from the greedy geodesic toward the boundary. The input state, envisioned as a set of EPR pairs lined up along $\gamma_A^\star$, provides a \textit{map of entanglement}, a picture characterizing the structure of the entanglement between $A$ and $A^c$. (See figure \ref{fig_tiling2}.) The initial EPR pairs along the greedy geodesic which are deep inside the bulk encode long-range entanglement between $A$ and $A^c$, while the EPR pairs closer to the boundary encode shorter-range entanglement. 

\begin{figure}[htb!]
\centering
\includegraphics[width=0.60\linewidth]{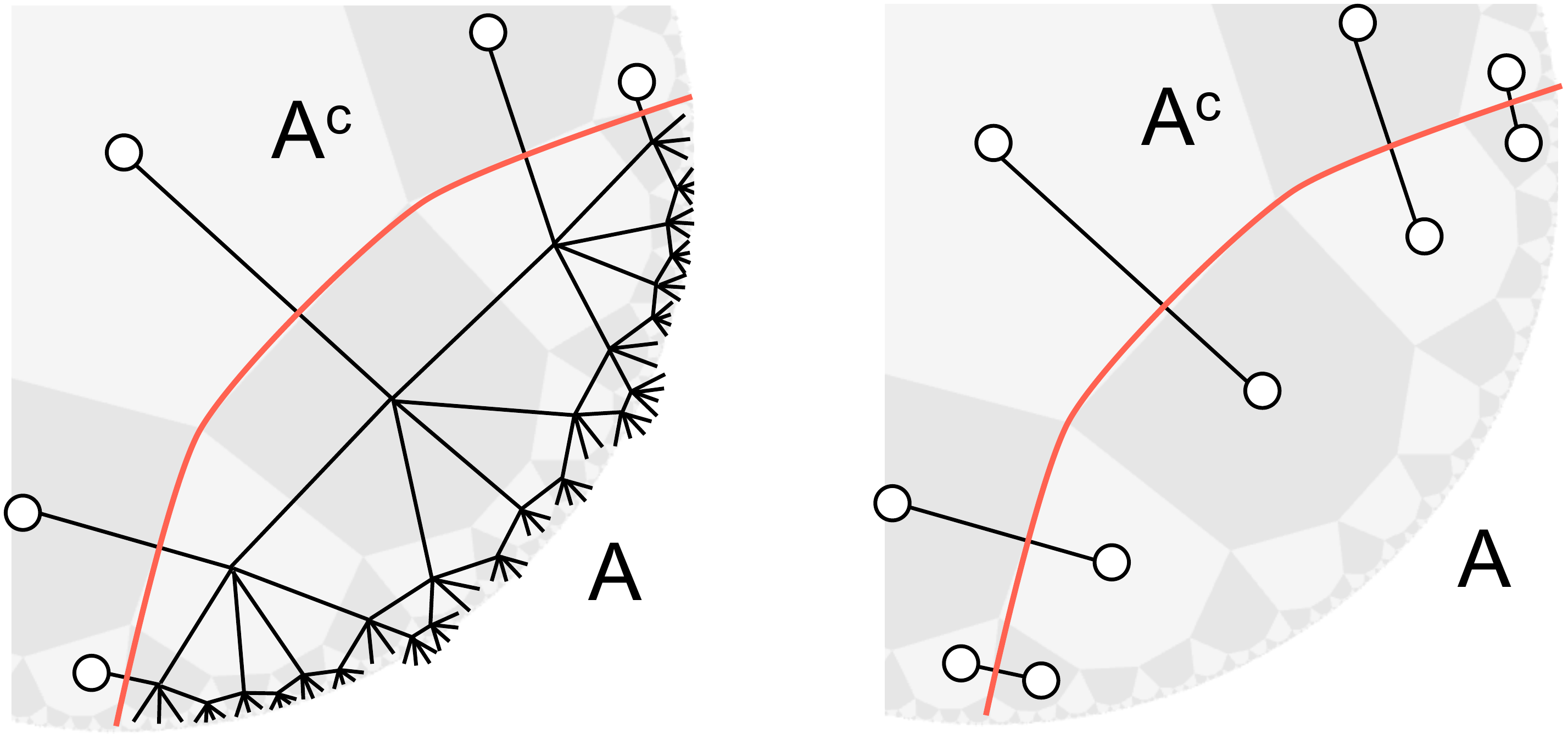}
\caption{A geometric map of bipartite entanglement. White dots represent physical spins distilled by applying local unitary transformations to $A$ and $A^c$. 
} 
\label{fig_tiling2}
\end{figure}

We can likewise use the greedy algorithm to create a map of multipartite entanglement, whether or not the conditions of Theorem 2 are satisfied. Suppose, for example, that we divide the boundary into four regions $A,B,C,D$, each of which is connected, as in figure \ref{fig_residual_tensor}. 
We may apply the greedy algorithm separately to each of the four regions, obtaining greedy geodesics $\gamma_A^\star, \gamma_B^\star,\gamma_C^\star, \gamma_D^\star$. The bulk region in between $A$ and its greedy geodesic $\gamma_A^\star$ is called the \textit{causal wedge} of $A$, denoted $\CA$. (The significance of the causal wedge in holographic codes will be discussed at length in section \ref{sec:code}.) 
As figure \ref{fig_residual_tensor} indicates, the union $\mathcal{C}[A]\cup \mathcal{C}[B]\cup \mathcal{C}[C]\cup \mathcal{C}[D]$ of the four causal wedges need not cover the entire bulk lattice --- there may be a \textit{multipartite residual region} in the bulk, which the greedy algorithm fails to reach when applied to the boundary regions one at a time. As we explain below, the size of the multipartite residual region is expected to be $O(1)$, independent of the total system size.

\begin{figure}[htb!]
\centering
\includegraphics[width=0.80\linewidth]{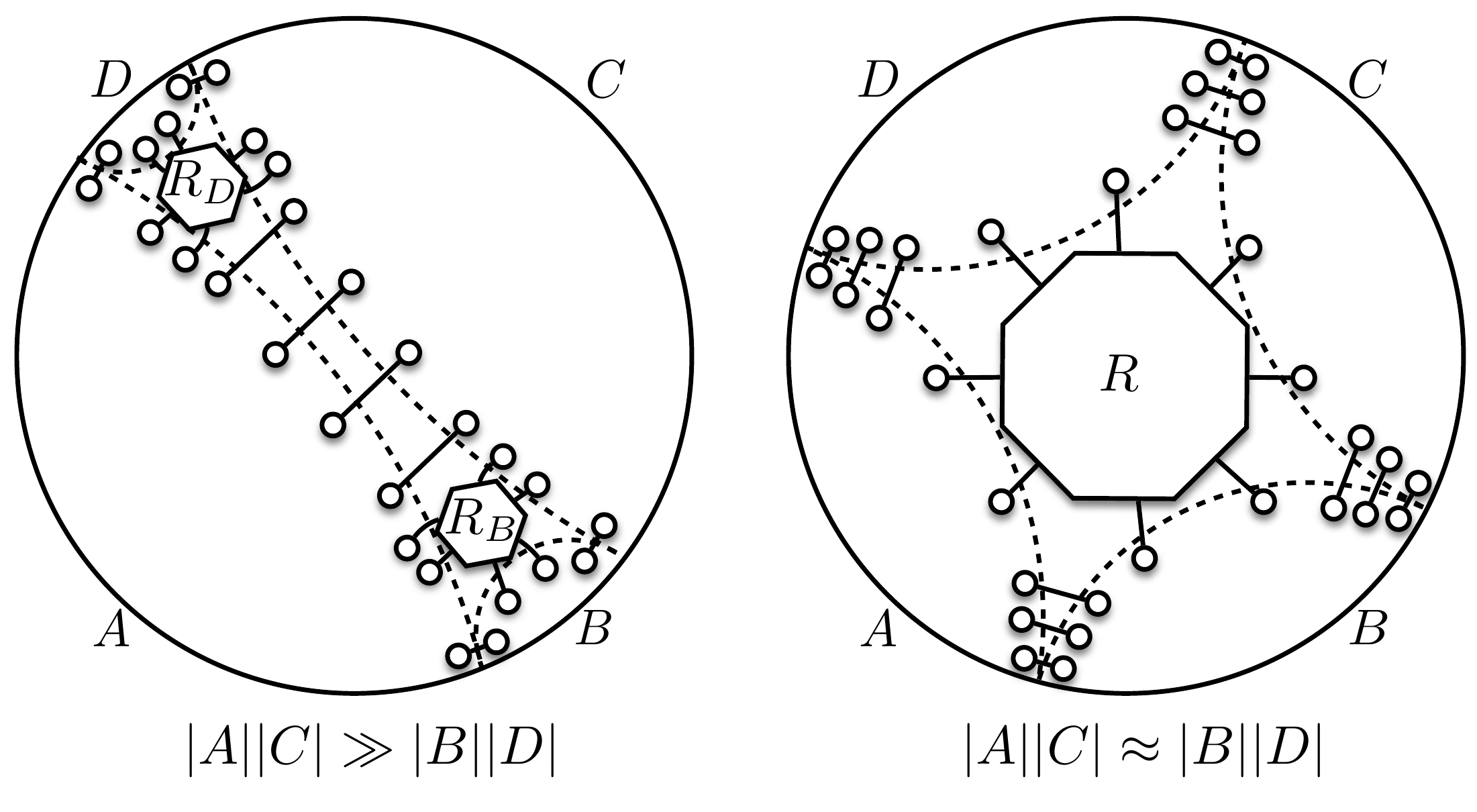}
\caption{The ``map of entanglement'' and multipartite residual regions in a holographic state. 
For $|A||C|  \gg |B||D|$ it is possible for the residual region to pinch off so much that EPR pairs can be directly distilled between $A$ and $C$.
In other words, due to the discretization of the lattice, the causal wedges $\cC[A]$ and $\cC[C]$ may be adjacent in the bulk.
In this case the residual region may be composed of two disconnected components, $R_B$ and $R_C$ which can contribute tripartite correlations.
A similar analysis holds for $|A||C|  \ll |B||D|$. 
For $|A||C| \approx |B||D|$, a single connected residual region $R$ contiguous to the four causal wedges is expected and may contribute four-party correlations.
} 
\label{fig_residual_tensor}
\end{figure}

Multipartite residual regions in the bulk can indicate multipartite entanglement among the four regions on the boundary. As discussed above for the case of bipartite entanglement, suppose we decouple spins in each of $A,B,C,D$ by performing suitable local unitary transformations associated with each step of the greedy algorithm. 
Where the greedy geodesics of adjacent regions meet, EPR pairs are distilled, in keeping with our observation in section \ref{subsec:multiple-regions} that the bipartite entanglement of two boundary regions is no less than the length of the greedy geodesic shared by the two regions. 
The tensors trapped inside a multipartite residual region however, do not necessarily have a decomposition into EPR pairs. 
Instead it describes a state with multipartite entanglement, which cannot be expressed as a product of states with only bipartite entanglement. 

Just as for a partition of the boundary into connected regions $A$ and $A^c$, we can reverse the order in which tensors are incorporated by the greedy algorithm to obtain a holographic quantum circuit of isometries which prepares the boundary state. 
When we partition the boundary into four connected regions, however, the input to the circuit includes more than just EPR pairs distributed along shared greedy geodesics and decoupled spins in the bulk; additional multipartite states associated with each connected component of the bulk multipartite residual region are also part of the input. The circuit factorizes into a product $U_A\otimes U_B\otimes U_C\otimes U_D$, with each of the four unitary transformations acting within its own causal wedge to build the corresponding connected component of the boundary. Again, the greedy geodesics encode a ``map'' of the entanglement among $A,B,C,D$, now including a description of multipartite entanglement among all the regions as well as bipartite entanglement among pairs of regions. Two such maps are shown in figure \ref{fig_residual_tensor}; in these cases a single six-leg tensor is trapped in each connected component of the bulk multipartite residual region, though in general a more complex tensor network could be trapped inside as indicated in figure \ref{fig:RTcounterexamplesd}.

We may also argue that if the bulk has constant negative curvature, then for any partition of the boundary into $O(1)$ connected components, the multipartite residual region is always $O(1)$ in size. This statement is true for the Riemannian geometry of the hyperbolic plane, but is merely heuristic because it disregards subtleties arising from the discrete lattice structure of the bulk. For a two-dimensional Riemannian manifold, the Gauss-Bonnet theorem applied to the residual region $R$ states that
\begin{align}\label{eq:gauss-bonnet}
\int_R K\;dA+\int_{\partial R}k_g\;ds=2\pi\chi(R).
\end{align}
here $K$ is the Gaussian curvature, $k_g$ is the geodesic curvature, and $\chi(R)$ is the Euler characteristic of the residual region, which is $\chi=1$ when $R$ has the topology of a disk. If $R$ is the interior of an $m$-gon whose sides are geodesics, \eqref{eq:gauss-bonnet} says that the integral of $K$ over $R$ is the deviation of the sum of interior angles of the $m$-gon from the corresponding sum for an $m$-gon in flat space; the latter sum is $(m-2)\pi$ because the $m$-gon can be covered by $m-2$ triangles. For the AdS space, the interior angles approach zero as the space becomes large compared to its curvature radius; therefore assuming uniform negative curvature $K=-1/\alpha^2$ (where $\alpha$ is the AdS radius), we conclude that the volume of the residual region is 
\begin{align}
V(R) = \pi(m-2)\alpha^2.
\end{align}
In our tensor networks $\alpha$ is of order the length of a link; therefore $V(R)$ is $O(1)$ in lattice units if $m$ is $O(1)$, which establishes our claim.

Likewise, the bipartite residual region arising from a partition of the boundary into two regions $A$ and $A^c$, discussed in section \ref{subsec:multiple-regions}, has size $O(1)$ if $A$ and $A^c$ both have $O(1)$ connected components. 
Indeed, the bipartite residual region is contained in the multipartite residual region found by applying the greedy algorithm separately to each connected component of $A$ and of $A^c$. 

\subsection{Negative tripartite information}

A useful characterization of multipartite entanglement is the tripartite information, defined as
\be
I_3(A,B,C)\equiv S_A+S_B+S_C-S_{AB}-S_{AC}-S_{BC}+S_{ABC}.
\ee
For a general (mixed) tripartite quantum state, $I_3$ can take any real value. 
It is zero, though, for any tripartite pure state of $ABC$, since in that case $S_{ABC}=0$ and \textit{e.g.} regions $A$ and $BC$, being complementary, have the same entropy and therefore make cancelling contributions to $I_3$. 
Nor is there a contribution to $I_3$ from EPR pairs shared between a pair of the three regions (because \textit{e.g.} a pair shared by $AB$ yields positive contributions to $S_A$ and $S_B$ which are cancelled by negative contributions from $-S_{AC}$ and $-S_{BC}$) or from entanglement shared between one of the three regions and a fourth disjoint region. 
Thus, for a holographic state and for any partition of the boundary into four regions $A,B,C,D,$ nonzero contributions to $I_3(A,B,C)$ can arise only from the distilled multipartite states trapped in residual regions. 

In the holographic setting, it has been shown that $I_3 \leq 0$ follows from the RT formula \cite{Hayden13b}. 
For holographic states and codes, the non-positivity of $I_3$ is not ensured in general, because of the potential (small) violations of the RT formula. In some special cases, though, RT holds exactly, and the non-positivity of $I_3$ then follows. For example, suppose that we partition the boundary into four connected regions $A,B,C,D$, and that each connected component of the multipartite residual region traps just one perfect tensor. In that case there is no bipartite residual region, so Theorem \ref{thm-RT-inequality} implies that RT is exact and therefore $I_3 \leq 0$. 
To see that there is no bipartite residual region in this case, consider the bipartite partition of the boundary into the two disconnected regions $AC$ and $BD$, and consider an isolated $2n$-index perfect tensor surrounded by three or all four of the greedy geodesics $\gamma_A^\star, \gamma_B^\star,\gamma_C^\star,\gamma_D^\star$. 
This tensor must have at least $n$ legs crossing either $\gamma_A^\star \cup \gamma_C^\star$ or $\gamma_B^\star \cup \gamma_D^\star$. 
Therefore, when we apply the greedy algorithm to the boundary regions $AC$ and $BD$, one cut or the other will advance past this isolated tensor, excluding it from the bipartite residual region.

Under suitable conditions we can actually prove a stronger result  --- that $I_3$ is \textit{strictly negative}. Let us say that a connected component of the multipartite residual region is \textit{three sided} if surrounded by three of the four greedy geodesics, and \textit{four sided} if surrounded by all four greedy geodesics. Three-sided components make no contribution to $I_3$; if the three surrounding greedy geodesics are those of $X,Y,Z$, the symmetry of $I_3$ implies $I_3(A,B,C)= I_3(X,Y,Z)$, which vanishes for any pure state of $XYZ$. But an isolated $2n$-index perfect tensor which crosses all four greedy geodesics makes a negative contribution to $I_3$:

\begin{theorem}\label{lem:negativeTripartiteInformation}
Suppose the $2n$ indices of a perfect tensor state are partitioned into four disjoint nonempty sets $A,B,C,D$ such that $0 < |A|,|B|,|C|,|D| < n$. Then the tripartite information $I_3$ is strictly negative: $I_3(A,B,C) < 0$.
\end{theorem}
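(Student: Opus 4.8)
The plan is to compute each of the seven entropies appearing in $I_3(A,B,C)$ directly using the defining property of a perfect tensor. Since the tensor is perfect, the state $|\psi\rangle$ on all $2n$ spins is an AME state, so for any subset $X$ of the $2n$ legs with $|X| \le n$ the reduced density matrix $\rho_X$ is maximally mixed, giving $S_X = |X|\log v$; and for $|X| \ge n$, by purity $S_X = S_{X^c} = |X^c|\log v = (2n-|X|)\log v$. In other words $S_X = \min(|X|, 2n-|X|)\log v$ for every subset $X$. Suppressing the $\log v$, the whole computation reduces to evaluating the combinatorial quantity
\be
I_3(A,B,C) = f(|A|)+f(|B|)+f(|C|) - f(|A|{+}|B|) - f(|A|{+}|C|) - f(|B|{+}|C|) + f(|A|{+}|B|{+}|C|),
\ee
where $f(x) \equiv \min(x, 2n-x)$ and we have used $S_{ABC} = S_D = f(|D|) = f(2n - |A|-|B|-|C|) = f(|A|+|B|+|C|)$.

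First I would observe that, by hypothesis, $|A|,|B|,|C|,|D|$ are all strictly between $0$ and $n$, so the single-region terms are unambiguous: $f(|A|)=|A|$, $f(|B|)=|B|$, $f(|C|)=|C|$, and likewise $f(|A|+|B|+|C|) = f(2n-|D|) = 2n - |D| = |A|+|B|+|C|$ since $|D| < n$. Thus the four ``odd'' terms contribute $2(|A|+|B|+|C|)$. It remains to handle the three pairwise terms $f(|A|+|B|)$, $f(|A|+|C|)$, $f(|B|+|C|)$. For a pair $X$, writing $p = |X|$ we have $f(p) = p$ if $p \le n$ and $f(p) = 2n - p$ if $p \ge n$; in either case $f(p) \le p$, with strict inequality exactly when $p > n$. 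So
\be
I_3(A,B,C) = 2(|A|{+}|B|{+}|C|) - f(|A|{+}|B|) - f(|A|{+}|C|) - f(|B|{+}|C|) \le 2(|A|{+}|B|{+}|C|) - \bigl[(|A|{+}|B|)+(|A|{+}|C|)+(|B|{+}|C|)\bigr] = 0,
\ee
which already recovers $I_3 \le 0$. For strict negativity I need at least one pair to have size exceeding $n$. Suppose for contradiction that $|A|+|B| \le n$, $|A|+|C| \le n$, and $|B|+|C| \le n$. Summing the three inequalities gives $2(|A|+|B|+|C|) \le 3n$, i.e. $|A|+|B|+|C| \le \tfrac{3n}{2}$, hence $|D| = 2n - (|A|+|B|+|C|) \ge \tfrac{n}{2}$. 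This is not yet a contradiction, so the pigeonhole step needs to be sharper.

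The cleaner route, and the step I expect to be the crux, is the following: since $|D| < n$, we have $|A|+|B|+|C| > n$. I claim this forces at least one pair to exceed $n$. Indeed, if all three pairwise sums were $\le n$, then in particular $|A|+|B| \le n$, so $|C| = (|A|+|B|+|C|) - (|A|+|B|) > n - n = 0$ — true but useless — so instead I argue directly: $(|A|+|B|) + (|A|+|C|) + (|B|+|C|) = 2(|A|+|B|+|C|) > 2n$, so the three pairwise sums cannot all be $\le 2n/3$... this still is not quite it. The correct sharp statement is: among the three pairs, the largest one, say $|A|+|B|$ (WLOG $|C|$ is the smallest of the three), satisfies $|A|+|B| \ge \tfrac{2}{3}(|A|+|B|+|C|)$, and since $|A|+|B|+|C| > n$ this only gives $|A|+|B| > \tfrac{2n}{3}$. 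So a purely ``sum'' argument is insufficient, and I will instead use that $|A|+|B| = 2n - |C| - |D| > 2n - n - n$... also insufficient. The genuinely correct observation: $|A|+|B|+|C| > n$ and $|C| < n$ give $|A|+|B| + |C| > n$; pick the pair \emph{not} containing the smallest singleton. Actually the clean fact is: at least one pairwise sum is $> n$ iff not all three are $\le n$; if all three were $\le n$ then adding any one of them to the third singleton, e.g. $(|A|+|B|) + |C| \le n + |C|$, contradicts $|A|+|B|+|C| > n$ only if $|C| = 0$. Hence this ``all pairs $\le n$'' scenario is genuinely possible combinatorially (e.g. $n$ large, $|A|=|B|=|C|=|D| = n/2$ gives pairwise sums $= n$, not $> n$), so $I_3$ can equal zero for a perfect tensor state partitioned this way — meaning the theorem as I've paraphrased it should have strict inequalities somewhere, or the intended hypothesis rules this out. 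Re-examining: with $|A|=|B|=|C|=|D|=n/2$ each pairwise sum is exactly $n$, $f(n) = n$, and $I_3 = 2\cdot\tfrac{3n}{2} - 3n = 0$. So strict negativity \emph{requires} some pair $\ne n$; the honest proof must therefore split on cases, and the real content is: because the $|X|$ are integers and constrained as stated, one shows combinatorially that one pairwise sum is strictly larger than $n$ \emph{unless} a degenerate equal-partition occurs — in which case I would revisit whether the paper intends $n$ odd, or intends the conclusion $I_3 \le 0$ with strictness under a genericity hypothesis. In the write-up I will (i) establish $I_3 = 2(|A|+|B|+|C|) - f(|A|+|B|) - f(|A|+|C|) - f(|B|+|C|)$ rigorously, (ii) note $f(p) \le p$ always, hence $I_3 \le 0$, and (iii) show $f(p) < p$ for the largest pair by arguing that the largest pairwise sum is at least $\lceil \tfrac{2}{3}(|A|+|B|+|C|)\rceil$ and combining with $|A|+|B|+|C| \ge n+1$ to beat $n$ whenever $n \ge 3$ — handling small $n$ and the balanced case by hand. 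The main obstacle is precisely pinning down this last combinatorial inequality and the edge cases; everything else is a mechanical consequence of the AME property.
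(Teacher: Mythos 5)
Your overall strategy---compute every entropy from the AME property $S_X=\min(|X|,2n-|X|)$ and reduce $I_3$ to a combinatorial expression---is exactly the paper's strategy, but your execution contains a decisive sign error that derails the whole argument. You evaluate $S_{ABC}=f(2n-|D|)$ as $2n-|D|=|A|+|B|+|C|$ ``since $|D|<n$,'' but the hypothesis $|D|<n$ gives $2n-|D|>n$, so the minimum is attained by the complement: $S_{ABC}=S_D=|D|$. (This is just purity: $ABC$ contains more than half the legs.) Consequently your master formula should read
\be
I_3 = 2n - f(|A|{+}|B|) - f(|A|{+}|C|) - f(|B|{+}|C|),
\ee
not $I_3 = 2(|A|{+}|B|{+}|C|) - \sum f(\text{pairs})$. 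Every pathology you then wrestle with is an artifact of this error: in your ``balanced'' example $|A|=|B|=|C|=|D|=n/2$ the correct value is $I_3=2n-3n=-n<0$, not $0$, so there is no equality case, no need for $n$ odd, and no reason to doubt the strict inequality in the statement. Your proposal also never actually closes---it ends by conceding that the combinatorial step is unresolved---so even setting aside the error it is not a complete proof.

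With the corrected formula the paper's argument is short and you should compare it to what you attempted. Order $|A|\leq|B|\leq|C|\leq|D|$ (the paper justifies this WLOG step by first rewriting $I_3$ in a form manifestly symmetric under all permutations of $A,B,C,D$, again using equality of complementary entropies in a pure state). Then $|AB|\leq|CD|$ and $|AC|\leq|BD|$ force $f(|AB|)=|A|+|B|$ and $f(|AC|)=|A|+|C|$, and only the pair $BC$ versus $AD$ requires a case split: if $|AD|\leq|BC|$ one gets $I_3=-2|A|<0$ since $|A|>0$, and if $|BC|\leq|AD|$ one gets $I_3=2|D|-2n<0$ since $|D|<n$. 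Note that strictness comes directly from the hypotheses $0<|A|$ and $|D|<n$; no analysis of which pairwise sums exceed $n$ is needed. The lesson is that the fact you half-invoked---complementary subsystems of a pure state have equal entropy---must be applied to \emph{every} subset with more than $n$ legs, including $ABC$ itself, and doing so systematically is what makes the case analysis collapse to two lines.
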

\begin{proof}
First we notice that for a four-part \textit{pure} state $ABCD$, the tripartite information $I(A,B,C)$ is actually completely symmetric under permutations of the four subsystems, which we can see by using the property that complementary regions have the same entropy in a pure state:
\begin{align}
I_3(A,B,C) & = S_A + S_B + S_C - S_{AB} - S_{BC} -S_{AC} + S_{ABC}\\
            & = S_A + S_B + S_C + S_D-\frac{1}{2}( S_{AB} + S_{CD} + S_{BC} + S_{AD} + S_{AC} +S_{BD} ).
\end{align}
We may therefore assume without loss of generality that $|A| \leq |B| \leq |C| \leq |D|$ which implies $|AB| \leq |CD|$ and $|AC| \leq |BD|$. Now we use the defining property of $2n$-index perfect tensors, that a set of $n$ or fewer indices is maximally entangled with its complement, which implies $S_X = \min(|X|, 2n-|X|)$, with entropy expressed in units of $\log v$. Therefore $S_A=|A|, S_B=|B|, S_C=|C|,S_D=|D|$, and furthermore $S_{AB} = |AB|$ and $S_{AC} = |AC|$. Now we distinguish two cases.
If $|AD| \leq |BC|$, then $S_{BC} = S_{AD}= |AD|$ and we have
\begin{align}
I_3(A,B,C) = |A| + |B| + |C| +|D| - |AB| - |AC| - |AD| = -2|A| < 0.
\end{align}
If on the other hand $|BC| \leq |AD|$, then $S_{AD} = S_{BC} = |BC|$ and we have
\begin{align}
I_3(A,B,C) = |A| + |B| + |C| +|D| - |AB| - |AC| - |BC| = 2|D| - 2n < 0,
\end{align}
where to obtain the second equality we use $|AB| + |AC| + |BC| = 2(|A| + |B| + |C|) = 2(2n-|D|)$. This completes the proof.
\end{proof}

For a holographic state with boundary partitioned into sets $A,B,C,D$, the conditions of Theorem \ref{lem:negativeTripartiteInformation} are satisfied by an isolated perfect tensor trapped inside a four-sided component of the multipartite residual region; fewer than $n$ of the tensor's legs cross any greedy geodesic, because otherwise the greedy algorithm would have moved the cut forward past this perfect tensor, which therefore would not be in the multipartite residual region. Furthermore, since entropy is additive for a product state, $I_3$ is also strictly negative for any product of perfect tensor states shared by $A,B,C,D$, provided that at least one factor has support on all four sets. Since only the four-sided regions contribute to $I_3$, we conclude that $I_3$ is strictly negative if the multipartite residual region contains at least one four-sided connected component, and if  each four-sided connected component contains only one perfect tensor.

\section{Quantum error correction in holographic codes}\label{sec:code}
In this section we study the error correction properties of our holographic codes in more detail.  The idea that a CFT with a gravity dual must have error correcting properties was recently proposed in \cite{Almheiri14}, and in this section we will see that our holographic codes illustrate many aspects of the proposal of \cite{Almheiri14} quite explicitly.  

\subsection{AdS-Rindler reconstruction as error correction}
We begin by briefly recalling the main point emphasized in \cite{Almheiri14}, which is that in AdS/CFT a bulk local observable can be realized by many different operators in the CFT. In fact, if $x$ is any point in the bulk, and $Y$ is any point on the boundary, the AdS/CFT dictionary can be chosen so that it maps the bulk local field $\phi(x)$ to a CFT operator $\mathcal{O}[\phi(x)]$ which has no support in an open set containing $Y$, and therefore commutes with any local field of the CFT supported near $Y$. Since $Y$ is an arbitrary boundary point, if the CFT operator corresponding to $\phi(x)$ were actually unique, we would conclude that $\mathcal{O}$ commutes with all local fields in the CFT, and therefore is a multiple of the identity because the local field algebra is irreducible. This paradox is evaded once we recognize that the correspondence is not unique. If $Y,Z$ are two distinct boundary points, the CFT operator corresponding to $\phi(x)$ can be chosen to be either $\mathcal{O}$, which commutes with CFT local fields supported near $Y$, or $\mathcal{O}'$, which commutes with CFT local fields supported near $Z$, where $\mathcal{O}$ and $\mathcal{O}'$ are inequivalent CFT operators even though they can be used interchangeably for describing bulk physics. 

\bfig
\includegraphics[height=7cm]{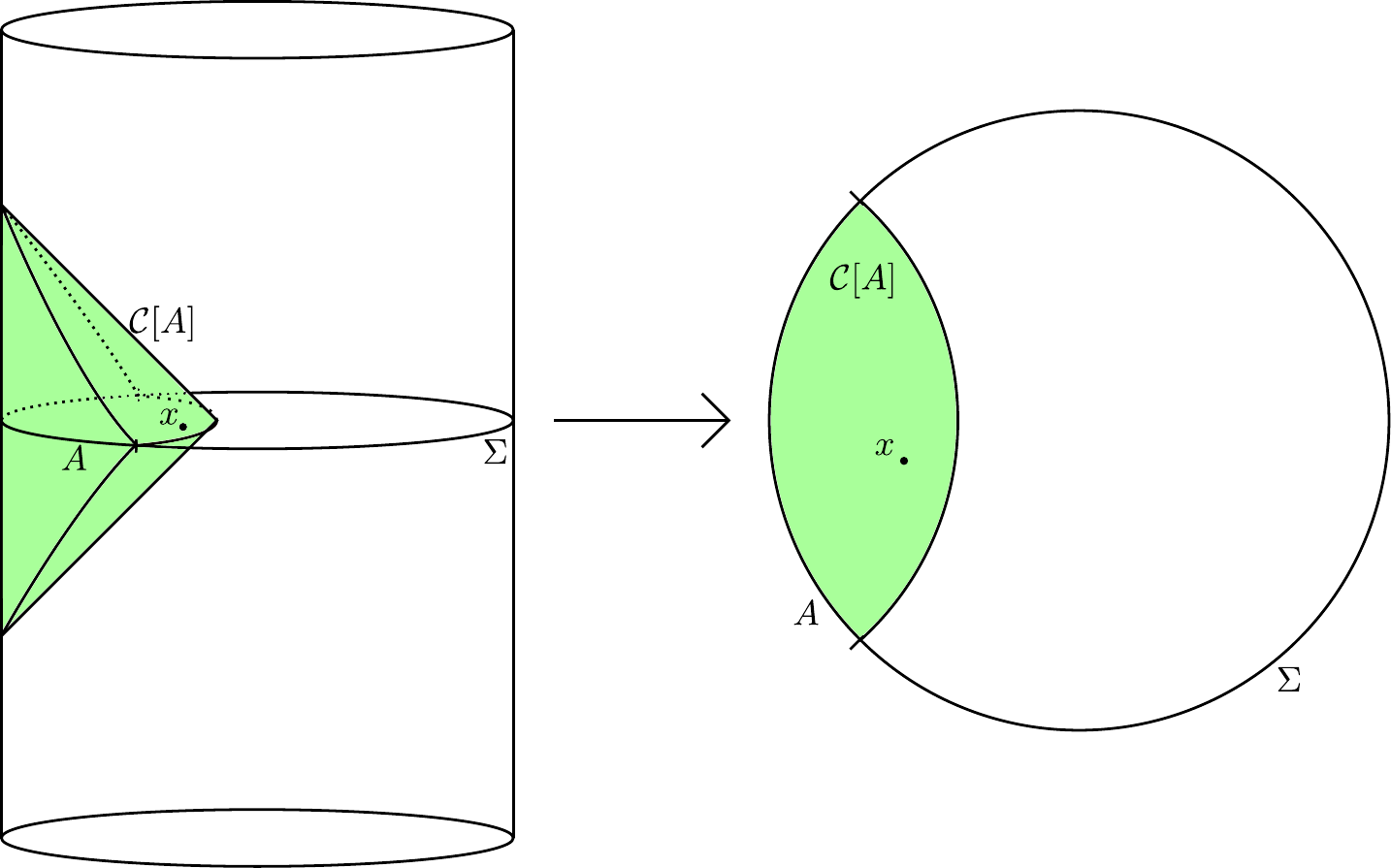}
\caption{Bulk field reconstruction in the causal wedge.  On the left is a spacetime diagram, showing the full spacetime extent of the causal wedge $\CA$ associated with a boundary subregion $A$ that lies within a boundary time slice $\Sigma$. The point $x$ lies within $\CA$ and thus any operator at $x$ can be reconstructed on $A$.  On the right is a bulk time slice containing $x$ and $\Sigma$, which has a geometry similar to that of our tensor networks. The point $x$ can simultaneously lie in distinct causal wedges, so $\phi(x)$ has multiple representations in the CFT.}\label{rindlerfig}
\efig
This novel feature of AdS/CFT, that a bulk local observable can be represented by boundary CFT operators in multiple ways, is illustrated in figure \ref{rindlerfig}.
The idea is that any fixed-time CFT subregion $A$ defines a subregion in the bulk, the causal wedge $\CA$.  For any point $x\in \CA$, bulk quantum field theory ensures that any bulk local operator $\phi(x)$ can be represented in the CFT as some nonlocal operator on $A$.  This representation is called the AdS-Rindler reconstruction of the operator \cite{Hamilton2006, Morrison2014}. 
Because a given bulk point $x$ can lie within distinct causal wedges associated with different boundary regions, the bulk operator $\phi(x)$ can have distinct representations in the CFT with different spatial support.  

In \cite{Almheiri14} the non-uniqueness of the CFT operator corresponding to the bulk operator $\phi(x)$ was interpreted as indicating that $\phi(x)$ is a logical operator preserving a code subspace of the Hilbert space of the CFT. This code subspace is protected against ``errors'' in which parts of the boundary are ``erased.'' If the boundary operator corresponding to $\phi(x)$ acts on a subsystem of the CFT which is protected against erasure of the boundary region $A^c$, then this operator can be represented in the CFT as an operator supported on $A$, the complement of the erased region. Thus we may interpret the AdS-Rindler reconstruction of $\phi(x)$ on boundary region $A$ as correcting for the erasure of $A^c$; choosing the erased portion of the boundary in different ways leads to different reconstructions of $\phi(x)$. Moreover, operators near the center of the bulk are ``well protected'' in the sense that a large region needs to be erased to prevent their reconstruction, while operators near the boundary can be erased more easily by removing a smaller part of the boundary \cite{Almheiri14}.

We may think of this code subspace as the low-energy sector of the CFT corresponding to a relatively smooth dual classical geometry. All CFT operators are physical, and thus have some bulk interpretation, but the ``logical'' operators are special ones which map low-energy states to other low-energy states. The same logical action can be realized by distinct CFT operators, as these distinct operators act on high-energy CFT states (those outside the code subspace) differently even though they act on low-energy states in the same way. 

\subsection{The physical interpretation of holographic codes}

The error-correcting properties of the AdS/CFT correspondence were motivated in \cite{Almheiri14} by bulk calculations, together with plausibility arguments regarding the CFT.  Our central observation in this paper is that analogous statements are provably true in holographic codes.

We emphasize that in holographic codes the uncontracted bulk legs hanging from each tensor should \textit{not} be thought of as tensor factors in addition to the boundary legs.  Rather the entire physical Hilbert space is spanned by states of the boundary legs only. The bulk legs just provide a way of conveniently describing states in a certain code subspace of this boundary Hilbert space, obtained by feeding states of the bulk legs through the isometry defined by the entire tensor network; this code subspace can be regarded as a simplified model of the low-energy states in a CFT.  

Likewise, operators acting on the dangling bulk indices correspond to nonlocal operators in the boundary theory whose algebra and action on the code subspace resembles what we would expect for the CFT description of how bulk local operators act on low-energy CFT states. When we speak of a ``bulk local operator'' we really mean the nonlocal boundary operator obtained by pushing an operator acting on a dangling bulk index out to the boundary using the isometry defined by the network.

\subsection{Bulk reconstruction from tensor pushing}
We now explain how holographic codes realize the AdS-Rindler reconstruction of figure \ref{rindlerfig}.  
The basic idea is that, instead of using the full isometry of the entire network to push a local bulk operator to the boundary, we can instead successively push it through individual perfect tensors in a manner of our choosing by using the operation of figure \ref{push}. 
We illustrate the reconstruction for two different bulk points of the pentagon code in figure \ref{pentagonpushfig}. Here we use the defining property of perfect tensors --- that the tensor provides a unitary transformation which maps any three legs of the tensor to the complementary set of three legs, and therefore also an isometry mapping any set of three or fewer ``incoming'' legs to any disjoint set of three ``outgoing'' legs. In figure \ref{pentagonpushfig}, each bulk vertex with arrows showing incoming and outgoing directions indicates such an isometry, and the complete set of blue  legs is a product of such isometries, and hence also an isometry. The blue operator on the boundary is obtained by conjugating the blue bulk operator by the blue isometry, and the same applies to the green bulk and boundary operators. In the construction of the isometry, we regard the dangling bulk index on each tensor as an incoming index, and therefore require that no more than two contracted indices are incoming for each blue (or green) tensor. The same blue isometry, then, can be used to push not just the central blue bulk index to the boundary, but also any of the other incoming bulk indices (which are not shown in the figure) on blue tensors. 

\bfig
\includegraphics[height=8cm]{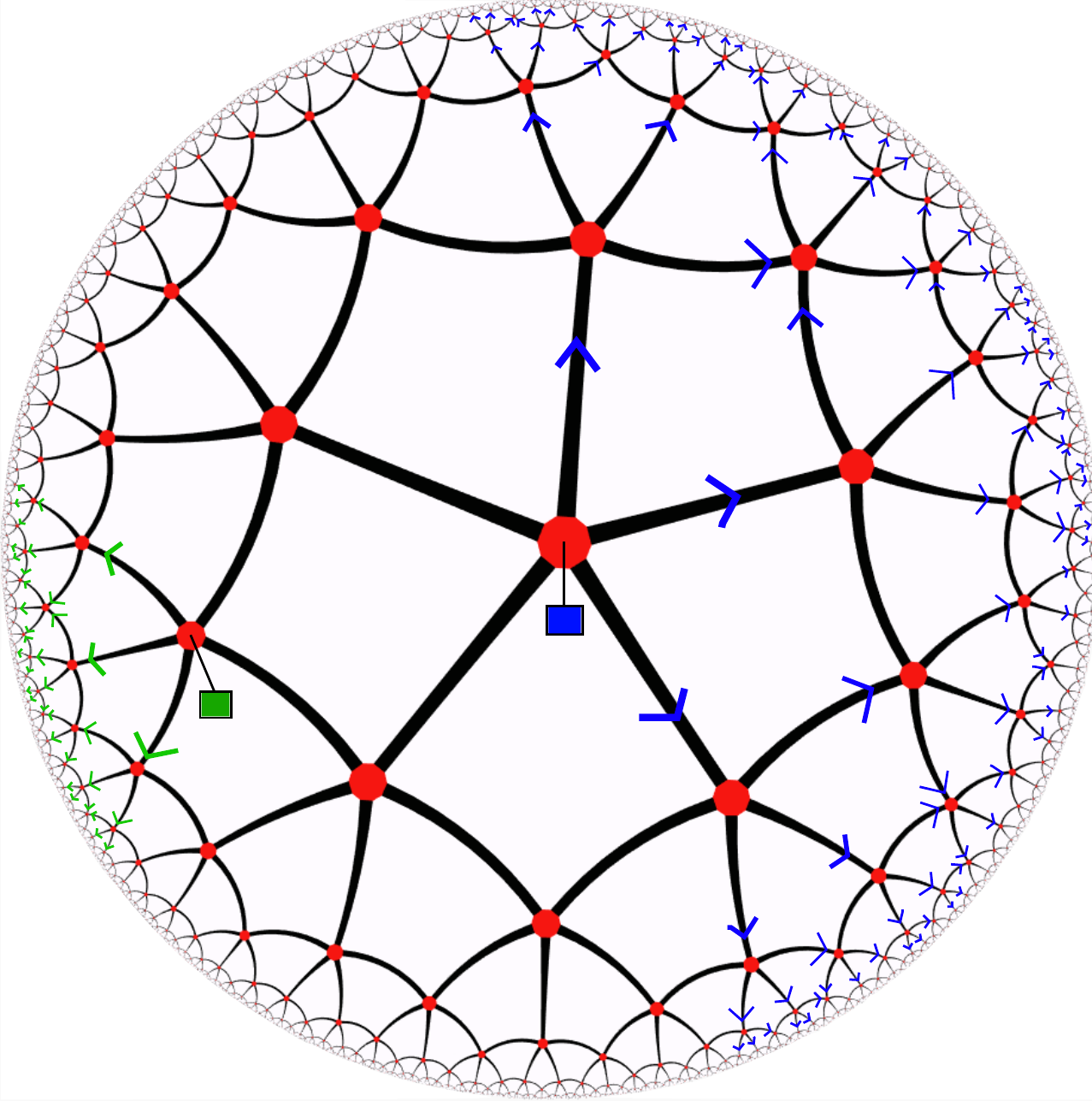}
\caption{Boundary reconstruction of bulk operators.  The blue operator on the central bulk leg is pushed to an operator supported on a fairly large boundary region, while the green bulk operator further from the center is pushed to an operator supported on a smaller boundary region.  Bulk legs for the other tensors are not shown.}\label{pentagonpushfig}
\efig

The boundary operation corresponding to a given bulk local operator manifestly has the non-uniqueness we described in our discussion of the AdS-Rindler reconstruction.  For example, we could move one of the three blue arrows directed outward from the central blue vertex to a different edge, thus reconstructing the central bulk operator on a different boundary region, or we could have sent the green arrows in the opposite direction and reconstructed the green bulk operator on a considerably larger boundary region on the opposite side. No matter which reconstruction we use, the boundary operator is obtained from the isometric embedding of the bulk indices into the code subspace of the boundary Hilbert space, and therefore each reconstructed operator corresponding to a given bulk operator acts on the code subspace in the same way.

In the theory of quantum error-correcting codes, we say that an error is an \textit{erasure} (or equivalently a \textit{located error}) if the set of spins damaged by the error is known, so this information can be used in recovering from the error. Holographic codes also provide protection against errors which act at unknown locations on the boundary, but for the purpose of developing the analogy with the AdS/CFT correspondence we will focus on protecting against erasure. A logical system can be protected against erasure of a set of spins in the code block if the full algebra of logical operations has a realization supported on the complementary set of unerased spins. In AdS/CFT we might only require reconstruction of a subalgebra of the full logical algebra; for example, the pentagon code provides better protection for the degrees of freedom deep within the bulk than for those closer to the boundary. The framework in which a quantum code protects only a subalgebra of the code's full logical algebra has been called \textit{operator algebra quantum error correction} \cite{Kribs2005, Kribs2006, Beny2007, Beny2007a}.
  
\subsection{Connected reconstruction and the causal wedge}

Given a subregion $A$ of the boundary, which bulk local operators can be reconstructed on $A$? This is not an easy question to answer in general, but at least we can give a simple description of a large logical subsystem reconstructable on $A$, namely those logical operators acting on bulk sites which are reachable using the greedy algorithm explained in section \ref{sec:state}.

Recall that the greedy algorithm associates with any boundary region $A$ a greedy geodesic $\gamma_A^\star$ whose boundary matches the boundary of $A$, such that $A$ and $\gamma_A^\star$ enclose a tensor $P_A$ which defines an isometry mapping free bulk legs in $P_A$ together with the legs cut by $\gamma^\star_A$ to $A$. Using this isometry applied to any operator acting on a bulk leg in $P_A$(tensored with the identity acting on all the rest of the isometry's input indices), we may push that logical operator through the isometry to obtain its reconstruction on $A$. Let's call the position of a perfect tensor in the network a \textit{bulk point} and say that the greedy algorithm \textit{reaches} a bulk point if it moves the cut past that tensor, hence using it in the construction of $P_A$

This operator reconstruction procedure can be applied to any boundary region $A$. In the special case where $A$ is connected, it provides a precise analog of the AdS-Rindler reconstruction in holographic codes, which we can formalize with a definition and theorem:
\begin{definition}
Suppose that $A$ is a {\bf connected} boundary region.  The \textbf{causal wedge of $A$}, denoted $\CA$, is the set of bulk points reached by applying the greedy algorithm to $A$.  
\end{definition}
\noindent We then have:
\begin{theorem}\label{theorem:causal-wedge}
Suppose $A$ is a connected boundary region.  Then any bulk local operator in the causal wedge $\CA$ can be reconstructed as a boundary operator supported on $A$.  
\end{theorem}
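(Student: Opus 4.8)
The plan is to produce the boundary reconstruction of a bulk operator $\phi$ supported at a point $x\in\CA$ by conjugating $\phi$ through the isometry $P_A$ that the greedy algorithm builds for $A$, and then to verify that the resulting boundary operator acts on the whole code subspace exactly as $\phi$ does on the bulk legs. This is precisely the ``operator pushing'' of figure~\ref{push}, applied not to the full network isometry but to the sub-isometry $P_A$ associated with the causal wedge.

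First I would assemble the ingredients. Applying the greedy algorithm to $A$ yields the greedy geodesic $\gamma_A^\star$ and the tensor $P_A$ obtained by contracting, along their shared internal legs, all the perfect tensors reached by the algorithm --- by definition these are exactly the tensors of $\CA$. Write $\cH_b$ for the Hilbert space of the dangling bulk legs of the tensors in $\CA$, $\cH_{b'}$ for the dangling bulk legs of the remaining tensors, and $\cH_\gamma$ for the legs cut by $\gamma_A^\star$. As recalled in section~\ref{subsec:multiple-regions}, each local move composes the current cut's map with a perfect tensor at least half of whose legs are already contracted, which is an isometry from its remaining legs --- \emph{including} its dangling bulk leg --- to the advanced cut; since a composition of isometries is an isometry, $P_A$ is an isometry (up to an overall normalization we suppress) from $\cH_b\otimes\cH_\gamma$ to $\HA$. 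In particular, because $x\in\CA$, the leg on which $\phi$ acts is one of the input legs of $P_A$.

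Next I would record the factorization of the full encoding isometry $V:\cH_b\otimes\cH_{b'}\to\HA\otimes\HB$. Since $\gamma_A^\star$ is a cut bounded by $\partial A$, associativity of tensor contraction lets us contract the tensors of $\CA$ into $P_A$, contract the remaining tensors into a linear map $R:\cH_{b'}\to\HB\otimes\cH_\gamma$, and glue the two along $\cH_\gamma$, giving
\be
V=(P_A\otimes I_{\HB})\,(I_{\cH_b}\otimes R).
\ee
Given $\phi$ acting on a dangling bulk leg in $\CA$, I would then define the boundary operator $O':=P_A\,(\phi\otimes I_{\cH_\gamma})\,P_A^\dagger$, which is an operator on $\HA$ and hence supported on $A$. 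Using $P_A^\dagger P_A=I$ (the operator-pushing identity) and the fact that $\phi$, acting on $\cH_b$, commutes with $R$, which acts on the complementary factor $\cH_{b'}$, I would compute
\be
(O'\otimes I_{\HB})\,V=\bigl(P_A(\phi\otimes I)\otimes I_{\HB}\bigr)(I_{\cH_b}\otimes R)=(P_A\otimes I_{\HB})(I_{\cH_b}\otimes R)(\phi\otimes I)=V\,(\phi\otimes I).
\ee
Thus $O'$, supported on $A$, implements on every code state $V|\chi\rangle$ the same action as the bulk operator $\phi$, which is exactly the asserted reconstruction; the freedom in ordering the local moves and in choosing which legs of each tensor in $\CA$ count as ``incoming'' yields many inequivalent choices of $O'$, mirroring the non-uniqueness of the AdS--Rindler reconstruction.

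The nontrivial input is not the pushing step --- that is just $TOT^\dagger T=TO$ iterated --- but the two structural facts one must pin down: that the tensors reached by the greedy algorithm, together with the legs cut by $\gamma_A^\star$ and the boundary region $A$, are precisely the legs of $P_A$ (so the displayed factorization of $V$ holds verbatim and the bulk leg at $x$ really is an input of $P_A$), and that every greedy local move is a genuine isometry even with the dangling bulk leg counted as an input --- a point secured in section~\ref{subsec:multiple-regions} and appendix~\ref{app:PlanarGraphProof} via the ``at least half the legs contracted'' criterion together with the property of figure~\ref{shrink}. Connectedness of $A$ is used only to make $\CA$ a well-behaved bulk region; the pushing argument itself would furnish a reconstruction on $A$ for disconnected $A$ as well, but without the clean geometric interpretation as a causal wedge.
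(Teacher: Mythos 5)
Your proposal is correct and follows essentially the same route as the paper: the greedy algorithm exhibits $P_A$ as an isometry from the dangling bulk legs of $\CA$ together with the legs cut by $\gamma_A^\star$ into $\HA$, and the reconstruction is obtained by pushing $\phi$ (tensored with the identity on the remaining input legs) through this isometry. Your explicit factorization $V=(P_A\otimes I_{\HB})(I_{\cH_b}\otimes R)$ and the check $(O'\otimes I_{\HB})V=V(\phi\otimes I)$ merely spell out what the paper leaves implicit in the operator-pushing identity of figure~\ref{push}, so the argument matches the paper's in substance.
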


\noindent We could have formulated a geometric notion of the causal wedge, defining it as the set of bulk points enclosed between $A$ and the actual minimal geodesic $\gamma_A$, rather than the greedy geodesic. This geometrical definition is closer in spirit to how the term ``causal wedge'' has been used in the context of AdS/CFT. But we prefer this greedy notion of causal wedge instead, so that Theorem \ref{theorem:causal-wedge} is correct as stated. 

As figure \ref{pentagonpushfig} illustrates, bulk operators near the boundary can be reconstructed on smaller connected regions than bulk operators near the center, just as for the AdS-Rindler reconstruction in AdS/CFT. 
It is natural to wonder how large the connected region $A$ should be for the operator at the center of the bulk to be reconstructable on $A$. 
This question is studied for the pentagon code in appendix \ref{App:CountingTensors} by investigating whether the greedy algorithm applied to $A$ reaches the central tensor in the network. We find that a connected region of $N_A$ boundary spins necessarily allows reconstruction of all operators acting on the center provided that $A$ covers a sufficiently large fraction of the boundary, namely
\be\label{cthres}
f_A \equiv \frac{N_A}{N_{\rm boundary}}>\frac{5+\sqrt{5}}{10}\equiv f_c\approx .724.
\ee
The analogous result for the AdS-Rindler reconstruction is $f_c=1/2$, but the discreteness of our lattice introduces some additional overhead.
It turns out, though, that because the tensor network is not invariant under translations of the boundary, whether the connected region $A$ allows reconstruction of the center depends not just on the size of $A$ but also on its location. In appendix \ref{App:CountingTensors} we show that, while the condition \eqref{cthres} is needed to guarantee reconstruction of the central operator on an arbitrary connected region, there are some connected regions with $f_A =\frac{N_A}{N_{boundary}}=\frac{3+\sqrt{5}}{10}\approx .524$ that suffice for the reconstruction.

\subsection{Disconnected reconstruction and the entanglement wedge}
Now let's consider what bulk operators can be constructed on boundary regions with more than one connected component. First we extend the definition of the causal wedge to disconnected regions:
\begin{definition}
Suppose that $A$ is a boundary region, which is a union of connected components $A_1,A_2,\ldots$.  The \textbf{causal wedge of $A$}, denoted $\CA$, is defined as the union of the causal wedges of the components of $A$, $\CA=\bigcup_i \mathcal{C}[A_i]$.
\end{definition}
\noindent Since we have already established that any bulk operator in $\mathcal{C}[A_i]$ is reconstructable on $A_i$ if $A_i$ is connected, it follows immediately from this definition that even for disconnected regions any bulk operator in $\CA$ is reconstructable on $A$.

The causal wedge contains bulk operators which can be reconstructed when we apply the greedy algorithm to the connected components of $A$ one at a time. But the greedy algorithm might advance further into the bulk, beyond the causal wedge, when applied to $A$ instead. Specifically, there could be a $2n$-index tensor just beyond the causal wedge of $A$ with $n$ or more legs crossing the union of greedy geodesics $\gamma_{A_i}^\star \cup \gamma_{A_j}^\star$, even though fewer than $n$ legs cross $\gamma_{A_i}^\star$ or $\gamma_{A_j}^\star$ individually. Then applying the greedy algorithm to $A_i \cup A_j$ moves the cut past this tensor. This step may then render further tensors eligible for inclusion, and in fact we will see that sometimes the greedy algorithm can move far beyond the causal wedge $\CA$

\bfig
\includegraphics[height=7cm]{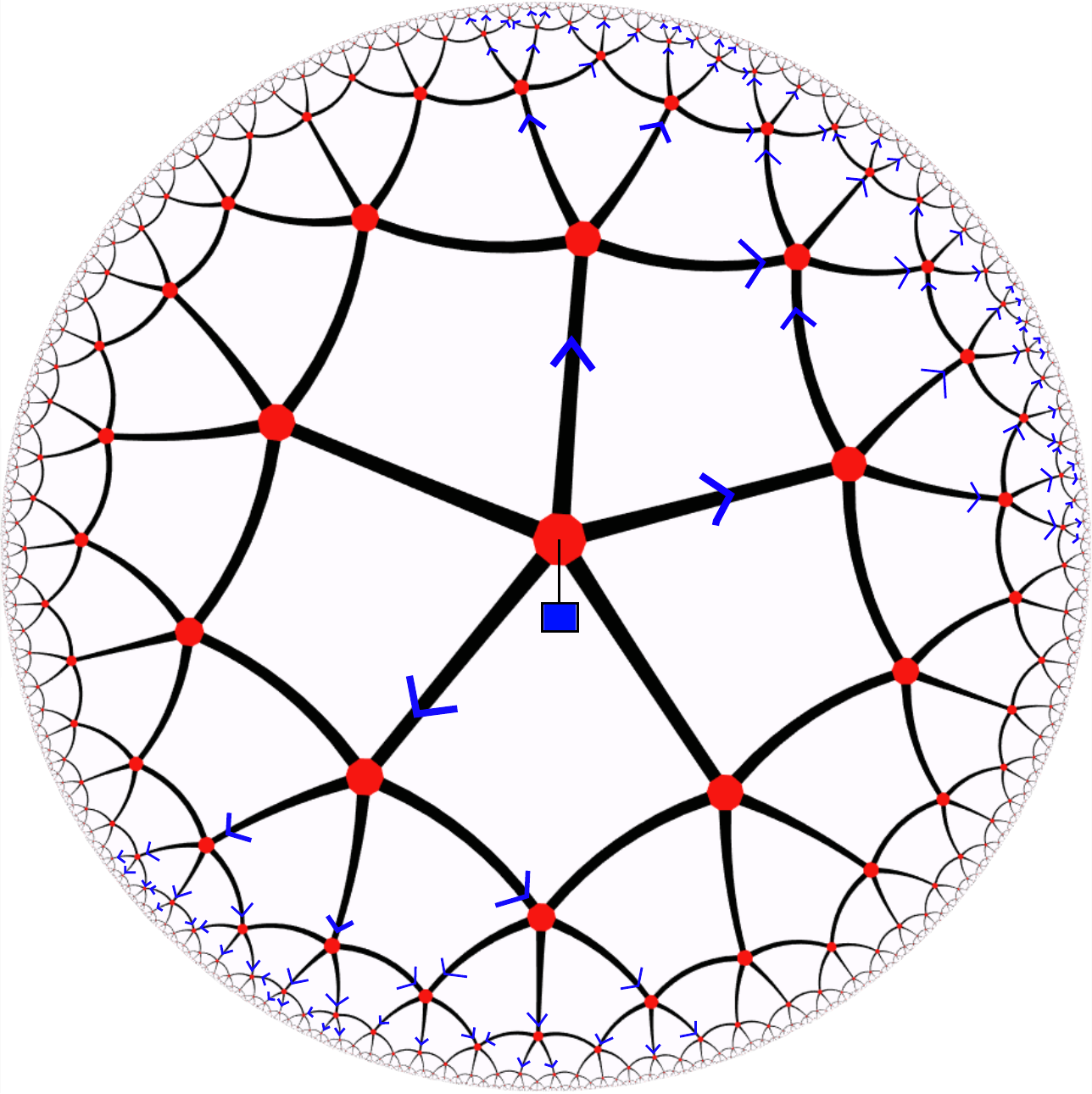}
\caption{Disconnected reconstruction of a central operator beyond the causal wedge.  Each of two separate connected boundary regions is too small for reconstruction of the central operator, yet the reconstruction is possible on the union of the two regions. In this example the greedy algorithm reaches the central tensor when applied to both connected components at once, but not when applied to either component by itself.}\label{pentagondcfig}
\efig
 
A concrete first example illustrating reconstruction of a bulk operator outside the causal wedge is shown in figure  \ref{pentagondcfig}. In this example, $A$ is the union of two connected components $A_1$ and $A_2$, and the full operator algebra of the central tensor can be pushed to either  $A_1^c$ or $A_2^c$. This implies that no nontrivial operator acting on the central tensor can be pushed to either $A_1$ or $A_2$. For every nontrivial operator $\phi$ in the algebra there is another operator $\phi'$ which does not commute with $\phi$. If $\phi'$ can be pushed to $A_1^c$, then surely $\phi$ cannot be pushed to $A_1$, because operators supported on complementary regions must commute. The same argument applies to $A_2$. Yet the greedy algorithm applied to $A$ reaches the central tensor, showing that its full operator algebra can be pushed to the union of $A_1$ and $A_2$.

That operators beyond the causal wedge of $A$ can be reconstructed on $A$ has deep potential implications for AdS/CFT. Perturbative gravity techniques like the AdS-Rindler reconstruction can be used to construct bulk operators in the causal wedge but not beyond. 
Yet there has been speculation in the literature that reconstruction should be possible in a larger region, the \textit{entanglement wedge} \cite{Headrick2014}, see also \cite{Wall2012, Czech2012, Jafferis2014}.  
In AdS/CFT, the entanglement wedge $\EA$ is defined by first finding the minimal area surface $\gamma_A$ used in the RT formula, and then drawing a codimension one (\textit{i.e.}, two-dimensional for ${\rm AdS}_3$) spatial slice in the bulk whose only boundaries are $\gamma_A$ and $A$.  
The bulk domain of dependence of this slice is then defined as the entanglement wedge $\EA$.  The entanglement wedge contains the causal wedge, but can be much larger in some cases. 
Figure \ref{transitionfig} illustrates a simple example highlighting the distinction between the causal and entanglement wedges.\footnote{In excited states where the geometry deviates from pure AdS, there are differences between the entanglement wedge and the causal wedge even for connected boundary regions.  
We will not try to capture this in our toy models, since without a theory of dynamics we cannot capture the full spacetime definitions of these regions.  Our discussion is limited to the case where we stick with states near the vacuum, in which case $A$ needs to be disconnected for its causal wedge and entanglement wedge to differ.} 

\bfig
\includegraphics[height=5cm]{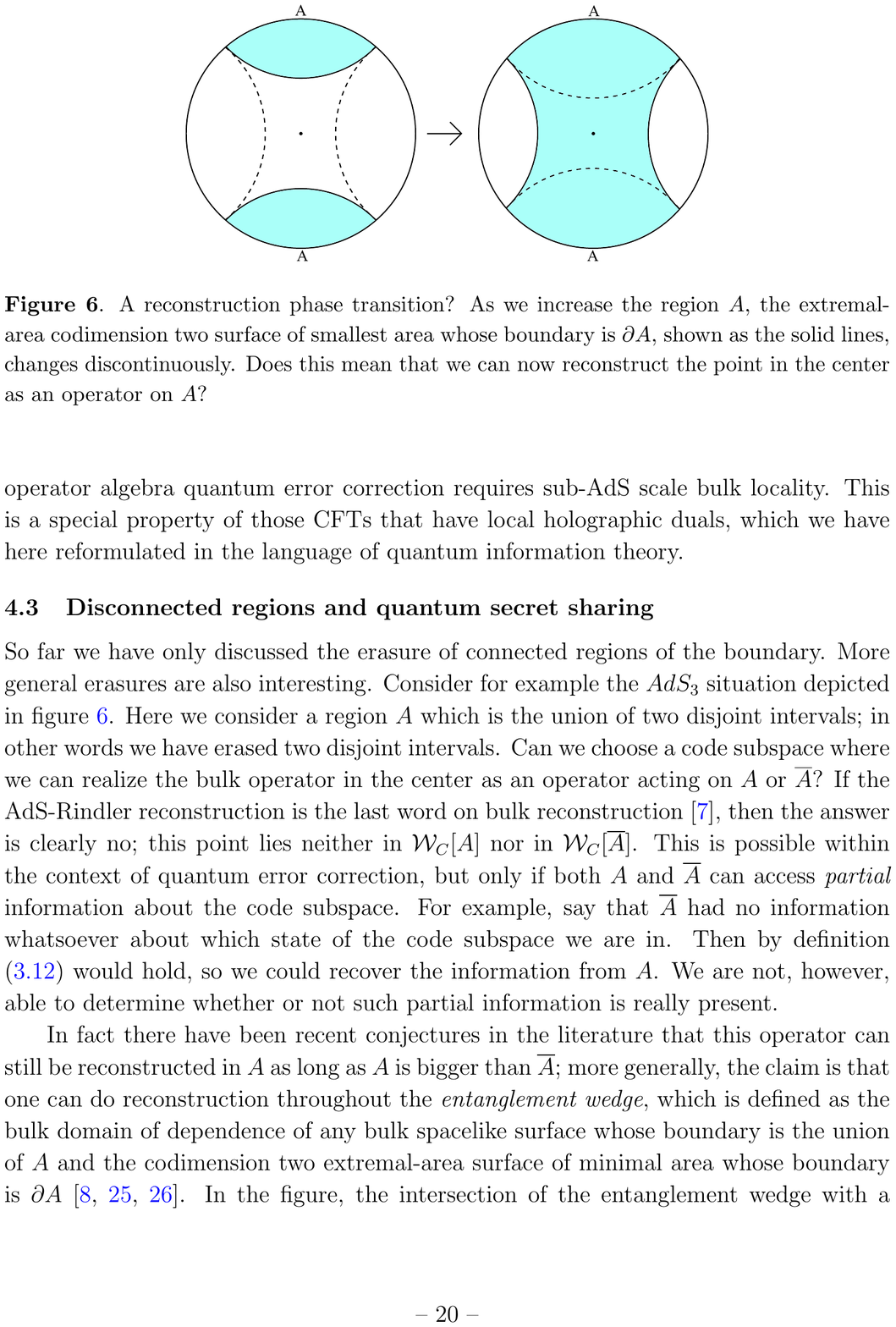}
\caption{The intersection of the entanglement wedge $\EA$ with a bulk time-slice, in the case where $A$ has two connected components. Minimal geodesics in the bulk are solid lines.  
When $A$ is smaller than $A^c$, we have the situation on the left and the causal wedge agrees with the entanglement wedge.  When $A$ is bigger, however, the minimal geodesics switch and the entanglement wedge becomes larger. 
In particular the point in the center lies in the $\EA$ but not $\CA$.}\label{transitionfig}
\efig

We would like to investigate whether bulk operators in the entanglement wedge are reconstructable for holographic codes, but how should the entanglement wedge be defined? A definition of $\EA$ close to that used in AdS/CFT is:
\begin{definition}
Suppose $A$ is a (not necessarily connected) boundary region.  The \textbf{geometric entanglement wedge} of $A$ is the set of bulk points in the bulk region bounded by $A$ and $\gamma_A$, where $\gamma_A$ is the minimal bulk geodesic whose boundary matches the boundary of $A$. If there is more than one minimal bulk geodesic, $\gamma_A$ is chosen to make the geometric entanglement wedge as large as possible.
\end{definition}

The main motivation for the conjecture that operators in the entanglement wedge are reconstructable in AdS/CFT comes from the validity of the RT formula for disconnected regions. (Additional evidence was given in \cite{Almheiri14} based on a typicality argument.)  But we have already seen above that the RT formula does not hold exactly in holographic codes, so we should not necessarily expect the entanglement wedge conjecture to hold in detail for the geometric entanglement wedge. Instead, as in defining the causal wedge, we prefer a definition that makes  the reconstructability manifest:

\begin{definition}
Suppose $A$ is a (not necessarily connected) boundary region.  The \textbf{greedy entanglement wedge} of  $A$, denoted $\EA$, is the set of bulk points reached by applying the greedy algorithm to all connected components of $A$ simultaneously.
\end{definition}
\noindent
With this definition, bulk local operators in $\EA$ are automatically reconstructable in $A$, using the isometry defined by $P_A$ to push these operators to the boundary. The greedy algorithm also ensures that the interior boundary of $\EA$ is the greedy geodesic $\gamma_A^\star$, though not necessarily the minimal geodesic $\gamma_A$. 

A drawback of this definition is that $\EA$ includes only the bulk local operators which can be reconstructed on $A$ using the greedy algorithm; it might miss additional bulk operators which can be reconstructed by other methods. 
In fact we can find examples of codes such that some bulk local operators lying outside $\EA$ \textit{can} be reconstructed on $A$, as discussed in appendix \ref{App:BeyondGreedy}. 
These codes typically have special properties, such as symmetries, which make the reconstruction possible. 
If we know nothing more about the perfect tensors used to construct the code, aside from their perfection, we have no general reason to expect that bulk operators far outside the greedy entanglement wedge will be reconstructable.
That said, we confess that we lack a complete understanding of when reconstruction is possible, and hope that further progress on this issue can be achieved in future work. 

\subsection{Erasure threshold}

If the entanglement wedge conjecture is true for AdS/CFT, if holographic codes faithfully model the entanglement structure of boundary theories with classical gravitational duals, and if the greedy entanglement wedge is a reasonable stand-in for the entanglement wedge, then we should be able to find holographic codes and boundary regions such that the greedy entanglement wedge reaches far outside the causal wedge. 
In this section we provide examples which confirm this expectation. 
One way to formalize this is to choose $A$ to be a randomly chosen  set of boundary spins, whose size is a specified fraction of the total boundary. 
The geometry of the hyperbolic plane suggests that, if $A$ is large enough, the causal wedge $\CA$ will stick close to the boundary, yet the entanglement wedge $\EA$ reaches the center of the bulk with high probability; we illustrate this in figure \ref{fig:EntanglementvsCausal}.  
We will see that not all holographic codes have this property, but we are able to provide concrete examples that do.

\bfig
\subfloat[Shallow causal wedge]{
	\includegraphics[width=0.35\linewidth]{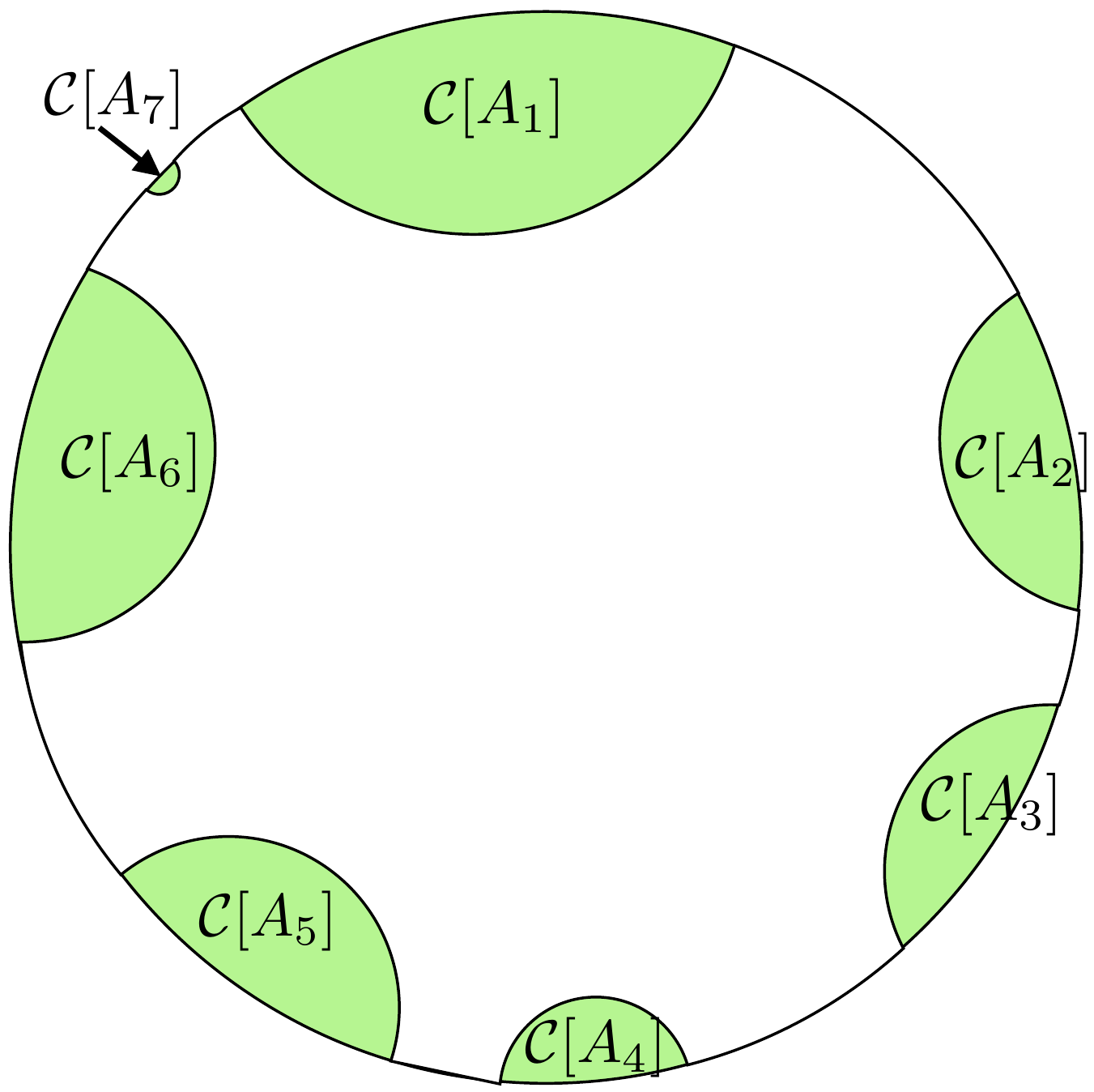}\hspace{1cm}
	\label{fig:ShallowCausalWedge}}
\subfloat[Deep entanglement wedge]{ 
 	\includegraphics[width=0.35\linewidth]{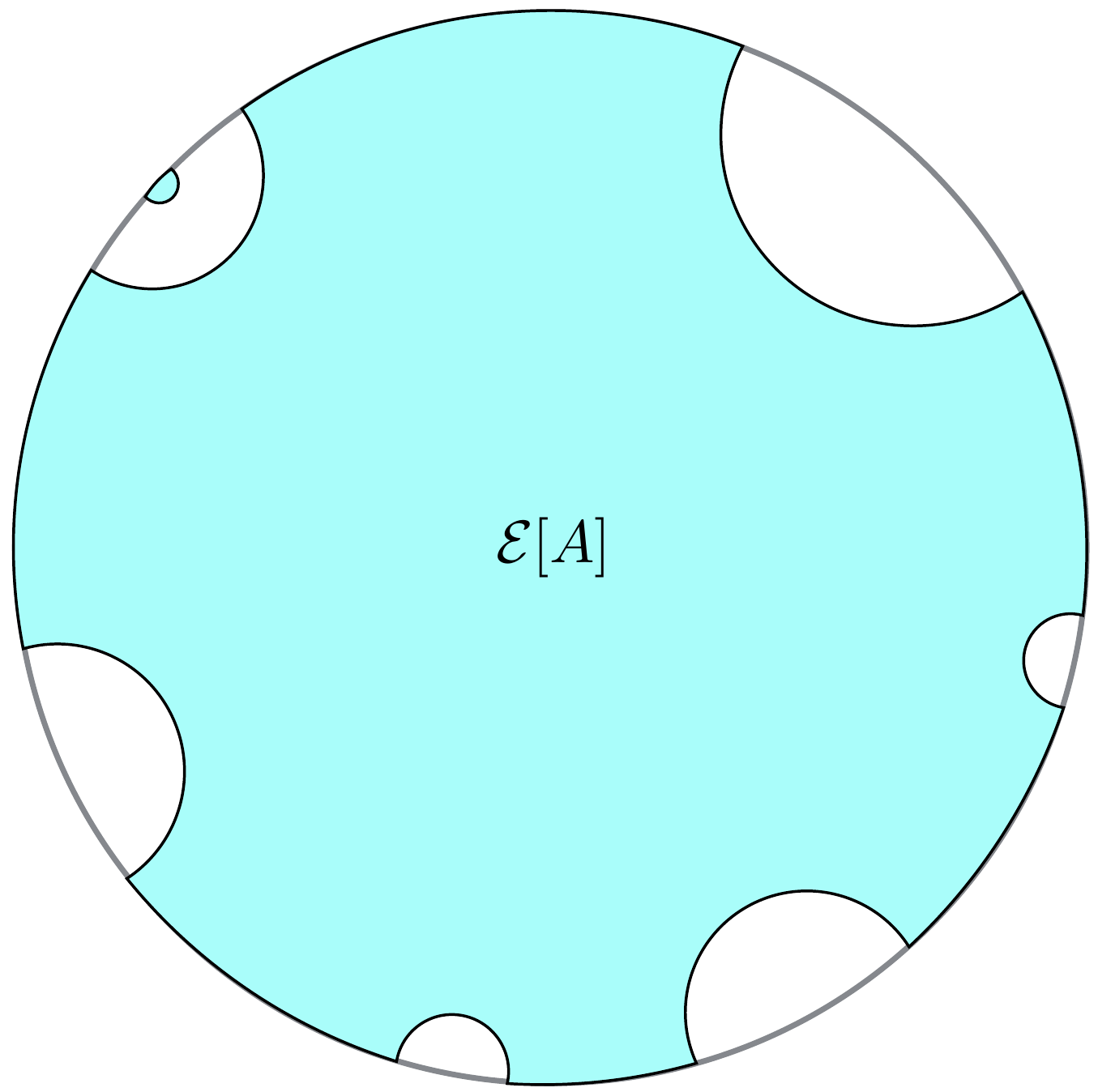}
 	\label{fig:DeepEntanglementWedge}}
\caption{ (a)  When a boundary region $A$ is partitioned into many connected components it may have a very shallow causal wedge $\cC[A]$ if each connected component is small.
(b) In contrast, if $A$ comprises a sufficiently large fraction of the boundary, its entanglement wedge $\cE[A]$ will extend deep into the bulk.
}
\label{fig:EntanglementvsCausal}
\efig

Another, perhaps better, way to formulate this case is to imagine a probabilistic noise model which acts independently (without any noise correlations) on each of the physical boundary spins, where each spin is either erased with probability $p$ or left untouched with probability $1-p$. If $p$ is small, the set $A$ of unerased boundary spins breaks into many connected islands, where a typical island contains $O(1/p)$ spins and has a causal wedge which reaches into the bulk by only a constant distance. We can show, though, that if the holographic code is properly chosen and the erasure probability $p$ is less than a \textit{threshold value} $p_c$, then $\EA$ contains the central bulk spin with a success probability deviating from one by an amount which becomes \textit{doubly exponentially small} as the radius of the bulk increases. 

Which codes have an erasure threshold? One necessary requirement is that the code must have a distance that increases with the system size. For the purpose of reconstructing the central tensor in the bulk, this means that there should not be any logical operator supported on a constant number of boundary spins which acts non-trivially on the central bulk index. That's because erasure of any constant number of spins occurs with a nonzero constant probability, and recovery from the erasure error is not possible if a nontrivial logical operator has support on the erased qubits. 

The pentagon code fails to fulfill this necessary condition. To illustrate the problem, it is helpful to consider first a simpler code, the ``triangle code'' constructed by contracting four-index perfect tensors, where each leg is a 3-level spin, a \textit{qutrit}. Each triangle in the bulk has a dangling bulk index, and the code is constructed as a tensor network forming a tree, the Bethe lattice; each triangle is contracted with one triangle closer to the center and two triangles further from the center, as shown on the left side of figure \ref{fig:BetheLattice}. (Qi's model~\cite{Qi13} is based on a tensor network with a similar structure.) One way to describe the greedy algorithm is to say that it propagates erasures from the boundary toward the center of the bulk --- the inward directed leg of a triangle is erased if either of its outward directed legs is erased, and the central triangle can be reconstructed only if at least two of that triangle's legs are unerased.

\bfig
\subfloat[Triangle code]{
	\includegraphics[height=6cm]{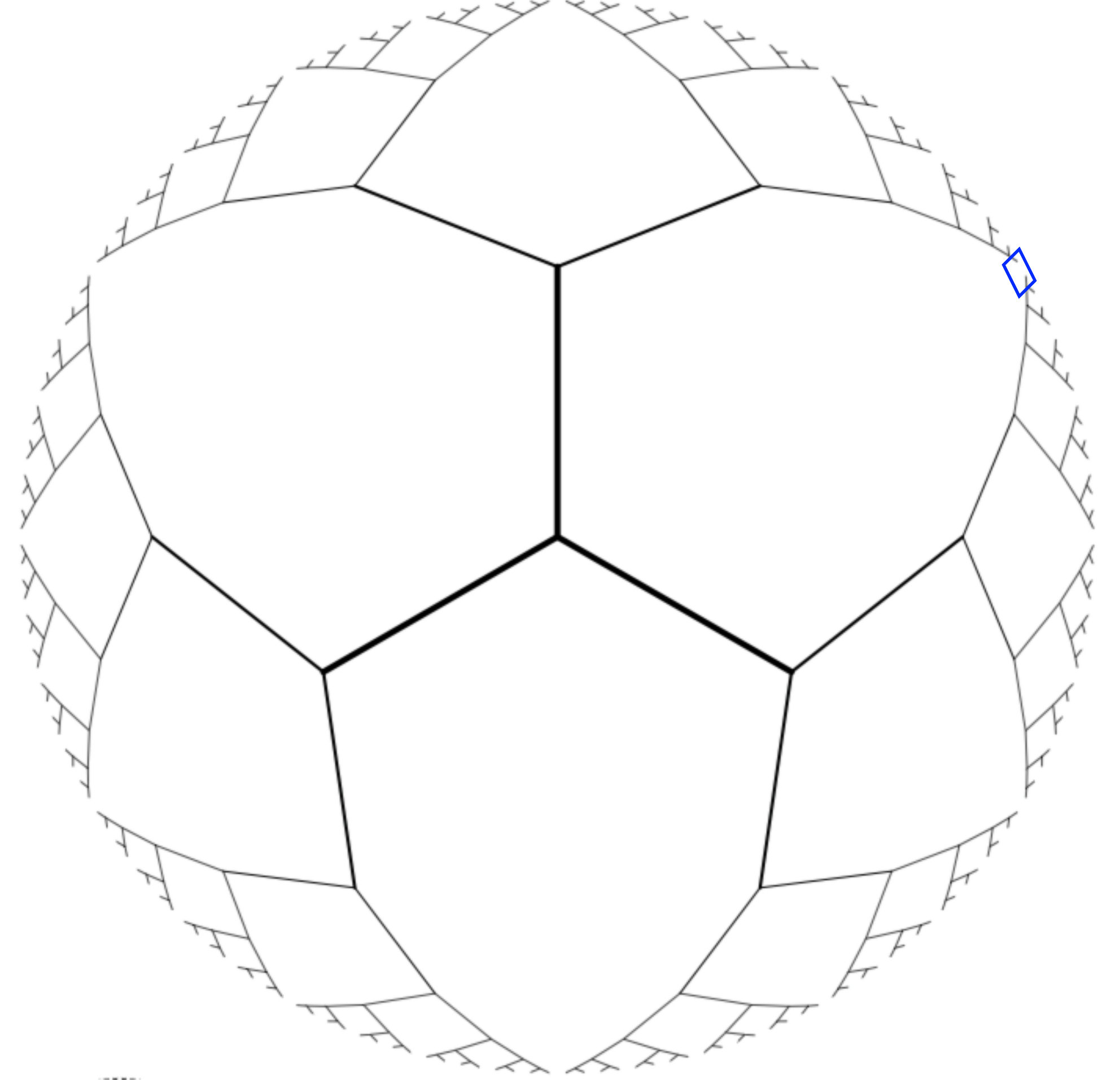}\hspace{1cm}
	\label{fig:BetheLattice}}
\subfloat[Pentagon code]{ 
 	\includegraphics[height=6cm]{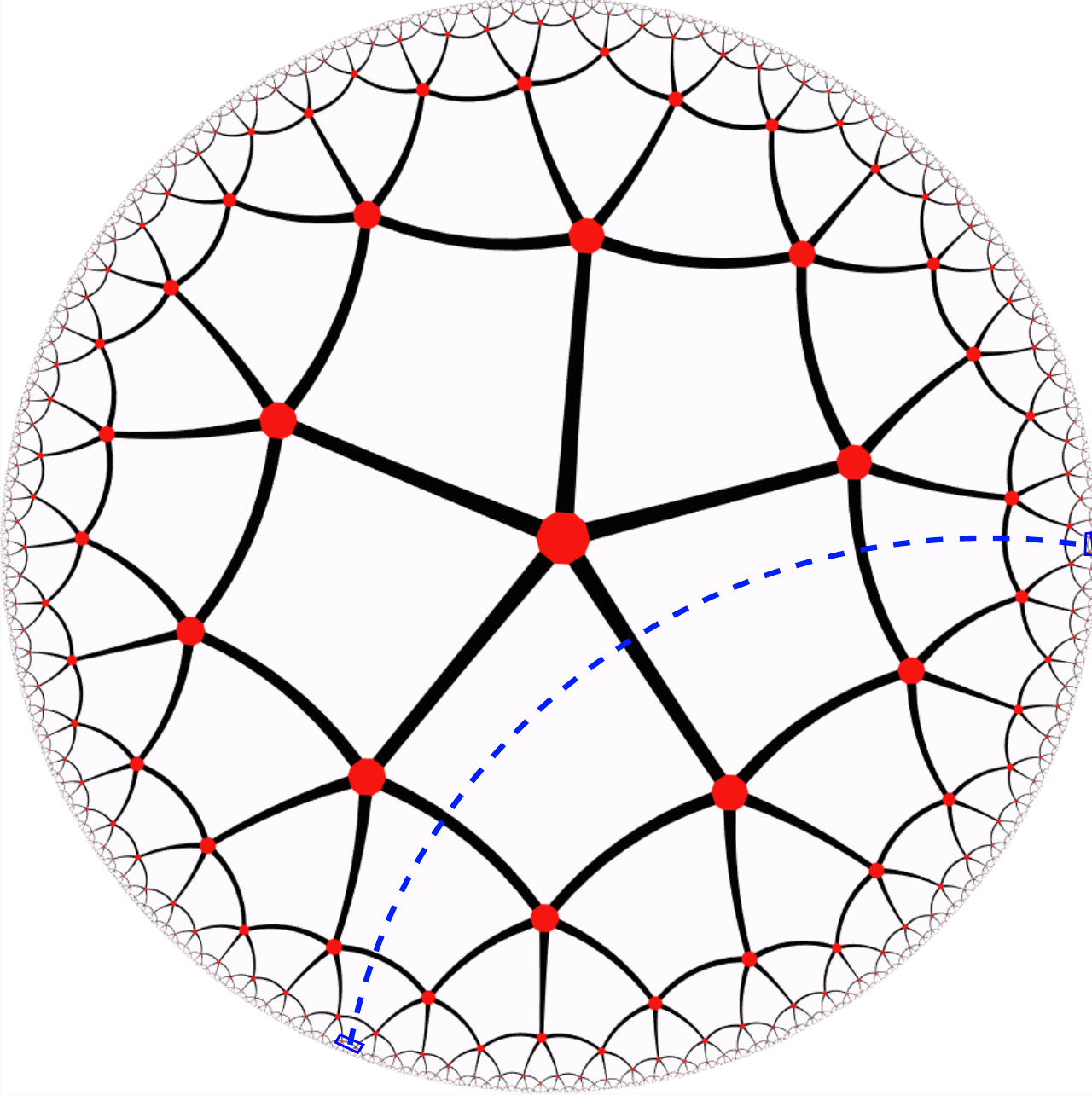}
 	\label{fig:benipoints}}
\caption{Dangerous small erasures for the triangle (a) and pentagon codes (b).  
In the triangle code erasing two boundary spins, boxed in blue, can prevent reconstruction of the central tensor.  
In the pentagon code erasing four spins can prevent the reconstruction.}
\label{fig:lowWeightTrouble}
\efig

It is easy, then, to prevent the greedy algorithm from reaching the center --- only two spins need to be erased. A single erasure on the boundary propagates all the way up to the center of the network, erasing one of the central triangle's legs. A single erasure on a different branch of the tree propagates up to another of the central triangle's legs, blocking the reconstruction of the central tensor on the remaining unerased spins. 

The greedy algorithm fails for a good reason. 
As described in appendix \ref{sec:qutritcode}, The logical algebra for the three-qutrit code represented by a single triangle is generated by logical operators of the form $\bar X =X\otimes X^{-1} \otimes I$, where $X$ is a generalized Pauli operator; 
in fact the code is symmetric under permutation of the three qutrits, so we can choose $X$ and $X^{-1}$ to act on any two of the three qutrits without changing the operator's action on the code space. 
Now choose a path through the Bethe lattice which begins on one leaf, travels to the center, exits the center on a different branch, and finally reaches another leaf on that branch. 
Apply the operator $\bar X$ to each of the logical bulk indices visited by this path. 
Then for each leg along the path the $X$ from the triangle on one side cancels the $X^{-1}$ coming from the triangle on the other side, except for one uncanceled $X$ on one leaf and one $X^{-1}$ on the other. 
We conclude that the code admits a logical operator acting nontrivially on the central triangle which has support on only two boundary spins. 
That is why the central bulk spin can be damaged by erasing only two boundary spins. 

For the pentagon code the situation is only slightly better. 
If we pick just four spins at the positions shown on the right side of figure \ref{fig:benipoints}, then the greedy algorithm applied to the complement of these four spins never absorbs any of the tensors adjacent to the dashed line. 
This failure is just a property of the graph defining the holographic code, but once again we can understand the failure by noting that there is a logical operator acting on the central pentagon supported on these four boundary spins, so erasing these four spins prevents central bulk operators from being reconstructed on their complement. Now we may consider a product of bulk logical operators acting on the pentagons just above and just below the dashed line. 
We use the logical operator of the five-qubit code $\bar X = - Z\otimes X\otimes Z\otimes I\otimes I$ described in appendix \ref{sec:5qubit}, where $X$ and $Z$ are Pauli operators (which square to one), and the operator's action is unchanged by cyclic permutations of the five qubits.
Now $X$'s applied from either side of the cut cancel on the legs crossed by the cut, and $Z$'s applied from either side cancel for the legs just above and below the cut, leaving only four uncanceled $Z$'s acting on the boundary qubits. 

Of course, uncorrectable damage deep inside the bulk caused by erasing just a few boundary spins is not at all what we expect in AdS/CFT, where according to the entanglement wedge conjecture we should always be able to reconstruct the center of the bulk from a sufficiently large fraction of the boundary, whatever its shape or location. To obtain a better model for AdS/CFT we should modify the holographic code, thinning out the algebra of bulk logical operators, and hence reducing the rate of the code. 

A code that works better can be obtained by a simple modification of the pentagon code --- the modified tensor network is constructed by starting with a pentagon at the center and adding alternating layers of hexagons (with no dangling bulk indices) and pentagons (each with one bulk index) as the network grows radially outward.
The associated network is depicted in figure \ref{fig:PentagonHexagon}.
This change suffices to remove all the constant-weight logical operators acting nontrivally on the center and in fact we can prove that this pentagon/hexagon code has an erasure threshold. Numerical studies show that erasure can be corrected by the greedy algorithm with high success probability for $p \le p_{c}^{\rm greedy} \approx 0.26$;  the erasure threshold $p_c$ achieved by the optimal recovery method might be higher than $p_{c}^{\rm greedy}$ if the tensors have further special properties aside from just being perfect.

\bfig
\subfloat[Pentagon/Hexagon code]{
\includegraphics[height=6cm]{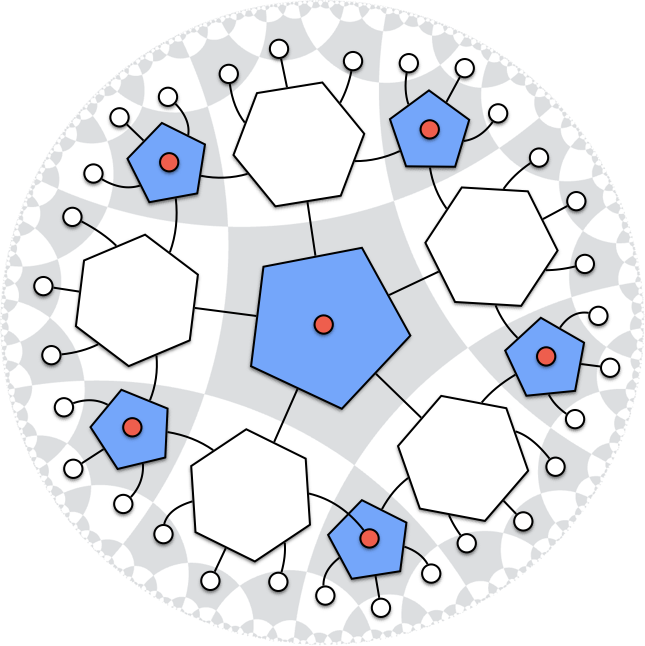}\hspace{1cm} 
\label{fig:PentagonHexagonCode}}
\subfloat[One qubit code]{
\includegraphics[height=6cm]{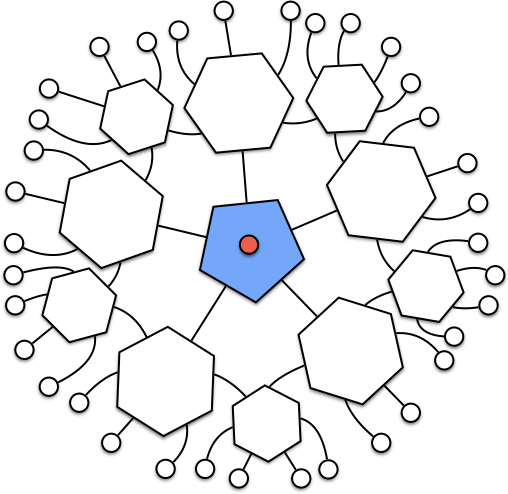}
\label{fig:1qubitHexagon}}
\caption{Tensor networks for holographic pentagon/hexagon codes with erasure thresholds, where neighboring polygons share contracted indices. In the network shown on the left, pentagons and hexagons alternate on the lattice; each pentagon carries one dangling bulk index, and hexagons carry no bulk degrees of freedom. The logical qubit residing on the central pentagon is well protected against erasure if the erasure probability on the boundary is below the threshold value $p_c$. 
In the network on the right, there is just a single bulk qubit located at the center; the rest of the network is similar to the holographic state constructed from hexagons only. 
}\label{fig:PentagonHexagon}
\efig

Since our main interest is in the reconstruction of the center of the bulk, in appendix \ref{App:ErasureThreshold} we study a code for which the only logical index resides at the center, also shown in figure \ref{fig:PentagonHexagon}. 
This code is almost the same as the holographic state obtained by contracting six-leg perfect tensors (hexagons), except that the tensor network contains one pentagon at the center; we therefore call it the \textit{single-qubit hexagon code}. We prove the existence of an erasure threshold for this code, and also derive an analytic lower bound on the threshold erasure rate $p_c \ge 1/12$. 
Numerical evidence indicates that the threshold is actually quite close to $p_c = 1/2$.

The lower bound on the threshold is derived using a simplified and less powerful version of the greedy algorithm, the hierarchical recovery method, which begins at the boundary and proceeds inward toward the center of the bulk. 
A tensor at level $j+1$ of this hierarchy is connected to at least four tensors at level $j$, and the level-$(j+1)$ tensor is erased if two or more of its level-$j$ neighbors are erased. 
The proof proceeds by recursively deriving an upper bound on the erasure probability $p_j$ at level $j$, finding
\begin{align}
p_j \leq p_c \left(\frac{p}{p_c}\right)^{\lambda^j},
\end{align}
where $p_c= 1/12$ and $\lambda = \frac{1+\sqrt{5}}{2}$. 
Thus the erasure probability for the central tensor drops doubly exponentially with the radius of the bulk if $p< p_c$, which means that the central tensor can be reconstructed on the set of unerased boundary qubits with very high probability.

A tricky aspect of the proof is that, because a single level-$j$ tensor couples to two level-$(j+1)$ tensors, there are noise correlations which propagate from level to level. Fortunately, the hyperbolic geometry controls the spread of correlations, making the analysis manageable. In fact, correlations beyond nearest neighbors never arise. This is one advantage of using the hierarchical recovery method rather than the greedy algorithm. A similar proof strategy may also be applied to other holographic codes. 

\subsection{Holographic stabilizer codes}\label{subsec:stabilizer}

Stabilizer codes have been extensively studied in quantum coding theory, and are often used in applications to fault-tolerant quantum computing \cite{Gottesman2009}. Here we describe how to construct a family of holographic codes which are also stabilizer codes. We introduce the stabilizer formalism to pave the way for section \ref{subsec:enough}, where we study some geometrical properties of holographic stabilizer codes. 

Stabilizer codes can be defined for higher-dimensional spins as well, but here we will assume the spins are qubits for simplicity. A \textit{Pauli operator} acting on $n$ qubits is a tensor product of Pauli matrices, that is, one of the $4^n$ operators contained in the set
\be
\{I,X,Y,Z\}^{\otimes n}
\ee
where $I$ is the $2\times 2$ identity matrix and $X, Y, Z$ are the $2\times 2$ Pauli matrices (often denoted $\sigma_x,\sigma_y,\sigma_z$). We use $[[n,k]]$ to denote a quantum code with $k$ logical qubits embedded in a block of $n$ physical qubits. We say that an $[[n,k]]$ code is a \textit{stabilizer code} (also called an \textit{additive} quantum code), if the code space can be completely characterized as the simultaneous eigenspace of $n-k$ commuting Pauli operators. These commuting Pauli operators are called the code's \textit{stabilizer generators} because they generate an abelian group called the code's \textit{stabilizer group}. The special case of a $k=0$ stabilizer code is called a \textit{stabilizer state}. We say that an $n$-index tensor is a \textit{stabilizer tensor} if the corresponding $n$-qubit state is a stabilizer state. 

For example, the six-index perfect tensor is a perfect stabilizer tensor, and holographic codes defined by tiling a hyperbolic geometry with pentagons are stabilizer codes. More generally, we may formulate the following theorem: 

\begin{theorem} \label{thm:stabilizer}
Consider a holographic code defined by a contracted network of perfect stabilizer tensors, and suppose that the greedy algorithm starting at the boundary reaches the entire network. Then the code is a stabilizer code. 
\end{theorem}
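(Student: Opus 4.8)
The plan is to pass to the channel--state picture of the encoding isometry. Because the greedy algorithm starting at the boundary reaches the entire network, the tensor network defines an isometry $V:\mathcal{H}_{\rm bulk}\to\mathcal{H}_{\rm bdy}$, and the code space is $C=\mathrm{image}(V)$. Regarding \emph{every} leg of the network as physical---the dangling bulk legs together with the boundary legs---the same tensor network represents a pure state $|V\rangle=2^{-N_{\rm bulk}/2}\sum_j |j\rangle_{\rm bulk}\otimes V|j\rangle_{\rm bdy}$ on the $N_{\rm bulk}+N_{\rm bdy}$ qubits, where $\{|j\rangle\}$ is the computational basis of $\mathcal{H}_{\rm bulk}$.

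The first step is to show that $|V\rangle$ is a stabilizer state. Every building block is by hypothesis a perfect stabilizer tensor, i.e.\ a stabilizer state on its $2n$ legs, and contracting a pair of legs amounts to tensoring two such states and projecting the paired legs onto a Bell pair---a rank-one stabilizer projection. Applying a stabilizer projector to a stabilizer state again yields a stabilizer state whenever the result is nonzero (which it is here, since $V$ is a nonzero isometry), and a stabilizer state that factorizes across the Bell-paired legs has stabilizer factors; iterating over all contracted edges shows $|V\rangle$ is a stabilizer state. Let $\mathcal{S}$ denote its stabilizer group, an abelian group of $N_{\rm bulk}+N_{\rm bdy}$ independent Hermitian Paulis with $-I\notin\mathcal{S}$.

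Next I identify the code's stabilizer group and count its generators. Let $\mathcal{S}_{\rm code}$ be the set of boundary Paulis $Q$ with $I_{\rm bulk}\otimes Q\in\mathcal{S}$. Comparing coefficients of the orthonormal vectors $|j\rangle$ in the equation $(I_{\rm bulk}\otimes Q)|V\rangle=|V\rangle$ shows that it holds precisely when $QV=V$, i.e.\ when $Q$ fixes every code state; hence $\mathcal{S}_{\rm code}$ is exactly the group of Paulis acting as the identity on $C$, it is abelian, and $-I\notin\mathcal{S}_{\rm code}$. For the count, note that $V^{\dagger}V=I$ forces the bulk reduced state $\mathrm{Tr}_{\rm bdy}|V\rangle\langle V|=2^{-N_{\rm bulk}}I_{\rm bulk}$ to be maximally mixed, so by the Schmidt decomposition the boundary reduced state $\rho_{\rm bdy}=\mathrm{Tr}_{\rm bulk}|V\rangle\langle V|$ has rank $2^{N_{\rm bulk}}$. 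On the other hand, for a stabilizer state the reduced state on the boundary qubits is $\rho_{\rm bdy}=2^{-N_{\rm bdy}}\sum_{Q}Q$, the sum running over all $Q$ with $I_{\rm bulk}\otimes Q\in\mathcal{S}$, which equals $2^{r}$ times the projector onto the joint $+1$-eigenspace of $\mathcal{S}_{\rm code}$, where $r$ is the number of independent generators of $\mathcal{S}_{\rm code}$; hence $\mathrm{rank}(\rho_{\rm bdy})=2^{N_{\rm bdy}-r}$. Equating gives $r=N_{\rm bdy}-N_{\rm bulk}$.

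The conclusion is then immediate: the joint $+1$-eigenspace of the $N_{\rm bdy}-N_{\rm bulk}$ commuting Pauli generators of $\mathcal{S}_{\rm code}$ has dimension $2^{N_{\rm bulk}}$, which equals $\dim C$ since $V$ is injective, and $C$ is contained in this eigenspace because every element of $\mathcal{S}_{\rm code}$ fixes $C$ pointwise; therefore $C$ coincides with that eigenspace, which is the definition of a stabilizer code. I expect the main obstacle to be the first step---verifying cleanly that contracting perfect stabilizer tensors preserves the stabilizer property, keeping track of the harmless scalar factors coming from perfection and of signs so that $\mathcal{S}_{\rm code}$ is pinned down exactly rather than up to a phase. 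An alternative route to the generator count, closer in spirit to the operator-pushing language of the paper, is to use that the encoding isometry of a perfect stabilizer tensor is Clifford, so that the $2N_{\rm bulk}$ bulk Pauli generators push through the greedy-ordered chain of perfect-tensor isometries to boundary Paulis whose lifts lie in $\mathcal{S}$ and whose bulk parts are independent; this again forces $\dim\mathcal{S}_{\rm code}=N_{\rm bdy}-N_{\rm bulk}$.
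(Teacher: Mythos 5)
Your proof is correct, but it takes a genuinely different route from the paper's. The paper argues locally and constructively: using the greedy ordering, the full encoding map is built as a composition of per-tensor encoding isometries, each of which is a Clifford isometry because the tensor is a stabilizer tensor; the key lemma is that composing the encoders of two stabilizer codes $(\mathcal{S}_1,M_1)$ and $(\mathcal{S}_2,M_2)$ yields a stabilizer code with stabilizer generated by $\mathcal{S}_2$ and $M_2(\mathcal{S}_1)$, illustrated via concatenation of the $[[5,1]]$ code. You instead work globally in the Choi/channel--state picture: the all-legs state is a stabilizer state because index contraction is a Bell-pair stabilizer projection (nonzero since $V\neq 0$), the code stabilizer is the boundary-supported subgroup of its stabilizer, and the generator count $r=N_{\rm bdy}-N_{\rm bulk}$ falls out of the rank of the boundary reduced density matrix, which is pinned to $2^{N_{\rm bulk}}$ by $V^\dagger V=I$. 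All the steps check out, including the identification $(I\otimes Q)|V\rangle=|V\rangle \iff QV=V$ and the dimension comparison that forces $C$ to equal the joint $+1$ eigenspace. Your argument buys generality --- it uses the greedy hypothesis only through the isometry property, so it covers any holographic code assembled from perfect stabilizer tensors whether or not the greedy algorithm exhausts the network --- while the paper's composition argument buys an explicit, efficiently computable generating set for the stabilizer (a point the paper emphasizes), which your existence proof yields only indirectly by tracking stabilizer updates through the Bell measurements. Your closing remark about pushing the $2N_{\rm bulk}$ bulk Pauli generators through the Clifford chain is essentially the paper's mechanism, restated.
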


\noindent To understand why Theorem \ref{thm:stabilizer} is true we need to see how to construct the code's stabilizer generators. To be concrete, consider holographic codes constructed from tilings by hexagons and pentagons. The six-index perfect tensor defines a [[6,0]] stabilizer code, whose stabilizers are enumerated in appendix \ref{sec:5qubit}. 
As we have already noted, it also defines isometries from any set of 1, 2, or 3 indices to the complementary set of indices; these isometries may be regarded as the encoding maps for $[[5,1]]$, $[[4,2]]$, and $[[3,3]]$ stabilizer codes respectively. 

To be specific, consider the $[[5,1]]$ code, and let $M$ denote its isometric encoding map taking a one-qubit input to the corresponding encoded state in the code block of five qubits. We can characterize $M$ by specifying how it acts on Pauli operators, which (together with the identity) span the space of operators acting on a single qubit. Since the Pauli group is generated by $X$ and $Z$ it suffices to specify
\be
M: X\mapsto \bar X,\quad M:Z \mapsto \bar Z,
\ee
where $\bar X$ and $\bar Z$ are the code's logical Pauli operators, given explicitly in appendix \ref{App:PerfectTensorExamples}. 
Similarly, the action on Pauli operators defines isometric encoders for the $[[4,2]]$ and $[[3,3]]$ stabilizer codes, except that for \textit{e.g.} the $[[4,2]]$ code we specify the action on the four independent Pauli operators $X_1,X_2,Z_1,Z_2$, where the subscript $1,2$ labels the code's two logical qubits. For stabilizer codes the encoding isometry is always a \textit{Clifford isometry}, meaning its action by conjugation maps $k$-qubit Pauli operators to $n$-qubit Pauli operators. 

We already explained in section \ref{sec:model} that when the condition of Theorem \ref{thm:stabilizer} is satisfied then the encoding isometry for the holographic code can be obtained by composing the isometries associated with each perfect tensor in the network. 
A given tensor may have 0, 1, 2, or 3 incoming legs, including the dangling bulk leg (if the tensor is a pentagon) and all the incoming contracted legs, which are output legs from previously applied isometries. 
To prove Theorem \ref{thm:stabilizer} then, it is enough to know that composing the encoding isometries of two stabilizer codes yields the encoding isometry of a stabilizer code. 

To see how this works, it is helpful to think about the simple special case of a \textit{concatenated quantum code}, for which the tensor network is a tree.
 Consider in particular a code with just one logical qubit --- the central pentagon has one incoming logical leg and five outgoing legs, while every other tensor is a hexagon with one incoming leg and five outgoing legs. 
 If the $[[5,1]]$ code is concatenated just once, the tensor network has five hexagons and describes a $[[25,1]]$ stabilizer code. 
To obtain this code's isometric map, we first apply the encoding isometry $M$ of the $[[5,1]]$ to the logical qubit, and then apply $M$ again to \textit{each one} of the five outgoing qubits. 
If $S$ denotes the stabilizer group of the $[[5,1]]$ code, then the stabilizer of the $[[25,1]]$ code will include $S$ acting on each one of the five subblocks corresponding to the five hexagons in the tensor network. 
But it also includes elements which act collectively on four of the five hexagons. For example, as described in appendix \ref{sec:5qubit}, one of the stabilizer generators for the $[[5,1]]$ code is the Pauli operator $X\otimes Z\otimes Z\otimes X\otimes I$.
 The isometries associated with the five hexagons map this operator to $\bar X\otimes \bar Z\otimes \bar Z\otimes \bar X\otimes I$, where now $\bar X,\bar Z$ are the logical Pauli operators acting on the five outgoing qubits emanating from a single hexagon.

The same idea applies to more general compositions of code isometries. 
Suppose that $\mathcal{S}_1, M_1$ are the stabilizer group and encoding isometry for an $[[n_1,k_1]]$ stabilizer code and that $\mathcal{S}_2, M_2$ are the stabilizer group and encoding isometry for an $[[n_2,k_2]]$ stabilizer code. 
We may apply $M_2$ to $m$ of the $n_1$ output qubits from $M_1$ along with $k_2-m$ additional input qubits (where $m\le n_1$ and $m \le k_2$), thus obtaining an $[[ n_1- m + n_2, k_1 + k_2-m]]$ code. 
In fact this code is a stabilizer code, whose stabilizer group is generated by $\mathcal{S}_2$ and $M_2(\mathcal{S}_1)$; here we use a streamlined notation, in which it is understood that operators and maps are extended by identity operators where necessary, and we note that the elements of $M_2(\mathcal{S}_1)$ are Pauli operators because $M_2$ is a Clifford isometry. 
Thus we have proven Theorem \ref{thm:stabilizer}. It is also worthwhile to note that the stabilizer group and encoding isometry for the holographic code can be efficiently computed by composing the isometries arising from the perfect tensors in the network. 

\subsection{Are local gauge constraints enough?}\label{subsec:enough}

It has recently been argued that in AdS/CFT gauge constraints in the boundary CFT may pick out a small enough subspace of states to explain the error correcting properties of AdS/CFT \cite{Mintun2015}.  
The idea is that any gauge-invariant state already possess some non-local entanglement via the imposition of the gauge constraints, and that this might be enough to resolve the various paradoxes of \cite{Almheiri14}.\footnote{The word ``gauge'' is sometimes used in quantum information theory in a way that is non-standard from the point of view of quantum field theorists.  In quantum field theory, states that are not gauge-invariant have no physical interpretation, and are not really part of the Hilbert space of the theory; they appear only as a mathematical convenience.   
This is what the authors of \cite{Mintun2015} meant by gauge constraints, and it is what we mean here.}  

We can try to test this idea for the holographic stabilizer codes discussed in section \ref{subsec:stabilizer}. 
Since gauge constraints are spatially local, the argument of Ref. \cite{Mintun2015} suggests that the code's stabilizer group should be \textit{locally generated}, in the sense that it has a complete set of generators, each with support on a constant number of neighboring boundary qubits. In fact, though, holographic stabilizer codes do not have this property in cases where the greedy entanglement wedge reaches outside the causal wedge. 
This property poses no problem for the proposal of  \cite{Almheiri14} however, as those authors argued that energetic constraints should also be included in defining the code subspace.

Consider for example the disconnected boundary region $A=A_1\cup A_2$ in the pentagon code, depicted in figure \ref{pentagondcfig}. We have already seen that the full logical algebra of the central pentagon can be reconstructed on the disconnected region $A_1\cup A_2$, but that no nontrivial logical operator acting on the central pentagon is supported on either one of the connected components $A_1$, $A_2$. In a stabilizer code, a logical Pauli operator supported on $A_1\cup A_2$ is a tensor product $\mathcal{O}=\mathcal{O}_{A_1}\otimes \mathcal{O}_{A_2}$ of Pauli operators supported on $A_1$ and $A_2$ separately. In order to preserve the code space, this logical Pauli operator must commute with all of the code's stabilizer generators. But if the two components $A_1$ and $A_2$ are distantly separated and the stabilizer generators are geometrically local, then no stabilizer generator has nontrivial support on both $A_1$ and $A_2$. Any stabilizer generator with no support on $A_2$ trivially commutes with $\mathcal{O}_{A_2}$, and if it commutes with $\mathcal{O}$ then it must commute with $\mathcal{O}_{A_1}$ as well. Likewise, a stabilizer generator with no support on $A_1$ must commute with $\mathcal{O}_{A_2}$ if it commutes with $\mathcal{O}$. 
Therefore $\mathcal{O}_{A_1}$ and  $\mathcal{O}_{A_2}$ are logical operators, and at least one is nontrivial if their product is, contradicting the hypothesis that no nontrivial logical operator is supported on either connected component of $A$. The conclusion is that the stabilizer generators cannot be geometrically local.

The above argument applies even to higher-dimensional holographic stabilizer codes. In the case were the boundary is one dimensional, we may simply appeal to a known result in quantum coding theory, that a stabilizer code in one dimension with geometrically local generators has constant distance~\cite{Bravyi09, Pastawski15}. 
Therefore, a one-dimensional code with a local stabilizer cannot have a positive erasure threshold.

\section{Black holes and holography}\label{sec:black}
In holographic codes, bulk operators are reconstructed only on a subspace of the boundary Hilbert space.  This may seem troubling, since the holographic correspondence is supposed to assign a bulk interpretation to all possible states on the boundary.  A resolution of this confusion was proposed in \cite{Almheiri14} --- a particular bulk operator might not always be reconstructable because it lies deep inside a black hole for most boundary states.\footnote{We are currently agnostic about the reconstruction of bulk operators just inside the horizon, which must be needed in some form to describe the experience of an infalling observer.  This is a topic of much recent controversy \cite{Almheiri13, Harlow2014}, but we will not take sides here.}  In fact we can see this directly in our models if we incorporate black holes in a manner that we now describe.

To illustrate the idea, consider the pentagon code, but with the central tensor removed.  The central tensor's one free bulk index has been replaced by five bulk indices, those which had previously been contracted with legs of the missing pentagon; the tensor network now provides an isometry mapping these five indices, together with the bulk legs on the remaining pentagons, to the boundary.  Thus the code subspace of the boundary Hilbert space is larger than for the pure pentagon code.  We interpret this enlarged code space as describing configurations of the bulk with a black hole in the center, whose microstate is determined by the input to the new bulk legs.  The entropy of the black hole is the logarithm of the dimension of the Hilbert space of black hole microstates, or
\be
S_{BH}=\log_2\left(2^5 - 2\right)\approx 4.9,
\ee
since only four of the bulk spins are new and we shouldn't count states that were part of the original pentagon code subspace.  We depict this construction in figure \ref{fig_black_hole}a.  
\begin{figure}
\centering
\subfloat[Black hole]{
\includegraphics[width=0.35\linewidth]{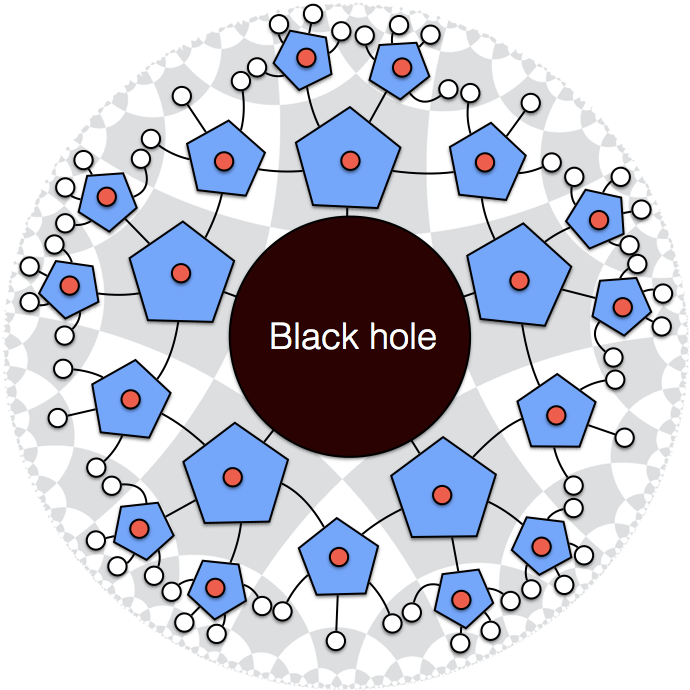}
\label{fig:BlackHole}
}\hspace{1cm}
\subfloat[Wormhole]{
\includegraphics[width=0.40\linewidth]{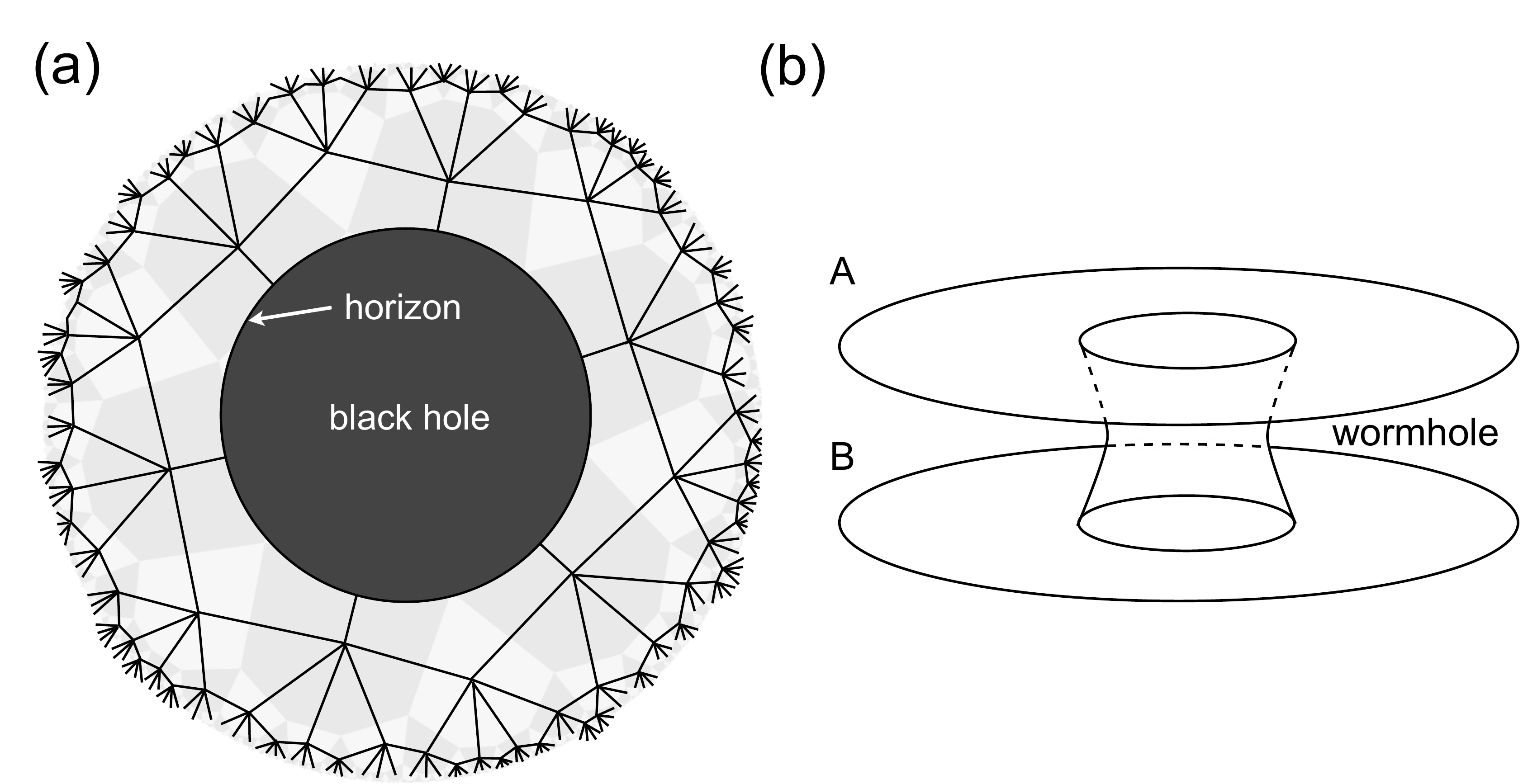}
\label{fig:Wormhole}
}
\caption{A black hole in a holographic code, and the corresponding wormhole geometry.
} 
\label{fig_black_hole}
\end{figure}

We can construct larger black holes by removing more central layers of the network; it is clear that their entropy scales with their horizon area, as predicted by Bekenstein and Hawking \cite{Bekenstein73, Hawking75}.  As the black hole grows, the number of bulk legs outside the black hole decreases, so we can reconstruct fewer and fewer bulk local operators.  Eventually the black hole eats up the entire network, and our isometry becomes trivial (and unitary).  Thus our model really does assign a bulk interpretations to all boundary states, as demanded by AdS/CFT --- most boundary states correspond to large black holes in the bulk. 

It is amusing to note that we can also describe configurations corresponding to the two-sided wormhole of \cite{Maldacena03}; we just prepare two networks with central black holes of equal size, and maximally entangle the bulk legs at their horizons, as shown in figure \ref{fig_black_hole}b.  
It would be interesting to make contact with recent speculations about how the length of the wormhole relates to the complexity of the tensor network describing the state \cite{Hartman2013, Susskind2014, Roberts2014}, although for that purpose we would probably need to incorporate dynamics into our model.  

\section{Open problems and outlook}\label{sec:conclude}

A remarkable convergence of quantum information science and quantum gravity has accelerated recently, propelled in particular by a vision of quantum entanglement as the foundation of emergent geometry. We expect this interface area to continue to grow in importance, as practitioners in both communities struggle to develop a common language and toolset. This paper was spurred by the connection between AdS/CFT and quantum error correction proposed in \cite{Almheiri14}. We have strived to make this connection more concrete and accessible  by formulating toy models which capture the key ideas, and we hope our account will equip a broader community of scientists to contribute to further progress. Indeed, much remains to be done. 

First of all, the entanglement structure of holographic codes is not yet completely understood. We would like a more precise characterization of the violations of the Ryu-Takayanagi formula which can occur, and of the relationship between bulk residual regions and the multipartite entanglement of the boundary state. How is the greedy entanglement wedge different from the geometric entanglement wedge, and to what extent does the greedy entanglement wedge reach beyond the causal wedge?

We have not yet discussed the correlation functions of boundary observables in holographic codes because we do not have much to say. 
In a stabilizer state $|\psi\rangle$, where $P$ and $Q$ are Pauli operators, the expectation value $\langle \psi| PQ|\psi\rangle$ is either zero (if $PQ$ anticommutes with an element of the stabilizer) or a phase (if $PQ$ commutes with the stabilizer); the same conclusion applies to a stabilizer code unless $PQ$ is a nontrivial logical operator preserving the code subspace.
In contrast, two-point correlations in a CFT decay algebraically with distance; how might we recover this behavior in holographic codes? 
Perhaps algebraic decay is recovered for non-stabilizer holographic states, by defining suitable coarse-grained observables, or by injecting an encoded state such that bulk correlation functions decay exponentially as in Ref. \cite{Qi13}. 
Or we might replace perfect tensors by tensors which are nearly perfect.

The behavior of two-point correlators highlights one way our toy models differ from full-blown AdS/CFT, but there are other ways as well; for one, there is no obvious analog of diffeomorphism invariance in a lattice model. What features in our lattice model correspond to the $1/N$ corrections in the continuum theory? In AdS/CFT the AdS radius is large compared to the Planck scale when the bulk theory is weakly coupled, yet in the pentagon model for example the curvature scale is comparable to the lattice cutoff. To approximate flatter bulk geometries we should study more general tessellations, including higher dimensional ones. A particularly serious drawback of our toy models so far is that we have not introduced any bulk or boundary dynamics. Can holographic codes illuminate dynamical processes like the formation and evaporation of a black hole?

Finally, we have emphasized that holographic states and codes provide a concrete realization of some aspects of AdS/CFT, but they may also be interesting for other reasons, for example as models of topological matter. 
Furthermore, holographic codes generalize the concatenated quantum codes that have been extensively used in discussions of fault-tolerant quantum computing \cite{Gottesman2009}, and might likewise be applied for the purpose of protecting quantum computers against noise. 
For this application it would be valuable to develop the theory of holographic codes in a variety of directions, such as studying tradeoffs between rate and distance, formulating efficient schemes for correcting more general errors than erasure errors, and finding ways to realize a universal set of logical operations acting on the code space.

\section*{Acknowledgment}

We thank Ning Bao, Oliver Buerschaper, Glen Evenbly, Daniel Gottesman, Aram Harrow, Isaac Kim, Seth Lloyd, Nima Lashkari, Hirosi Ooguri, Grant Salton, Kristan Temme, Guifre Vidal and Xiaoliang Qi for useful comments and discussions.
We also have enjoyed discussions with Ahmed Almheiri, Xi Dong, and Brian Swingle, and with Matthew Headrick, about their independent and upcoming related work.
FP, BY, and JP acknowledge funding provided by the Institute for Quantum Information and Matter, a NSF Physics Frontiers Center with support of the Gordon and Betty Moore Foundation (Grants No. PHY-0803371 and PHY-1125565). BY is supported by the David and Ellen Lee Postdoctoral fellowship. DH is supported by the Princeton Center for Theoretical Science.  

\appendix

\section{Perfect tensor examples}\label{App:PerfectTensorExamples}

In this section we will present the 5-qubit code, the 3-qutrit code and discuss possibilities of constructing  perfect tensors with a larger number of legs.
The 5-qubit code is a qubit stabilizer code of the form introduced in section \ref{subsec:stabilizer} whereas
the 3-qutrit code can be described through a natural generalization of the stabilizer formalism to higher spin dimensions.

\subsection{$5$-qubit code and $6$-qubit state}\label{sec:5qubit}

The five-qubit code is a $[[5,1,3]]_2$ perfect code with distance 3 encoding one logical qubit in five physical qubits.
It is a stabilizer code with a stabilizer subgroup given by $\cS =\langle S_1,S_2,S_3,S_4 \rangle$, where
\begin{equation}\label{eq:5qubitStabilizers}
\begin{split}
S_{1}=X\otimes Z\otimes Z \otimes X \otimes  I \\
S_{2}=I\otimes X\otimes Z \otimes Z \otimes  X \\
S_{3}=X\otimes I\otimes X \otimes Z \otimes  Z \\
S_{4}=Z\otimes X\otimes I \otimes X \otimes  Z .
\end{split}
\end{equation}
Note that $S_1S_2S_3S_4 = Z\otimes Z \otimes X \otimes I\otimes X$ and hence the group is manifestly invariant under cyclic permutations.
As is the case in the stabilizer formalism, codes are characterized by an abelian stabilizer subgroup ($[S_{i},S_{j}]=0$) and codespace is the joint $+1$ eigenspace for this group, code states satisfy
\begin{align}
S_{j}|\psi\rangle = |\psi\rangle\qquad j=1,\ldots 4.
\end{align}
In this case, there are two orthogonal codeword states.

Logical operators are unitary operators which preserve the codeword space, but may act non-trivially on it. They are given by
\begin{align}\label{eq:5qubitLogicals}
\overline{X}=X\otimes X\otimes X \otimes X \otimes  X \qquad
\overline{Z}=Z\otimes Z\otimes Z \otimes Z \otimes  Z.
\end{align}
One may see that both $\overline{X}$ and $\overline{Z}$ commute with all the stabilizer generators, so they indeed preserves the codeword space. Yet, they anti-commute with each other, so they characterize one logical qubit, and $\overline{X}$ and $\overline{Z}$ behave as logical Pauli-$X$ and -$Z$ operators for a logical qubit. Namely, one can denote two codeword states by $|\tilde{0}\rangle$ and $|\tilde{1}\rangle$ such that $\overline{Z}|\tilde{0}\rangle=|\tilde{0}\rangle$, $\overline{Z}|\tilde{1}\rangle=-|\tilde{1}\rangle$, $\overline{X}|\tilde{0}\rangle=|\tilde{1}\rangle$, $\overline{X}|\tilde{1}\rangle=|\tilde{0}\rangle$. Applications of stabilizer generators to logical operators do not change the action on the codeword space, so representations of logical operators are not unique. 
Then one can introduce the following equivalence relations among logical operators
\begin{align}
\overline{Z}\sim \overline{Z}U \qquad \overline{X}\sim \overline{X}U \qquad \mbox{where} \quad U \in \mathcal{S}
\end{align}
as equivalent logical operators act in the same way on the codeword space.
In particular, one can conclude that on the codespace, $\bar{X} \sim - Z \otimes X \otimes Z \otimes I \otimes I$ or any cyclic permutation thereof by multiplying eq. \ref{eq:5qubitLogicals} by stabilizer generators in eq. \ref{eq:5qubitStabilizers}.
In the five-qubit code, one can show that logical operators must act non-trivially on at least three physical qubits (weight 3) and the reduced density matrices on any two physical qubits is always maximally mixed. 

One can convert the five-qubit code into a six-qubit perfect state. 
Imagine that we add one extra qubit to the five-qubit code such that the new qubit is entangled with a logical state of the five-qubit code. To be specific, we consider a six-qubit state whose stabilized by $\cS' = \langle S'_1, S'_2, S'_3, S'_4, S'_5 ,S'_6 \rangle$, with generators are given by
\begin{equation}
\begin{split}
S_{1}'&=X\otimes Z\otimes Z \otimes X \otimes  I \otimes  I\\
S_{2}'&=I\otimes X\otimes Z \otimes Z \otimes  X \otimes  I\\
S_{3}'&=X\otimes I\otimes X \otimes Z \otimes  Z \otimes  I\\
S_{4}'&=Z\otimes X\otimes I \otimes X \otimes  Z \otimes  I\\
S_{5}'&=X\otimes X\otimes X \otimes X \otimes  X \otimes  X = \overline{X}\otimes X \\
S_{6}'&=Z\otimes Z\otimes Z \otimes Z \otimes  Z \otimes  Z = \overline{Z}\otimes Z .
\end{split}
\end{equation}
Here we have ``recycled'' stabilizer generators $S_{1},\ldots,S_{4}$ from the five-qubit code: 
\begin{align}
S_{j}' = S_{j} \otimes I \qquad  \mbox{for}\quad j=1,\ldots,4.
\end{align}
We then constructed new stabilizer generators $S_{5}'$ and $S_{6}'$ from logical operators of the five-qubit code as follows:
\begin{align}
S_{5}' = \overline{X} \otimes X \qquad  S_{6}' = \overline{Z} \otimes Z
\end{align}
where $\overline{X}$ and $\overline{Z}$ act on five qubits. One may easily check that stabilizer generators commute with each other. The wavefunction is specified by 
\begin{align}
S_{j}'|\psi\rangle = |\psi\rangle\qquad j=1,\ldots, 6,
\end{align}
and the six-qubit state is given by $|\psi\rangle = |\tilde{0}\rangle \otimes |0\rangle + |\tilde{1}\rangle \otimes |1\rangle$. From the construction, one can see that $\rho_{A}\propto I_{A}$ if $|A|\leq 3$. It turns out that this conversion is generic. That is, one can always convert a perfect code with $2n-1$ spins into a perfect state with $2n$ spins.

\subsection{3 -qutrit code and 4-qutrit state}\label{sec:qutritcode}

One of the simplest examples of perfect tensors is given by the three qutrit code.
This stabilizer code allows encoding one logical qutrit onto three physical qutrits in a way that it may be recovered even after erasure of any single physical qutrit.
The reason for providing this additional example at this point is that we believe that this family of perfect tensors may naturally be suited to generalizations leading to a continuum type limit. 

The code states for this code can be given as follows.
\begin{align*}
 \sqrt{3}| \tilde{0} \rangle  & = | 000 \rangle + |111\rangle + | 222\rangle \\
  \sqrt{3}| \tilde{1} \rangle  & = | 012 \rangle + |120\rangle + | 201\rangle \\
   \sqrt{3}| \tilde{2} \rangle  & = | 021 \rangle + |102\rangle + | 210\rangle.
\end{align*}
Correspondingly, the perfect state can be given explicitly as
\begin{align*}
3| [[4,0,3]]_3 \rangle = & | 0000 \rangle + |1110\rangle + | 2220\rangle \\
                            + & | 0121 \rangle + |1201\rangle + | 2011\rangle \\
                            + &  | 0212 \rangle + |1022\rangle + | 2102\rangle.
\end{align*}

The $[[4,0,3]]_3$ state is determined by the following stabilizer group
\begin{align}
\cS = \langle ZZZI, ZZ^{-1}IZ, XXXI,  X X^{-1} IX \rangle,
\end{align}
where we have omitted the three tensor product operator $\otimes$ between the qutrit Pauli operators 
\begin{align}
I = \begin{pmatrix}
1 & 0 & 0\\
0 & 1 & 0 \\
0 & 0 & 1
\end{pmatrix}
\quad
X = \begin{pmatrix}
0 & 1 & 0\\
0 & 0 & 1 \\
1 & 0 & 0
\end{pmatrix}
\quad
Z = \begin{pmatrix}
1 & 0 & 0\\
0 & \omega & 0 \\
0 & 0 & \omega^2 
\end{pmatrix}
\quad
\omega=e^{2i\pi/3} .
\end{align}
From this presentation of the stabilizer group, it becomes clear that this perfect state corresponds to a self dual CSS code.
\footnote{For CSS codes, the stabilizer group can be decomposed into $X$ part and $Z$ part. 
Self-dual means that the stabilizer subgroups for the $X$ and the $Z$ part have exactly the same form.}
This separation of $X$ and $Z$ type operators, and treating them on similar footing may potentially allow generalization to continuum variable where position and momentum conjugate variables play a similar role.

Like any other maximally entangled stabilizer state, given a bipartition, the $[[4,0,3]]_3$ perfect state may be interpreted as a unitary gate belonging to the generalized Clifford group.
A second reason for presenting this code is that we can provide a simple and explicit presentation of the corresponding Clifford circuit.
In the case of the $[[4,0,3]]_3$, the corresponding Clifford is composed to two controlled adder  gates which are run in opposite directions one after the other.
\begin{align}
  U_{[[4,0,3]]_3} = |x_1,x_2 \rangle \rightarrow | 2x_1+x_2, x_1+x_2 \rangle = \begin{picture}(8,8)(2,16)
\put(0,0){\includegraphics[scale=.18]{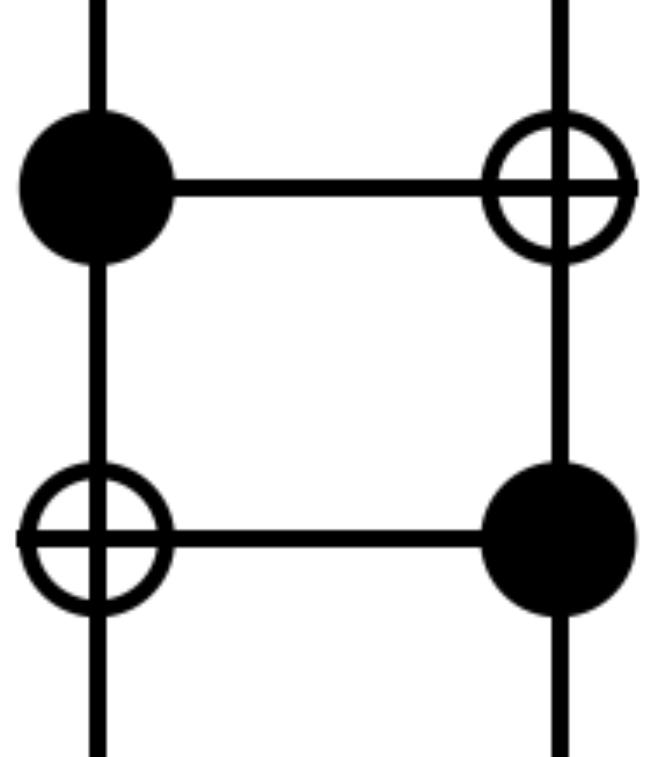}}
\end{picture}
\end{align}

\subsection{Large $n$}

In order to obtain perfect tensors for large $n$, one needs to increase $v$ as well, with the first known construction \cite{Aharonov97} having $v \propto O(n)$.
A construction with $v \propto O(\sqrt{n})$ was later proposed~\cite{Grassl04}. 

While perfect tensors are essential in guaranteeing the isometric properties used in the analysis of holographic codes, they do require a large degree of fine tuning for their construction.
An interesting observation is that according to canonical typicality, most pure states are almost maximally entangled along any balanced bi-partition~\cite{Page93, Goldstein06}.
Hence Haar random states may in some sense provide a good approximation to perfect states.
This is not meant  in the sense of trace distance.
The observation is that random states will except for a measure zero subset have full rank along any bipartition.
This allows operators to be pushed from lower dimensional side to higher dimensional side in a way similar to figure \ref{push}.
The only difference being that in this case normalization is not preserved (if it was there to begin with).
Furthermore, the average bipartite entanglement of random states is very close to maximal.
For this reason, we might expect that they typically do not change normalization too drastically.

\section{Proof of RT for negatively curved planar graphs}\label{app:PlanarGraphProof}

As we have presented in eq. \ref{upperbound}, there is an upper bound on the amount of entanglement a tensor network state can have based on the minimal cut $\gamma_A$ dividing the network into two tensors $P$ and $Q$.
In this section, we will explain how to guarantee that this upper bound is saturated for connected regions in a certain class of holographic states.
Namely, we shall focus on holographic states associated to planar graphs with non-positive curvature.

We argued that proving the RT formula amounts to showing that the tensors $P$ and $Q$ can be interpreted as unitary transformations, from the minimal geodesic cut $\gamma_A$ together with some subregion of $A$ or $A^c$ to the rest of $A$ or $A^c$ respectively.  
We will show that this is indeed the case by giving unitary circuit interpretations to the tensor networks for each of them.

Indeed, the following are necessary and sufficient conditions for a circuit interpretation of a network of perfect tensors:
\begin{itemize}
\item {\bf Covering:} Each edge (contracted or uncontracted index) is assigned a directionality.
\item{\bf Flow:} Each tensor has an equal number of incoming and outgoing indices.
\item{\bf Acyclic:} The resulting directionality has no closed cycles (no loops).
\end{itemize}
The covering condition is necessary to interpret the direction in which each tensor in the network processes information by having well defined inputs and outputs.
The flow condition is necessary for the interpretation of every tensor to be that of a unitary gate.
The acyclic condition is non-local and guarantees that the order of application of the operations in the network is consistent, where an inconsistency may be thought of as the presence of closed time-like curves in the circuit picture.
These conditions are enough to show that the interpretation is that of a unitary quantum circuit.

One additional condition is required in order to prove the saturation of the RT entanglement entropy,  for a simply connected boundary region $A$. 
\begin{itemize}
\item {\bf Equal time interpretation:} The minimal cut $\gamma_A$ is crossed in the same direction by the directed interpretation of each link that it cuts.
\end{itemize}
This condition allows viewing the geodesic as an ``equal time'' curve in the unitary circuit interpretation of the tensor network.

Let us first describe the construction for the circuit interpretation.
The steps may be readily visualized in figure \ref{fig_HolographicStateOrdering}.  

In a planar graph we can associate a minimal cut $\gamma_A$ to a path through the dual graph, and since we are taking $A$ to be connected, this path will also be connected.  We may associate two nodes in the dual lattice at the end points of a simply connected boundary region $A$, which will also be the endpoints of $\gamma_A$ in the dual lattice.  We will take one of these nodes to be the starting node and label it $0$.  We may then label all other nodes in the dual lattice according to the distance (number of steps/cuts) from the starting node.
Since $\gamma_A$ is a minimal geodesic, this labeling monotonically increases along the nodes it traverses.

We may now assign an orientation to edges (contracted indices) in the tensor network.
The orientation is chosen such that, from the two adjacent nodes of the dual lattice, the node with higher label is always found to the right.
This orientation may be interpreted as the direction of ``flow of information'' through the circuit.
 
We now argue that this orientation gives a unitary circuit interpretation for $P$ (and thus for $Q$ by exchanging $A$ and $A^c$).  We emphasize that the argument rests on the following assumptions about the graph:
\begin{itemize}
\item {\bf Planar embedding:} The tensor network may be laid out in a planar fashion with the boundary of the network corresponding to a simple boundary on the embedding.
\item {\bf Perfect tensors:} Tensors in the network have an even number of legs and are unitary along any balanced distribution of the legs.
\item {\bf Curvature: } The network is expected to represent an AdS bulk and thus is expected to have the discrete analogue of negative curvature.  We have not tried to define this idea in general, but the aspect of it we need here is that the distance function between two nodes of the network has no local maxima away from the boundary.  
\end{itemize}

  \begin{figure}[h!b]
  \centering
    \subfloat[Circuit interpretation construction]{ 
    \includegraphics[width=0.44\textwidth]{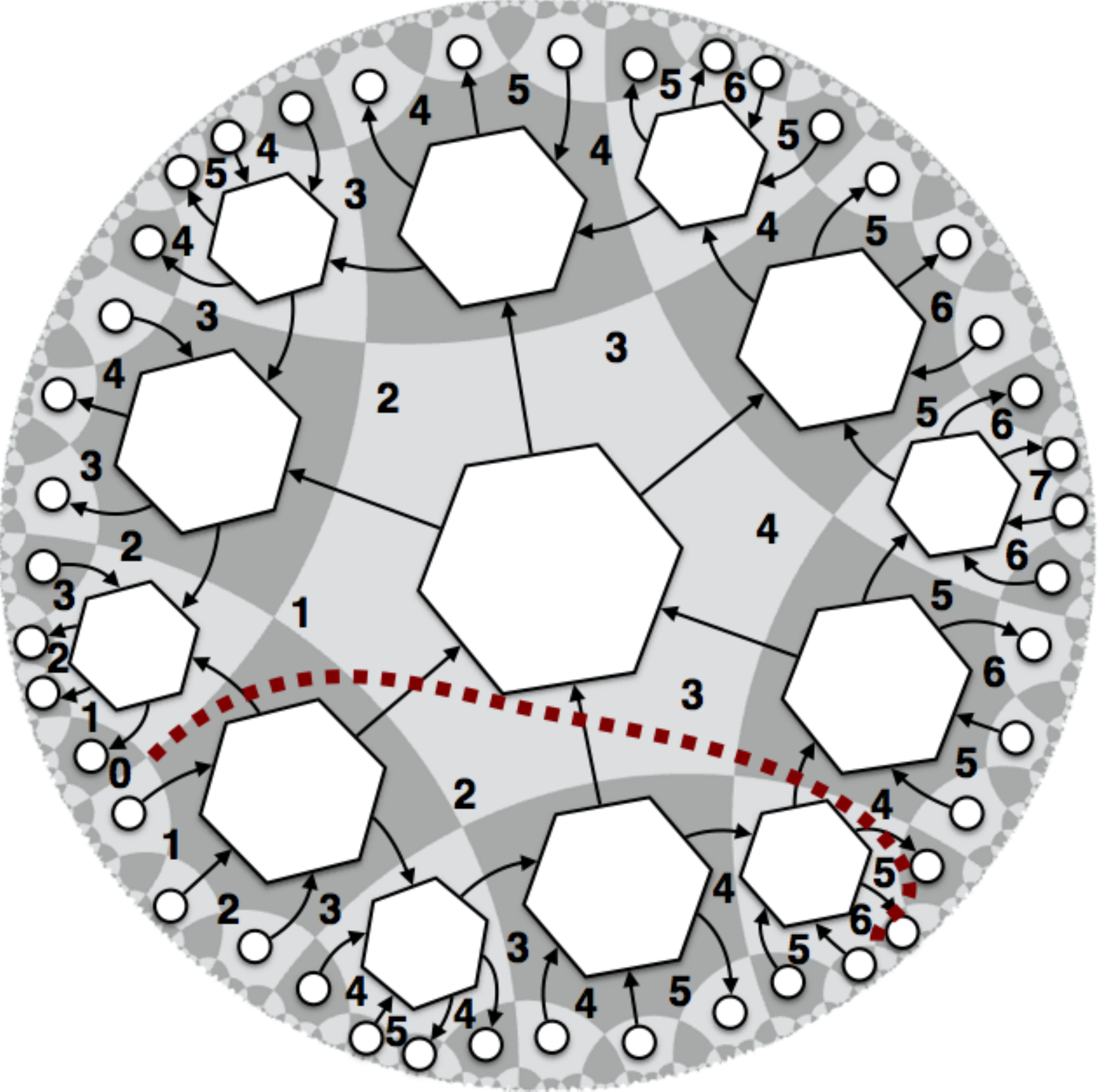} 
    \label{fig_HolographicStateOrdering}}\
   \subfloat[Bulk operator reconstruction circuit]{
    \includegraphics[width=0.44\textwidth]{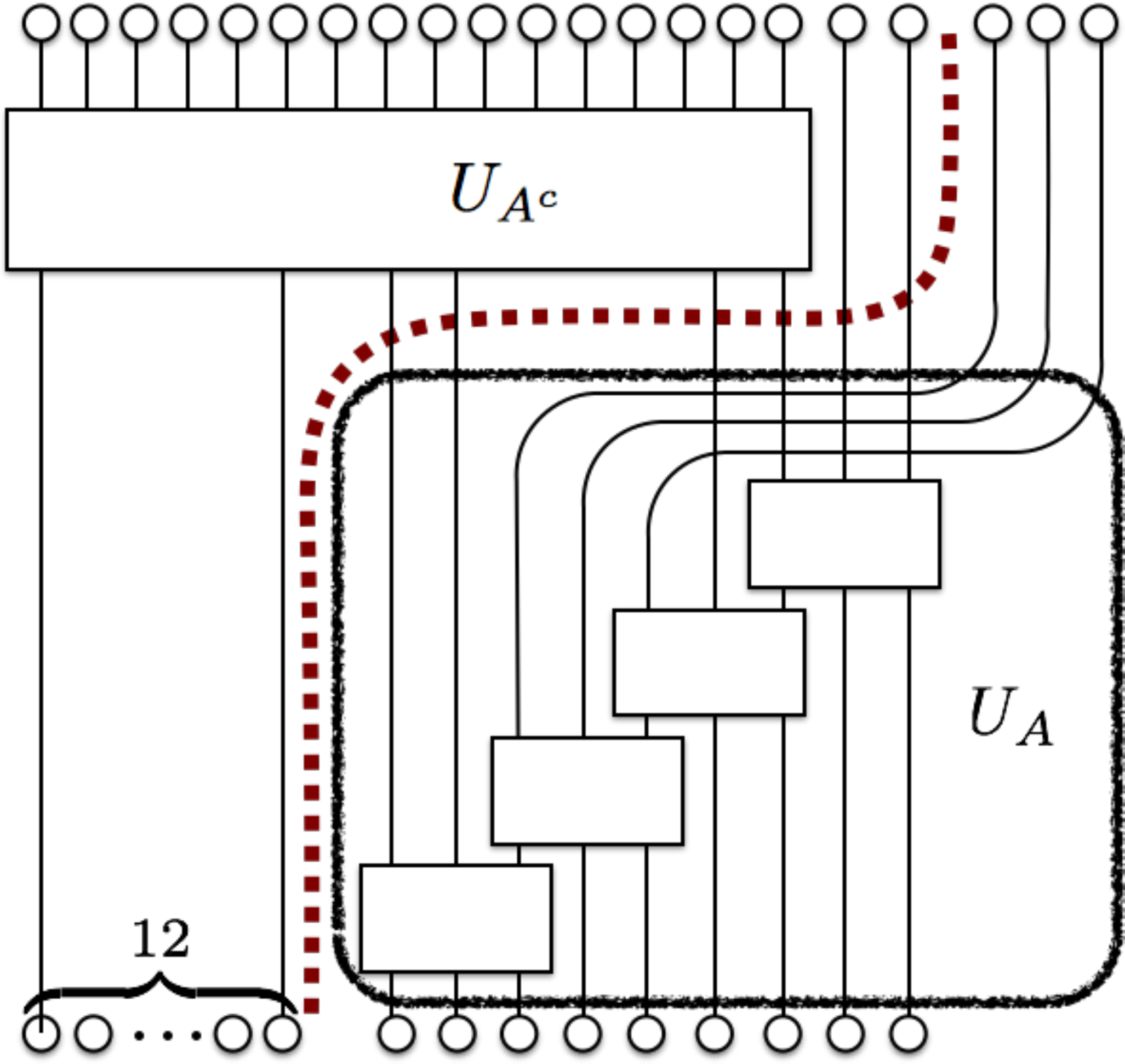}
     \label{fig:CosmeticCircuitRedraw}}\
\caption{Here we illustrate how to construct a unitary circuit interpretation of a holographic state which witnesses the RT entanglement for a simply connected boundary region.
(a) The following steps for the construction are illustrated:
i) Label the node located at one end of the boundary as $0$.
ii) Label all other nodes according to the distance from this node.
iii) Direct all tensor indices such that the larger label lies to the right.
(b) In the example, the circuit interpretation for the network has depth $12$.
For this reason we provide a full sequential presentation of the circuit interpretation along one side of the geodesic, condensing the remaining $7$ gates into $U_{A^c}$.
Note that there are outputs that are produced directly by $U_A$, without going through $U_{A^c}$ as well as inputs that are fed directly to $U_{A^c}$ without going through $U_A$.
} 
\label{HolographicStateOrderingandRedraw}
  \end{figure}

We have already used the \emph{planar embedding} assumption implicitly by constructing the dual lattice and referring to right and left.
It is however less obvious that we have also made a restricted use of the \emph{perfect tensor} assumption.
Namely, we have used the fact that each tensor has an even number of legs.
Because of this, the dual lattice is bipartite.
In other words, nodes may be labeled with two ``colors'' which we shall conveniently call `even' and `odd' such that two nodes of the same color are never adjacent.
In particular, the parity of the distance labeling coincides with the ``color''.
Hence, two neighboring nodes can not have the same value label making the directionality of the tensor index between them always be well defined (i.e. satisfies \emph{covering} condition).

To ensure the \emph{flow} condition, it suffices to count the number of incoming indices minus the number of outgoing indices and verify that this value is zero.
Due to the triangle inequality, and the bipartite nature, labels for neighboring nodes can only differ by one.
Hence, the difference between the number of outgoing and incoming indices for any tensor in the network is given by $\sum_{j=1}^{2n}  \left( f_{j+1}-f_j \right) =0$, where $f_j$ are the labels associated to the $2n$ nodes immediately surrounding the tensor taken in cyclic order. 

The \emph{acyclic} condition is a bit more subtle since it is a non-local property.
We will prove that the presence of a cycle in this context implies the existence of an interior local maximum for the labeling.
Let us assume that our construction produces some cycle $C$ in the  tensor network.
Depending on the orientation of $C$ (clockwise or counterclockwise), the the node label values immediately to the interior of the loop will be larger or smaller than those immediately to the exteriors.
In the counterclockwise case, we may chose a node in the interior of $C$ with lowest possible label.
The label for this node is smaller than those of all its neighbors, including those in the exterior of $C$, which contradicts the assumption that it is defined based on a graph distance function.
In the clockwise case, we may choose a node in the interior of $C$ with the largest possible label.
In this case the label for this node is larger than that of all its neighbors, including those outside of $C$.
In other words, we have found an interior maximum for the distance function.
This is in contradiction with our stand-in assumption associated to negatively curved surface homeomorphic to the disc, which leads us to the conclusion that our construction produces no loops.

Finally, it is straightforward to show that the geodesic $\gamma_A$ can be provided an \emph{equal time interpretation} in the circuit.
Firstly,  it completely splits the circuits in two parts.
Secondly, the direction associated to all contracted indices crossed by $\gamma_A$ is uniform since the nodes it transverses are by definition labeled in strict ascending order.  

\section{Counting tensors in the pentagon code}\label{App:CountingTensors}
\subsection{Counting tensors}
\begin{figure}
\begin{center}
\includegraphics[height=1.5cm]{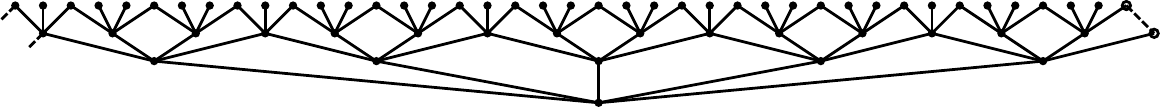}
\caption{The first few levels of the unwrapped pentagon code.  The dangling logical legs are not shown, and the ``hollow'' lines and dots are identified with the solid ones on the other side.  The tensors in the top row have all remaining non-logical legs extended outwards and treated as physical qubits.}\label{pentfig}
\end{center}
\end{figure}
In this appendix we compute some basic properties of the pentagon code that we quoted in the text.  In these computations it is useful to ``unwrap'' the code, as in figure \ref{pentfig}.  The central tensor is placed at the bottom, which we will refer to as the zeroth layer, the first five tensors as the first layer, and so on, with the layer number $n$ being equivalent to the graph distance to the central tensor.  

At each layer, it is clear that there are two kinds of tensors: those with one leg connected to the previous layer and those with two.  We will denote the numbers of these at layer $n$ as $f_n$ and $g_n$, so for example $f_1=5$ and $g_1=0$.  Moreover if we group these together as a two component vector, we have the recursive equation
\be
\begin{pmatrix}
f_{n+1}\\
g_{n+1}
\end{pmatrix}
=
\begin{pmatrix}
2 & 1\\
1 & 1
\end{pmatrix}
\begin{pmatrix}
f_{n}\\
g_{n}
\end{pmatrix}.
\ee
Applying this equation repeatedly we can compute the number of tensors of either type at any $n$ via
\be
\begin{pmatrix}
f_{n}\\
g_{n}
\end{pmatrix}
=\begin{pmatrix}
2 & 1\\
1 & 1
\end{pmatrix}^{n-1}
\begin{pmatrix}
5\\
0
\end{pmatrix}.
\ee 
This is easily computed by diagonalizing the matrix $M\equiv \begin{pmatrix}2 & 1\\1 & 1\end{pmatrix}$, at large $n$ we have
\begin{align}\nonumber
f_n&=\frac{5-\sqrt{5}}{2}\left(\frac{3+\sqrt{5}}{2}\right)^n\left[1+O\left(\left(\frac{3-\sqrt{5}}{3+\sqrt{5}}\right)^n\right)\right]\\
g_n&=\frac{3\sqrt{5}-5}{2}\left(\frac{3+\sqrt{5}}{2}\right)^n\left[1+O\left(\left(\frac{3-\sqrt{5}}{3+\sqrt{5}}\right)^n\right)\right].
\end{align}
If we truncate at layer $n$, the total number of boundary qubits is
\be
N_{boundary}=4f(n)+3f(n)
\ee
and the total number of bulk tensors is
\be
N_{bulk}=1+\sum_{k=1}^n(f_k+g_k).
\ee
Asymptotically we have
\be\label{pentrate}
\frac{N_{bulk}}{N_{boundary}}\to \frac{1}{\sqrt{5}}\approx .447,
\ee
which reproduces equation \eqref{eq:RatioCount}.

\subsection{Connected reconstruction}
We'll now compute the size of connected boundary region that we need to reconstruct operators on the logical leg of the central ($n=0$) tensor.  This is complicated by the fact that our network is not translationally invariant; whether or not we can reconstruct the center depends not only on the size of the boundary region we have access to but also where it is.  The ``best case'' situation is illustrated in figure \ref{bestpentfig}.
\begin{figure}
\begin{center}
\includegraphics[height=1.5cm]{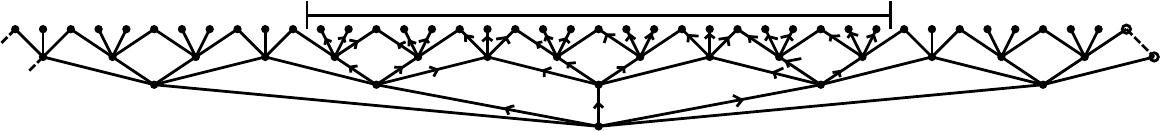}
\caption{The best-case reconstruction of the center; we need to use only three ``branches'' of the tree, and in the outer branches we can choose to push ``inwards'' every time.}\label{bestpentfig}
\end{center}
\end{figure}

We can compute the number of physical qubits in the best-case region by growing the tree it bounds: we start with $f_1=3$ and $g_1=0$, and then proceed as before by applying $M$ repeatedly, being careful to remove two tensors at the ends at each layer.  This tree thus obeys the modifed recursion relation
\be
\begin{pmatrix}
f_{n+1}\\
g_{n+1}
\end{pmatrix}
=
\begin{pmatrix}
2 & 1\\
1 & 1
\end{pmatrix}
\begin{pmatrix}
f_{n}\\
g_{n}
\end{pmatrix}-\begin{pmatrix}
0\\
1
\end{pmatrix},
\ee
which has solution
\be
\begin{pmatrix}
f_{n}\\
g_{n}
\end{pmatrix}
=\begin{pmatrix}
2 & 1\\
1 & 1
\end{pmatrix}^{n-1}
\begin{pmatrix}
3\\
0
\end{pmatrix}-\sum_{k=0}^{n-2}\begin{pmatrix}
2 & 1\\
1 & 1
\end{pmatrix}^{k}\begin{pmatrix}
0\\
1
\end{pmatrix}.
\ee
After cutting off the tree at level $n$ the number of physical qubits in the best-case region will be $N_{best}=4f_n+3g_n-3$, and by diagonalizing $M$ we see that asymptotically
\be
\frac{N_{best}}{N_{boundary}}\to \frac{3+\sqrt{5}}{10}\approx .524.
\ee

\begin{figure}
\begin{center}
\includegraphics[height=1.5cm]{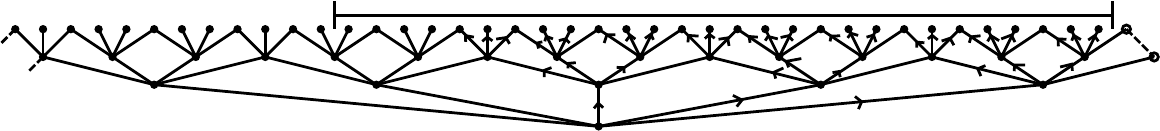}
\caption{The worst-case reconstruction of the center; we are just barely not able to use the second branch, so we need to use the third, fourth, and fifth.}\label{worstpentfig}
\end{center}
\end{figure}
We can also study the ``worst case'' location of the region, which is shown in figure \ref{worstpentfig}.  To compute its size in the large $n$ limit we only need to replace the initial condition $f_1=3, g_1=0$ by $f_1=4, g_1=0$, and we find
\be\label{pentthres}
\frac{N_{worst}}{N_{boundary}}\to \frac{5+\sqrt{5}}{10} \approx .724,
\ee
which reproduces \eqref{cthres}.

We can connect this worst-case result to the ``bad'' sets of points from figure \ref{fig:benipoints} that prevent a general threshold for this code: these points are shown in figure \ref{badpointsfig}.
\begin{figure}
\begin{center}
\includegraphics[height=1.5cm]{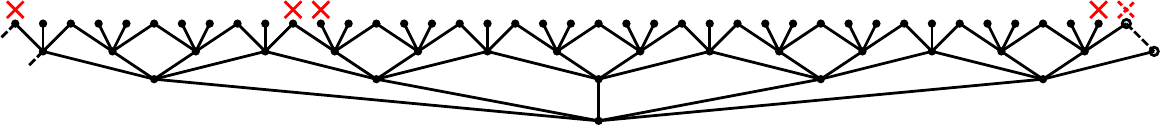}
\caption{Locations of the ``bad'' points from figure \ref{fig:benipoints}.  If the worst-case region were just a little smaller on the right it wouldn't contain any of them so, it wouldn't be able to reconstruct the center.}\label{badpointsfig}
\end{center}
\end{figure}

Similar calculations are possible in the pentagon-hexagon code we introduced to restore the threshold, we find
\be
\frac{N_{bulk}}{N_{boundary}}\to \begin{cases} \frac{3\sqrt{6}-4}{38}\approx .088 & n ~ \mathrm{odd}\\ \frac{3\sqrt{6}+4}{38}\approx .299 & n ~ \mathrm{even} \end{cases},
\ee
with the two cases being whether the last layer is taken to be pentagons ($n$ even) or hexagons ($n$ odd).  The rate is thus relatively small compared to \eqref{pentrate}, which suggests that this code should be better protected against erasures, as indeed we find.  The rate when $n$ is odd is smaller since it throws in an extra level of hexagons without any new logical legs.  We can also compute the sizes of the best and worst case connected reconstructions of the center, we find
\begin{align}\nonumber
\frac{N_{best}}{N_{tot}}&\to \frac{6+\sqrt{6}}{20}\approx .422\\
\frac{N_{worst}}{N_{tot}}&\to\frac{10+\sqrt{6}}{20}\approx .622.
\end{align}
These are smaller than the pentagon results, as expected since the code is denser, and are closer to the AdS/CFT value of $1/2$. 

\section{Estimating greedy erasure thresholds}\label{App:ErasureThreshold}

Noise models which are independent and identically distributed (i.i.d.) are usually analytically tractable while providing reasonable predictive capability based on the few parameters defining the individual noise model.
In the case of erasure noise, this is particularly simple since the only parameter is $\epsilon$, the erasure probability per qubit.
In addition to randomized bench-marking it is sometimes instructive to provide analytic bounds on how small the error/erasure probability needs to be in order to guarantee a recovery probability for the encoded data which approaches unit.
This can give us information about the scaling of the logical error probability with other parameters of the code.
A complementary approach consists of providing precise numerical estimates of the threshold value $p_c$ based on numerical simulations.

\subsection{Analytic bounds}\label{sec:AnalitycBounds}

In this section, we derive such an analytic upper bound on the probability of logical error.
The goal of the section is to provide an example of how such a bound is derived illustrating a proof technique and obtaining a functional form for the logical error probability.
We do not strive to derive a tight bound or address a particularly relevant holographic code scenario.
In fact, the recovery procedure we model is strictly weaker than the one provided by the greedy algorithm which is itself weaker than an optimal erasure recovery algorithm.
The model we analyze is essentially identical to the holographic hexagon state except that instead of starting from a single $[[6,0,4]]$ tensor at the center, we start with a single $[[5,1,3]]$ and build up $n$ layers of $[[6,0,4]]$ tensors from there. 
We shall call this the \emph{single qubit hexagon code} and its analysis is essentially identical to that of a holographic hexagon state.

For this code we obtain the following conclusion.
\begin{theorem} Consider a single qubit hexagon code with $n$ layers and an i.i.d. erasure probability for physical qubit given by $\epsilon \leq \epsilon^\star$.
Then it is possible to recover the central logical qubit with probability $p$ greater than 
\begin{align}
p \geq 1 - \epsilon^\star \left(\frac{\epsilon}{\epsilon^\star}\right)^{\lambda^n}.
\end{align}
Here, $\epsilon^\star = 1/12$ and $\lambda = \frac{1+\sqrt{5}}{2}$.
\end{theorem}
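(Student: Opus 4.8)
\emph{Proof strategy.} The plan is to analyze a deliberately suboptimal \emph{hierarchical recovery} procedure that peels the network inward one shell at a time, and to bound the effective erasure probability of the shell currently at the frontier by a recursion. First I would fix notation: label shells of the single-qubit hexagon code by graph distance to the central pentagon, so the boundary is the outermost shell and the pentagon sits at distance $0$, and let step $j$ of the recovery ($j=0,\dots,n$) denote the situation after $j$ shells have been processed, beginning at the boundary ($j=0$) and ending at the pentagon ($j=n$). Two structural facts about the $\{6,4\}$ tiling are needed, both following from the negative-curvature bookkeeping of Appendix~\ref{App:CountingTensors}: (i) each hexagon shares one or two of its legs with the next shell inward, hence at least four legs with the next shell outward; and (ii) each hexagon feeds into at most two hexagons of the next shell inward. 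By (i) together with the perfect-tensor property (any three of six legs determine the other three), a frontier hexagon at step $j+1$ can be reconstructed --- its operators on the inward legs can be pushed outward onto already-reconstructed legs --- whenever at least three of its (at least four) outward neighbours were reconstructed at step $j$; equivalently, it is \emph{effectively erased} at step $j+1$ only if two or more of those neighbours were effectively erased at step $j$. The central pentagon qubit is recovered unless the step-$n$ frontier tensor is effectively erased. Since this procedure uses only perfect-tensor isometries it is weaker than the greedy algorithm, so any lower bound on its success probability is legitimate.

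The core of the argument is the recursion. Let $p_j$ be the probability that a fixed frontier tensor is effectively erased at step $j$ (so $p_0=\epsilon$). The subtlety flagged in the main text is that, by (ii), a single tensor at shell $j$ feeds two tensors at shell $j+1$, so effective-erasure events at shell $j+1$ are correlated. The hyperbolic geometry, however, forces two non-adjacent shell-$j$ tensors to have disjoint erasure histories and an adjacent pair to share exactly one common ancestor one shell out, so correlations never propagate beyond nearest neighbours. One therefore tracks, in addition to $p_j$, the probability $r_j$ that a fixed \emph{adjacent pair} of frontier tensors at step $j$ are \emph{both} effectively erased; because two outward neighbours of a single frontier tensor are themselves adjacent, the ``$\ge 2$ of $\ge 4$'' event is driven by $r_j$ rather than by $p_j^2$. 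The result is a closed two-variable recursion coupling $(p_{j+1},r_{j+1})$ to $(p_j,r_j)$ (and through $r$ to the previous shell), with explicit binomial constants.

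Taking $-\log$ linearizes this system, and the resulting recursion for $(-\log p_j,-\log r_j)$ is of Fibonacci type, with characteristic equation $x^2 = x+1$; hence $-\log p_j$ grows like $\lambda^{j}$ with $\lambda=\tfrac{1+\sqrt5}{2}$. Carrying the multiplicative constants through and choosing the base case so that the $\binom{4}{2}$-type factors are absorbed yields
\be
p_j \;\le\; \epsilon^\star\!\left(\frac{\epsilon}{\epsilon^\star}\right)^{\!\lambda^{\,j}},\qquad \epsilon^\star=\tfrac{1}{12},
\ee
valid for $\epsilon\le\epsilon^\star$. Specializing to $j=n$ and recalling that the central qubit fails to be recovered only when the step-$n$ frontier tensor is effectively erased gives the stated bound $p\ge 1-\epsilon^\star(\epsilon/\epsilon^\star)^{\lambda^{n}}$.

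I expect the main obstacle to be exactly the correlation bookkeeping: proving rigorously that the induced erasure events propagate inward with only nearest-neighbour correlations --- so that the two-variable recursion genuinely closes --- and then extracting the sharp constant $\epsilon^\star=1/12$ (and the precise exponent $\lambda$) by working with the full polynomial recursion rather than only its leading term; a cruder treatment yields a smaller threshold and a different growth exponent. The remaining ingredients, namely the perfect-tensor reconstruction step and the per-shell leg counts, are routine given the structure already developed in the body and in Appendix~\ref{App:CountingTensors}.
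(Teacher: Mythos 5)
Your proposal follows essentially the same route as the paper's proof in Appendix~\ref{App:ErasureThreshold}: a hierarchical (shell-by-shell) recovery weaker than the greedy algorithm, the ``erased iff $\ge 2$ of $\ge 4$ outward neighbours erased'' rule from the perfect-tensor property, a two-variable recursion tracking single-tensor and adjacent-pair erasure probabilities (the paper's $s_j$ and $d_j$) with nearest-neighbour-only correlations enforced by the contracting hyperbolic geometry, the Fibonacci-type growth of the $\epsilon$-exponent giving $\lambda=\frac{1+\sqrt5}{2}$, and the fixed point of the polynomial recursion giving $\epsilon^\star=1/12$. The strategy is correct and matches the paper's argument in all essential respects.
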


This is the same functional form associated to concatenated error correcting codes.
Namely, the loss probability for the logical data decays doubly exponentially with the ``depth'' $n$ of the code or exponentially with the number of physical spins.
We would expect to get a result of the same form for any such code with an erasure threshold.
The only expected difference being the value of the threshold $\epsilon^\star$ and the scaling dimension $\lambda$.
We will now prove the theorem as an illustration to obtaining these values.
Since we will use a simplified hierarchical recovery procedure, the proof technique will be essentially equivalent to that of concatenated codes.
An interesting open problem is to provide an analytic threshold analysis fully respecting the greedy algorithm which corresponds to the problem of bootstrap percolation \cite{Adler1991}.
We have numerically analyzed this problem or the qubit hexagon code and found $\epsilon^\star \approx 0.48(2)$ whereas independent analytic arguments particular to this model predict an erasure threshold of $\epsilon^\star =1/2$ for an optimal recovery protocol.

\begin{proof}
We will consider a hierarchical recovery model which is even simpler and can not perform better than the greedy algorithm.
Namely, we may provide a coupling \cite{Levin} between the probability distributions over recovered tensors such that the set of recovered tensors by the greedy algorithm always includes the set of  tensors recovered by the hierarchical recovery.
The reason for this is that the hierarchical algorithm can be interpreted as $n$ iterations of the greedy algorithm where tensors a distance $j$ from the boundary may be incorporated only during iteration $j$.
The difference with the greedy algorithm, is that once the greedy algorithm recovers a tensor at distance $j+1$ from the boundary, it allows itself to reinspect its neighboring tensors at distance $j$ and incorporate those.
This may in general lead to highly non-trivial sequences for incorporating tensors in the bulk.
The sequential nature of the hierarchical recovery model allows establishing a clear dependence between the tensors.

In the hierarchical recovery model, each level consists of a ring of tensors, which are only connected with the next level and the previous.
In order to adequately model the errors at each level, we will need to inductively provide bounds for different error configurations.
Assuming we are dealing with the hexagon lattice with four hexagons adjacent per vertex, it will be sufficient to deal with two types of bounds, one for single errors (single missing tensor) and the second for pairs of neighboring missing tensors.
We will call these bounds $s$ and $d$ for single and double and we will use a subindex $j$ to label the layer to which these bounds apply.

Initially, we have $s_0 = \epsilon$ and $d_0 = \epsilon^2$ which corresponds to assuming an i.i.d. erasure model with each physical index being erased with probability $\epsilon$.
The core of the proof is simply to recursively bound $s_{j+1}$ and $d_{j+1}$ in terms of $s_{j}$ and $d_{j}$.

A non-trivial observation, is that we do not need to consider erasure correlations beyond nearest neighbors.
Due to the hierarchical structure which is contracting, correlated erasures beyond nearest neighbors of a chain can not exist (see figure \ref{fig:BoundedRangeCorrelations}).
This is an artifact of having chosen a privileged ``re-normalization'' direction and is an effect analogous to having all  scaling operators be three body in MERA.
In fact, long range correlations between tensors do arise in the recovery model dictated by the greedy algorithm.
\begin{figure}
\centering
\includegraphics[width=0.80\linewidth]{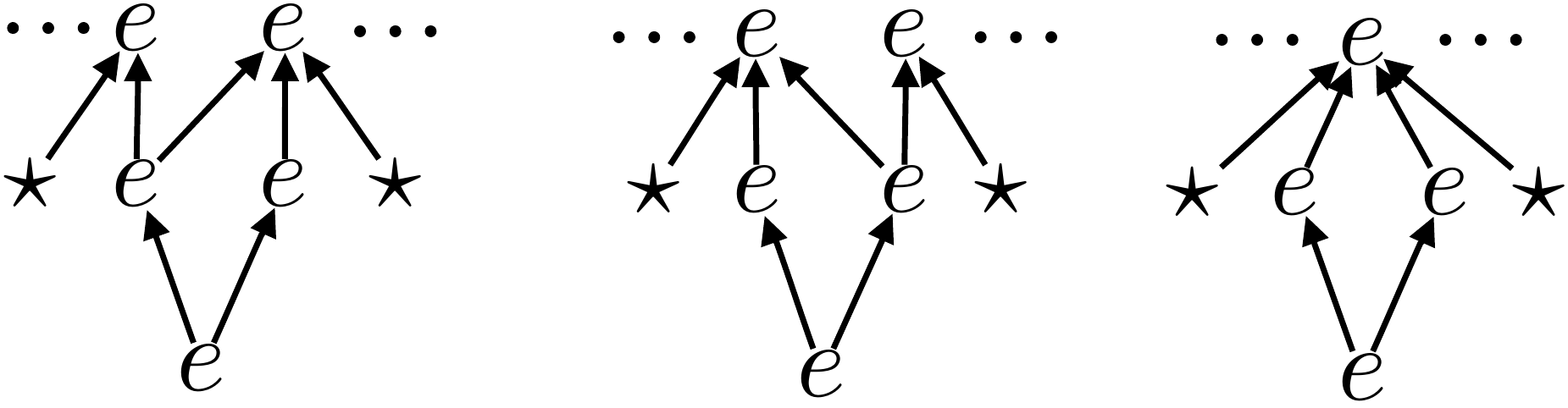}
\caption{We illustrate how correlated erasures can not grow beyond nearest neighbor in the hierarchical recovery model.
} 
\label{fig:BoundedRangeCorrelations}
\end{figure}
Here $e$ indicates where a reconstruction index is missing or erased whereas $\star$ indicates that the index could be missing or available.

Considering all error configuration at layer $j$ which could lead to errors at layer $j+1$,
we may bound
\begin{align}
   s_{j+1} &\leq 3 d_j + 3 s_j^2 \label{eq:singleBound}\\
   d_{j+1} &\leq 3s_j (3 d_n + 3 s_j^2). \label{eq:doubleBound}
\end{align}
Here, we have aimed for simplicity instead of tightness of the bound.
Let us give a brief explanation for the RHS of equations \ref{eq:singleBound} and \ref{eq:doubleBound}.

Since each hexagon has at least 4 legs connected to lower layers, two of its neighboring tensors (within four legs) need to be missing such that tensor fails to be recovered from the lower layer.
For this to happen, there must either be two neighboring indices missing from the lower chain or two non-neighboring indices missing.
There are three ways for this to happen illustrated by the following minimal error strings
\begin{align}
   \{  ee\star\star,  \star ee\star, \star \star ee, \\
   e\star e \star,  e \star \star e, \star e\star e \}.
\end{align}
These scenarios cover all possible situations leading to the failure of hierarchical reconstruction (some of the, more unlikely ones such as $eeee$ are being covered multiple times).

Similarly, we may account for all possible scenarios that lead to a double erasure $ee$ at level $j+1$
Two consecutive tensors at chain $j+1$ always share exactly one descendant which may provide a source of correlated errors.
Furthermore, at least one of the two tensors at layer $j+1$ will have a total of five descendants.
As an overestimate of $d_j$ we may disregard the value of the joint descendant.
Regardless, we know that there should be two erasures in the remaining four descendants of the tensor with five descendants and at least one erasure among the outermost three descendants of the other tensor (see figure \ref{fig:CorrelatedErrors}).
\begin{figure}
\centering
\includegraphics[width=0.40\linewidth]{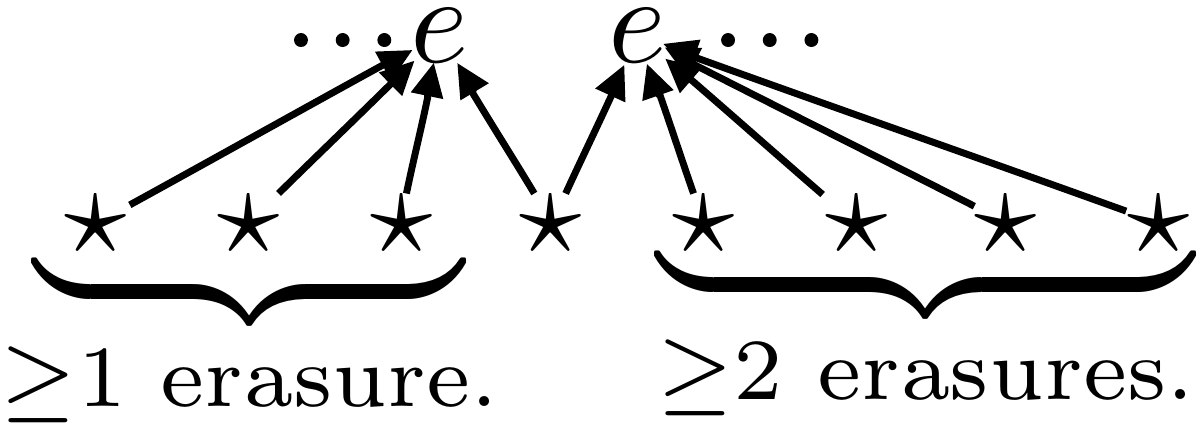}
\caption{We illustrate the prerequisite for two erasures to be propagated one layer higher.
The most fragile case corresponds to one of the sites having only four descendants since at least one of them needs to have five.
Irrespective of the availability  of a shared descendant there is a minimum number of erasures that need to occur in order to propagate two contiguous erasures.
} 
\label{fig:CorrelatedErrors}
\end{figure}

Assuming $s_j \geq d_j \geq s_j^2$ which must hold for such a model, we may extract the dominant (smallest) exponent for $\epsilon$ associated to the  $s_j$ and $d_j$ bounds given by the recursion relation
\begin{align}
\begin{pmatrix} \deg(s_{n+1},\epsilon) \\ \deg(d_{n+1},\epsilon) \end{pmatrix} = 
\begin{pmatrix} 0 & 1 \\ 1 & 1 \end{pmatrix} \begin{pmatrix} \deg(s_{n},\epsilon) \\ \deg(d_{n},\epsilon) \end{pmatrix}
\end{align}
Here, by $\deg(p(x),x)$ denotes the lowest  degree exponent of $x$ present in the polynomial $p(x)$.
The recursion matrix has the eigenvalues $\phi_{\pm} = \frac{1\pm\sqrt{5}}{2}$ and the $\phi_+$ eigenvector being $(1,\phi_+)$.
 This means that the $\epsilon$ exponent for $s_n$ increases exponentially as $\phi_+^n$.
 
We may now calculate the fix-point solution of inequalities \ref{eq:singleBound} and \ref{eq:doubleBound}, taking them as equations and find $(s^\star, d^\star)=(1/12,1/48)$.
Assume that $s_0 \leq r s^\star$  for some $r\leq 1$ and consequently, $d_0 = s_0^2 \leq r^{\phi_+} d^\star$.
We may prove inductively that  $s_j \leq r^{\phi_+^j}s^\star$ and $d_j \leq r^{\phi_+^{j+1}}d^\star$.
In order to do so, one need only verify
\begin{align}
       s_{j+1} &\leq 3 d^\star r^{\phi^{j+1}} + 3 s^\star r^{2\phi^j} &\leq r^{\phi_+^{j+1}} s^\star\\
      d_{j+1} &\leq  3(s^\star r^{\phi^{j}}) 3 ( d^\star r^{\phi^{j+1}} + 3 s^\star r^{2\phi^j} ) &\leq r^{\phi_+^{j+2}} d^\star.
\end{align}
Where we may divide by an appropriate power of $r$ such as $r^{\phi_+^{j+1}}$ and obtain
\begin{align}
 3 d^\star + 3 s^\star r^{\phi^{n-2}} &\leq s^\star\\
 3(s^\star ) 3 ( d^\star + 3 s^\star r^{\frac{3-\sqrt{5}}{2}\phi^n} ) &\leq d^\star.
\end{align}
This allows us to reach the conclusion of the theorem with $\epsilon^\star = s^\star$ and $\lambda = \phi^+$.
\end{proof}

A bounding procedure similar to this one may also be applied to the code involving alternating layers of pentagons and hexagons as for many other holographic codes with certain regularity structure.
We have only restricted to consider the hexagon lattice to exemplify the kind of reasoning involved.
In the case of the holographic pentagon code, there is no way to make such  an argument work and there is a good reason for this.
Namely there are constant weight $4$ logical operators affecting the central qubit in the holographic pentagon code.
The way this becomes manifest when attempting a similar proof approach is by obtaining a scaling dimension $\lambda=1$.

\subsection{Numerical evaluation}

We may numerically evaluate the probability for the greedy algorithm to absorb the central tensor given a boundary region constructed by erasing a random set of physical indices according to an i.i.d. distribution.
This gives us a conservative estimate of how well protected the central qubit is from i.i.d. erasure since an optimal recovery method can only do better.
We perform such estimates for different values of the lattice radius in order to identify the value $p_c$ associated to a correctability phase transition.
For the regular pentagon lattice tensor network we find that there is no indication of the central qubit being well protected.

Given that the pentagon code does not have a threshold, we introduce the pentagon/hexagon code as a similar example that does have an erasure threshold in this context.
It is the regular lattice composed of pentagons and hexagons, with two pentagons and two hexagons adjacent at each vertex.
Such a lattice might employ $[[6,0,4]]_2$ and $[[5,1,3]]_2$ tensors.
Intuitively, we expect that by diluting the number of logical legs, we may obtain better protection for the encoded logical qubits.
Indeed, for such a lattice, we find that there is a threshold value $p^*$ such that if the i.i.d. probability of boundary erasure is smaller than $p^*$, then the central logical qubit may be reconstructed with a probability which approaches one as the cutoff radius of the lattice is increased.
Such a statement can be proven using techniques very similar to the threshold proof of section \ref{sec:AnalitycBounds}.
We have numerically tested the performance of the greedy recovery algorithm for recovering the central qubit in three possible lattices (see figure \ref{fig:Numerics}) supporting our claim that a lower density of logical legs leads to a higher tolerable erasure threshold.

  \begin{figure}\label{fig:Numerics}
  \centering
    \subfloat[Holographic pentagon code]{ 
  \includegraphics[width=0.31\textwidth]{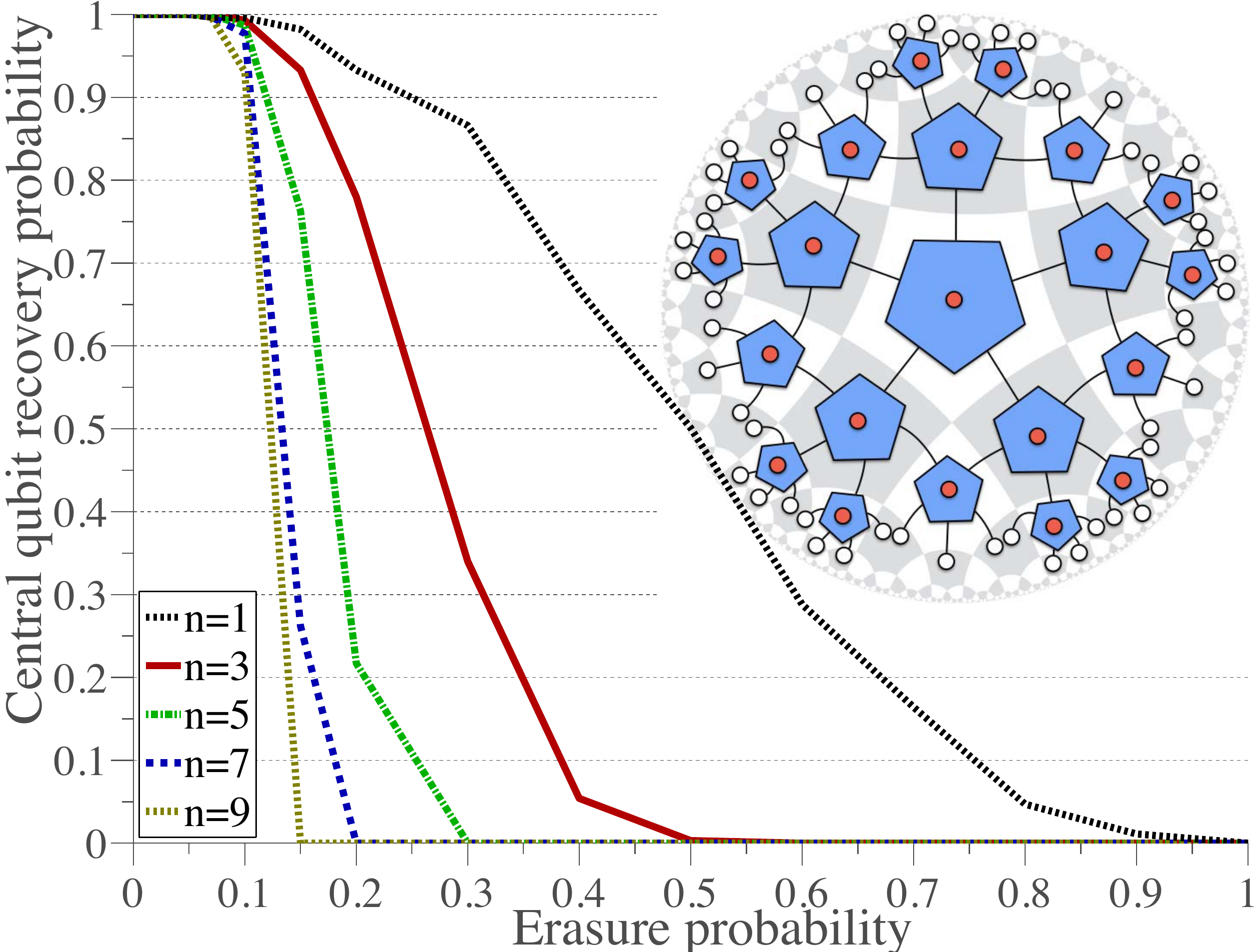} \label{fig:PentagonCodeNumerics}}\
   \subfloat[Pentagon/hexagon code]{
    \includegraphics[width=0.31\textwidth]{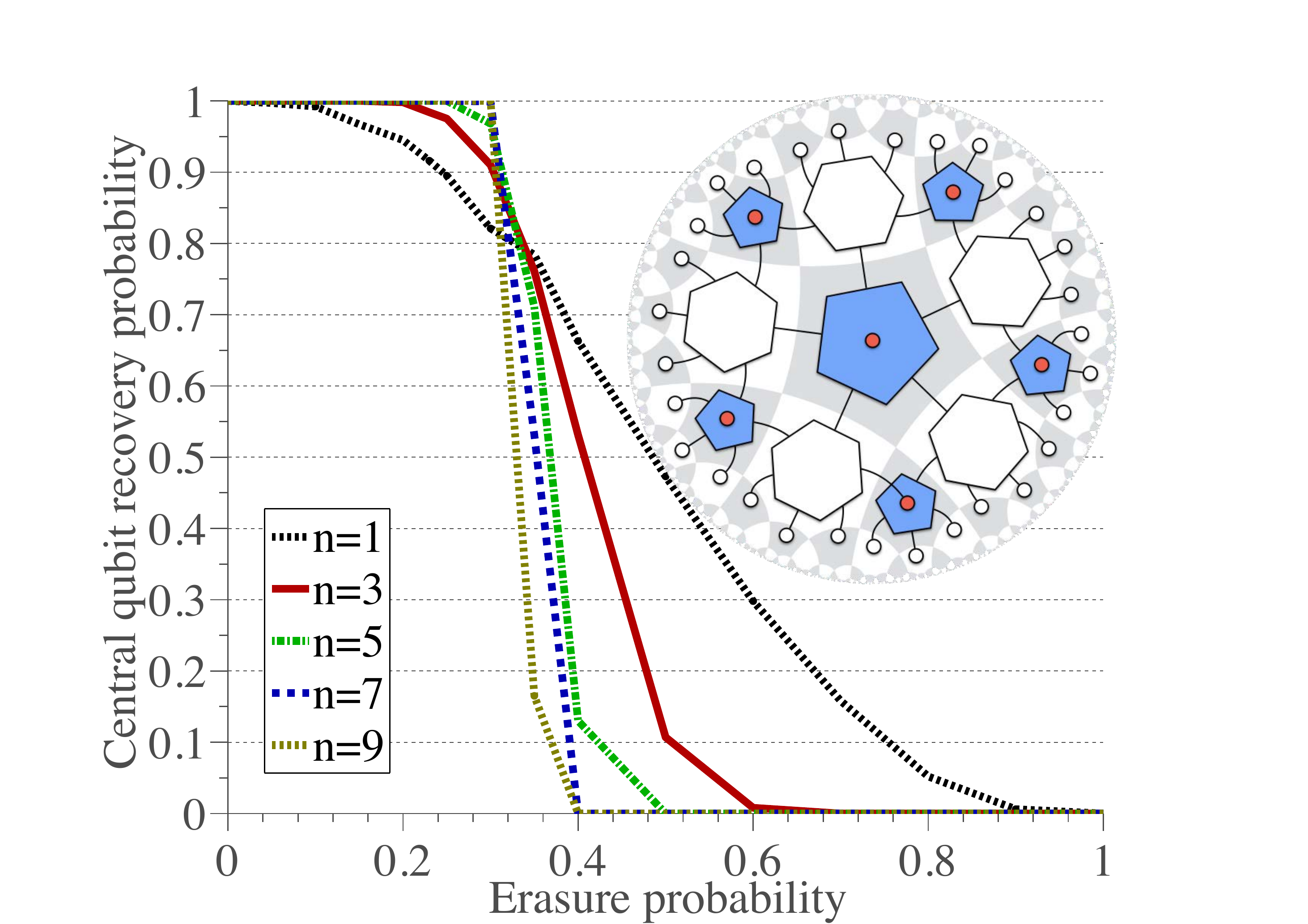}
     \label{fig:PentagonHexagonNumerics}}\
        \subfloat[Single qubit  hexagon code]{
         \includegraphics[width=0.31\textwidth]{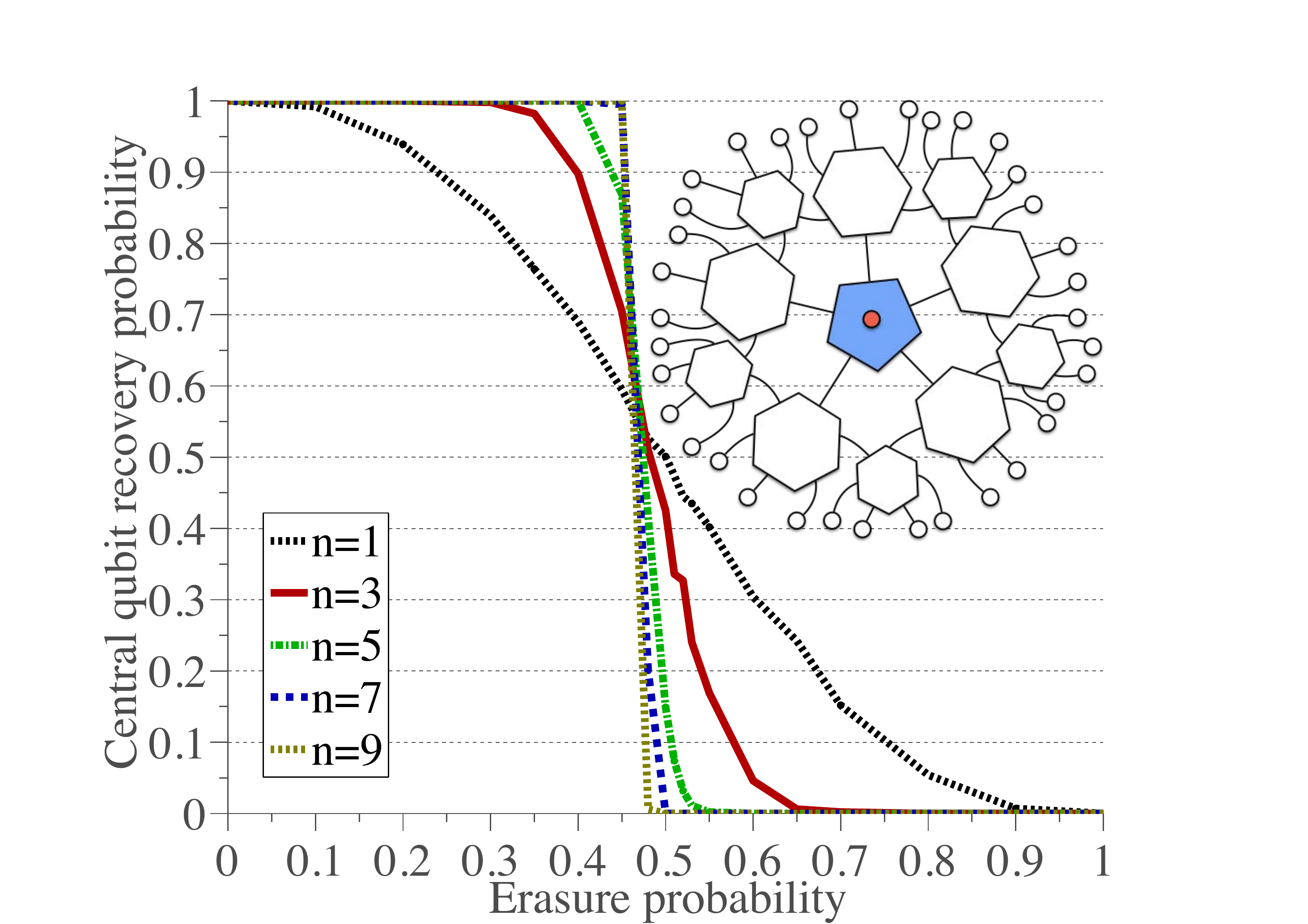}
    \label{fig:1qubitHexagonNumerics}}   
\caption{Here we present Monte Carlo simulation for the probability of the central tensor being incorporated by the greedy algorithm applied to a holographic code.
The greedy algorithm is applied to a region $A$ constructed by randomly taking each boundary physical index to belong to $A$ with probability $p$.
We plot the central tensor containment probability in the greedy wedge according to the lattice radius (i.e. the distance from the central tensor at which the a priory infinite network is truncated). 
(a) We consider the holographic pentagon code of figure \ref{fig:HolographicPentagonCode}.
Numerical results remains consistent with the existence of five possible weight $4$ representations of the string-like logical operators acting on the central qubit.
(b) We focus on the pentagon/hexagon code of figure \ref{fig:PentagonHexagonCode} which has an erasure threshold in terms of the recoverabilty.
We observe some oscillatory behavior due to the fact that tensor `layers' added alternate between pentagons and hexagons.
(c) We present numerical data for the greedy algorithm applied to the 1 qubit hexagon code  of figure \ref{fig:1qubitHexagon} which corresponds to a tensor network identical to the holographic hexagon state except for having a single pentagon at its center.
}
\label{fig:Numerics}
  \end{figure}

\section{Reconstructing beyond the greedy algorithm}\label{App:BeyondGreedy}

The greedy algorithm provides an explicit prescription for representing bulk logical operators on a specified region of the boundary.
A key virtue is that the region obtained does not depend in any way on the specific perfect tensors used to construct the network.
In this sense, it is analogous to the AdS/Rindler reconstruction, which is explicit and applicable to a large family of models satisfying a holographic correspondence.

\subsection{Reconstruction from symmetry guarantees}

The greedy entanglement wedge falls short of the expectations for a geometric entanglement wedge in certain ways.
For instance, in the scenario where the minimal surface separating $A$ from $A^c$ is well defined, we expect $\cE[A] \cup \cE[A^c]$ to contain the full lattice.
Our first example of reconstruction beyond the greedy wedge involves a family of holographic stabilizer codes with a single logical qudit.
In this case, perfect reconstruction on either $A$ or its complement $A^c$ can be guaranteed by exploiting a symmetry.

Particularly, the three qutrit stabilizer code of section \ref{sec:qutritcode} is of CSS \cite{Nielsen_Chuang} type, a property  which we can show is inherited by any derived holographic code by following the arguments of section \ref{subsec:stabilizer}.
Furthermore, the qutrit Hadamard operator $H$ is a symmetry of the qutrit code meaning that applying $H$ to all tensor indices preserves the tensor.
The Hadamard operator is symmetric, unitary but generally not Hermitian
 \footnote{
For general qudits, the Hadamard gate is given by $H=\frac{1}{\sqrt{d}}\sum_{i,j} \omega^{ij}| i\rangle\langle j |$, where $\omega=e^{2\pi i/d}$.}.
 and is specified by its action on the generators $X$ and $Z$
\begin{align}
H X H^\dagger = Z^\dagger \qquad H Z H^\dagger = X.
\end{align}

The local symmetry of the tensors gives rise to a global symmetry on the full tensor network\footnote{
Here we limit ourselves to provide the simplest example which conveys the general spirit of deriving global symmetries from local tensor symmetries \cite{Schuch2010}.
The state of the art for this line of reasoning in tensor networks can be found in \cite{Sahinoglu2014, Buerschaper2014}.}.
To see this, we multiply each tensor leg either by $H$ or its inverse $H^\dagger=H^*$.
Since each individual tensor is invariant under such an action, the full tensor network should be invariant.
Furthermore, if we assume that the tensor network graph  is bipartite we may  alternate multiplication by $H$ and $H^\dagger$ such that these may locally cancel on all contracted indices, as depicted in figure \ref{fig:Local2GlobalSymmetry}.
We are then left with a symmetry acting exclusively on the free bulk and boundary legs of the tensor network.
This symmetry guarantees a form of duality between $X$-type logical operator and $Z$-type logical operators where dual operators have exactly the same support. 

\begin{figure}
  \centering
    \subfloat[From local to global symmetry]{ 
    \includegraphics[width=0.3\textwidth]{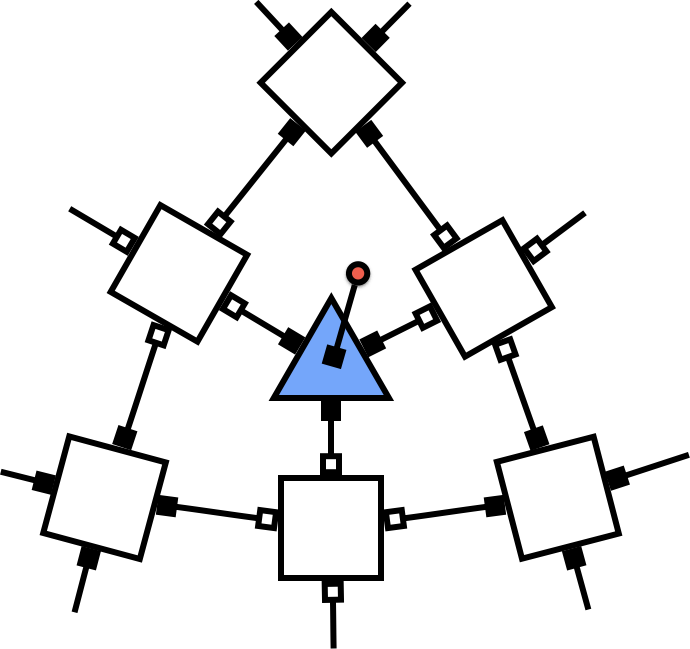} \label{fig:Local2GlobalSymmetry}}\
   \subfloat[Empty greedy wedge]{
    \includegraphics[width=0.3\textwidth]{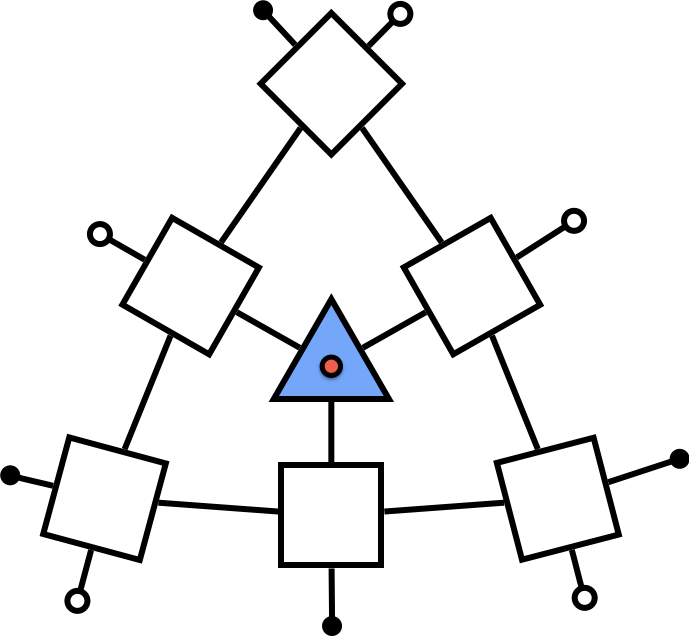}
     \label{fig:GreedyStuck}}
\caption{In (a) we represent a bipartite network composed of tensors with a symmetry $H$ and $H^*$.
We alternate applying $H$ (solid squares) and $H^*$ (hollow squares) to all legs of tensors in the bipartite tensor network.
In (b) we represent the same tensor network where a portion $A$ of the boundary was marked with hollow dots and its complement $A^c$ was marked with full dots.
The regions $A$ and $A^c$ have been chosen such that the greedy algorithm does not progress on either.
The greedy algorithm will recover the full network when initiated from the full boundary showing that the network indeed corresponds to a holographic code.
}
\label{fig:BeyondGreedy}
  \end{figure}

Theorem 1 in of Ref. \cite{Beni10} precisely relates the number of independent logical operators supported on complementary subsets of qudits.
Their result applies to general subsystem codes and includes a sharper claim for CSS codes~\cite{Haah2012}. 
In particular, for any subset $A$ of qudits, one may define $\ell(A)$ to be the number of independent Pauli logical operators supported exclusively on $A$.
\begin{lemma} Given a stabilizer code with $k$ logical qubits, $\ell(A) + \ell(A^c) = 2k$. 
Furthermore if the code is of CSS type, we have $\ell^Z(A) + \ell^X(A^c) = k = \ell^Z(A^c)+\ell^X(A)$, where $\ellˆX$ and $\ell^Z$ denote the number of $X$-type and $Z$-type generators respectively. 
\end{lemma}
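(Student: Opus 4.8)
The plan is to recast the statement in the standard symplectic–linear-algebra language of stabilizer codes and reduce both identities to a single dimension count. I would represent Pauli operators on $n$ qubits, modulo phases, by vectors in $\mathbb{F}_2^{2n}=\mathbb{F}_2^n\oplus\mathbb{F}_2^n$ (the $X$- and $Z$-parts), with commutation encoded by the symplectic form $\omega\big((x|z),(x'|z')\big)=x\cdot z'+x'\cdot z$. The stabilizer group $\mathcal{S}$ becomes an isotropic subspace $S\subseteq\mathbb{F}_2^{2n}$ with $\dim S=n-k$, its centralizer becomes $S^\perp$ with $\dim S^\perp=n+k$ and $S\subseteq S^\perp$, and the logical Pauli operators modulo stabilizer form $L:=S^\perp/S$ with $\dim L=2k$. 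For a subset $A$ of qubits let $V_A\subseteq\mathbb{F}_2^{2n}$ be the space of vectors with support contained in $A$, so $\dim V_A=2|A|$; the crucial geometric fact is that disjointly supported operators commute, which together with a dimension count gives $V_A^\perp=V_{A^c}$. I would then define $\ell(A)$ precisely as the dimension of the subspace of $L$ consisting of \emph{logical classes} admitting a representative supported on $A$, namely $\bar L_A:=\big((V_A\cap S^\perp)+S\big)/S$; since $S\subseteq S^\perp$ this yields the clean formula $\ell(A)=\dim(V_A\cap S^\perp)-\dim(V_A\cap S)$.

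The first identity then follows by symplectic duality. Using that $(\cdot)^\perp$ has dimension $2n-\dim(\cdot)$ and exchanges sums with intersections, $\dim(V_A\cap S^\perp)=2n-\dim(V_{A^c}+S)=2|A|-(n-k)+\dim(V_{A^c}\cap S)$, so $\ell(A)=2|A|-(n-k)+\dim(V_{A^c}\cap S)-\dim(V_A\cap S)$. Writing the same expression with $A$ and $A^c$ interchanged and adding, the two $\dim(\,\cdot\,\cap S)$ terms cancel and $2|A|+2|A^c|=2n$, giving $\ell(A)+\ell(A^c)=2n-2(n-k)=2k$.

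For the CSS refinement I would exploit that the whole picture respects the $X$/$Z$ decomposition. A CSS stabilizer is $S=S^X\oplus S^Z$ with $S^X$ purely $X$-type and $S^Z$ purely $Z$-type; since $\omega$ pairs the $X$-sector against the $Z$-sector and vanishes on each sector separately, $S^\perp$ also splits, its $x$-part being $C_Z^\perp$ and its $z$-part being $C_X^\perp$ (here $C_X,C_Z\subseteq\mathbb{F}_2^n$ are the classical codes of the $X$- and $Z$-stabilizers and $\perp$ is the ordinary dot-product dual, with $C_X\subseteq C_Z^\perp$). Likewise $V_A=W_A\oplus W_A$ with $W_A\subseteq\mathbb{F}_2^n$ the classical ``supported on $A$'' code. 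Every intersection and sum above then splits, so $\ell(A)=\ell^X(A)+\ell^Z(A)$. Running the same duality computation inside the $X$-sector, where $W_A^\perp=W_{A^c}$, gives $\ell^X(A)=|A|-\dim C_Z+\dim(W_{A^c}\cap C_Z)-\dim(W_A\cap C_X)$, and adding the analogous expression for $\ell^Z(A^c)$ the cross terms cancel while $|A|+|A^c|=n$ and $\dim C_X+\dim C_Z=n-k$, yielding $\ell^X(A)+\ell^Z(A^c)=k$; interchanging $A\leftrightarrow A^c$ gives $\ell^Z(A)+\ell^X(A^c)=k$ as well.

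The argument is essentially elementary once the symplectic set-up is in place, so there is no deep obstacle; the one point that must be handled with care is the bookkeeping in the definition of $\ell(A)$ — one must count logical cosets rather than operators, and must not double-count stabilizer elements that happen to be supported on $A$, which is exactly what the identity $\ell(A)=\dim(V_A\cap S^\perp)-\dim(V_A\cap S)$ arranges. A secondary item is to verify cleanly that $V_A^\perp=V_{A^c}$ and, in the CSS case, that $S$, $S^\perp$ and $V_A$ all respect the $X$/$Z$ splitting; each of these is just the observation that disjoint supports — or an $X$-vector against an $X$-vector — pair trivially under $\omega$, combined with a dimension count. One should also recall the standard fact that for an abelian $\mathcal{S}$ not containing $-I$ the normalizer and centralizer in the Pauli group coincide, so that ``logical operator supported on $A$'' really does correspond to an element of $(V_A\cap S^\perp)\setminus S$.
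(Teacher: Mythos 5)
Your proof is correct. Note, however, that the paper does not actually prove this lemma: it is quoted as a known result, with the first identity attributed to Theorem 1 of Yoshida--Chuang \cite{Beni10} and the CSS refinement to Haah--Preskill \cite{Haah2012}, so there is no in-paper argument to compare against. Your symplectic $\mathbb{F}_2^{2n}$ dimension count is essentially the standard proof found in those references: the identities $V_A^\perp=V_{A^c}$, $\dim(V_A\cap S^\perp)=2n-\dim(V_{A^c}+S)$, and the cancellation of the $\dim(V_A\cap S)$ and $\dim(V_{A^c}\cap S)$ terms upon adding the expressions for $A$ and $A^c$ are exactly the right mechanism, and your CSS splitting $S^\perp=C_Z^\perp\oplus C_X^\perp$ with the sector-by-sector duality $W_A^\perp=W_{A^c}$ correctly yields $\ell^X(A)+\ell^Z(A^c)=k$. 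You also identify the one genuinely delicate point: $\ell(A)$ must count logical \emph{classes}, i.e.\ $\dim\bigl(((V_A\cap S^\perp)+S)/S\bigr)=\dim(V_A\cap S^\perp)-\dim(V_A\cap S)$, rather than raw centralizer elements supported on $A$ --- with the latter convention the sum would exceed $2k$ by $\dim(V_A\cap S)+\dim(V_{A^c}\cap S)$ and the lemma would be false. Your bookkeeping handles this correctly.
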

This is called the cleaning lemma for stabilizer codes and applies to prime dimension qudits\footnote{Let us clarify that the operators representable on $A$ and on $A^c$ need not be mutually independent.}.

Assume that we are dealing with a CSS code with a single logical qudit, $k=1$ and a Hadamard type symmetry which guarantees $\ell^X(A)=\ell^Z(A)$.
From this we may exclude the case $\ell(A) = \ell(A^c) = 1$ and conclude that the full logical algebra may be reconstructed either on $A$ or on $A^c$.
This conclusion is analogous to $\cE[A] \cup \cE[A^c]$ covering the full bulk, which is  expected from the usual geometric entanglement wedge.
In contrast, the greedy algorithm does not provide such a guarantee for the greedy entanglement wedge.
Figure \ref{fig:GreedyStuck} illustrates a partition of the boundary of a tensor network into two regions such that the greedy algorithm does not make progress in either region.
The same tensor network may be associated to a CSS type stabilizer code with self-duality properties where all the previously exposed arguments apply.

The same cleaning lemma may be used to guarantee that when $|A^c|=4$ qubits are deleted, at least $2k-8$ independent logical Pauli operators can be reconstructed on $A$, where $k$ is the number of logical qubits in the code\footnote{In fact, a slightly more careful analysis allows us to conclude that at least $2k-2$ independent generators may be represented on $A$.}.
In the context of the example of figure \ref{fig:benipoints}, even though the greedy entanglement wedge of $A$ lacks a large number of tensors, the number of missing generators  to reconstructed the full algebra is small.

\subsection{Approximate reconstruction for typical tensors}

Consider a connected residual region $R$ obtained after removing the greedy entanglement wedge associated to boundary region $A$ and the one associated to its complement $A^c$.
A ``typical'' residual region will be composed of randomly chosen perfect tensors without any specific symmetry imposed.
In this case, we may average the entanglement entropy associated to boundaries $\gamma^\star_A$ and $\gamma^\star_{A^c}$.
We expect that for $|\gamma^\star_A| \geq | \gamma^\star_{A^c} | + n_R$ the map from the bulk logical indices and the smaller boundary onto the larger one will generically be full rank.
Furthermore, we conjecture that for random perfect tensors, the average value of $S_{A}$ will approach $ | \gamma^\star_{A^c} | + n_R$ exponentially with $|\gamma^\star_A| - | \gamma^\star_{A^c} | - n_R$.
We expect an argument analogous to that of Ref. \cite{Hayden07} to allow us to reach such a conclusion.
In turn this would imply that the logical operators in the residual region can   be reconstructed on $\gamma^\star_A$ (and in turn on $A$) to a good approximation.


\providecommand{\href}[2]{#2}\begingroup\raggedright\endgroup

\end{document}